\numberwithin{equation}{section}
\theoremstyle{plain}
\newtheorem{theorem}{Theorem}
\newtheorem{lemma}{Lemma}
\theoremstyle{definition}
\newtheorem{definition}{Definition}
\newtheorem*{assi*}{(I) Short-range interaction}
\newtheorem*{assp*}{(P) Log-H\"older continuity condition}
\newtheorem*{dskn*}{$\dskn$}
\newtheorem*{dsknn*}{$\dsknn$}
\theoremstyle{remark}
\newcommand{\prob}[1]{\DP\left\{#1\right\}}
\newcommand{\esm}[1]{\mathbb{E}\left[\,#1\,\right]}
\newcommand{\Bone}{\mathbf{1}}
\newcommand{\BA}{\mathbf{A}}
\newcommand{\BB}{\mathbf{B}}
\newcommand{\BC}{\mathbf{C}}
\newcommand{\BG}{\mathbf{G}}
\newcommand{\BH}{\mathbf{H}}
\newcommand{\BK}{\mathbf{K}}
\newcommand{\BP}{\mathbf{P}}
\newcommand{\BU}{\mathbf{U}}
\newcommand{\BV}{\mathbf{V}}
\newcommand{\BX}{\mathbf{X}}
\newcommand{\CJ}{\mathcal{J}}
\newcommand{\DN}{\mathbb{N}}
\newcommand{\DP}{\mathbb{P}}
\newcommand{\DR}{\mathbb{R}}
\newcommand{\DZ}{\mathbb{Z}}
\newcommand{\BDelta}{\mathbf{\Delta}}
\newcommand{\BPsi}{\mathbf{\Psi}}
\newcommand{\Bx}{\mathbf{x}}
\newcommand{\By}{\mathbf{y}}
\newcommand{\Bu}{\mathbf{u}}
\newcommand{\Bv}{\mathbf{v}}
\newcommand{\FB}{\mathfrak{B}}
\newcommand{\rA}{\mathrm{A}}
\newcommand{\rB}{\mathrm{B}}
\newcommand{\rR}{\mathrm{R}}
\newcommand{\rS}{\mathrm{S}}
\newcommand{\rT}{\mathrm{T}}
\DeclareMathOperator{\card}{card}
\DeclareMathOperator{\diam}{diam}
\DeclareMathOperator{\dist}{dist}
\DeclareMathOperator{\supp}{supp}
\newcommand{\ee}{\mathrm{e}}
\newcommand{\comp}{\mathrm{c}}
\newcommand{\fui}{\mathrm{FI}}
\newcommand{\pai}{\mathrm{PI}}
\newcommand{\sep}{\mathrm{sep}}
\newcommand{\condI}{\mathbf{(I)}}
\newcommand{\condP}{\mathbf{(P)}}
\newcommand{\dsn}[2]{\mathbf{(DS.}#1#2,$\,N\mathbf{)}$}
\newcommand{\dskn}{\mathbf{(DS.}k,N\mathbf{)}}
\newcommand{\dsknn}{\mathbf{(DS.}k,n,N\mathbf{)}}
\newcommand{\dskonn}{\mathbf{(DS.}k+1,n,N\mathbf{)}}
\newcommand{\tto}[1]{\smash{\mathop{\,\,\,\, \longrightarrow \,\,\,\, }\limits_{#1}}}
\begin{document}
\title[localization at low energy for multi-particle continuous models]{Multi-particle localization at low energy for the multi-dimensional continuous Anderson model}

\author[T.~Ekanga]{Tr\'esor EKANGA$^{\ast}$}

\address{$^{\ast}$%
Institut de Math\'ematiques de Jussieu, 
Universit\'e Paris Diderot,
Batiment Sophie Germain,
13 rue Albert Einstein,
75013 Paris,
France}
\email{tresor.ekanga@imj-prg.fr}
\subjclass[2010]{Primary 47B80, 47A75. Secondary 35P10}
\keywords{multi-particle, weakly interacting systems, random operators, Anderson localization, continuum}
\date{\today}
\begin{abstract}
We study the multi-particle Anderson model in the continuum and show that under some mild assumptions on the random external potential and the inter-particle interaction, for any finite number of particles, the multi-particle lower edges of the spectrum are almost surely constant in absence of ergodicity. We stress that this result is not quite obvious and  has to be handled carefully.  In addition, we prove the spectral exponential and the strong dynamical localization of the continuous multi-particle Anderson model at low energy. The proof based on the multi-particle multi-scale analysis bounds, needs the values of the  external random potential to be independent and identically distributed (i.i.d.) whose common probability distribution is at least Log-H\"older continuous. 
\end{abstract}
\maketitle

\section{Introduction}
This paper follows our previous works \cites{E11,E13} on localization for multi-particle random lattice Schr\"odinger operators at low energy. Some other papers \cites{CS09a,CS09b,BCSS10a,AW09,AW10,FW15,KN13,KN14,C14} analyzed multi-particle models in the regime including the strong disorder or the low energy and for different types of models such as the alloy-type Anderson model or the multi-particle Anderson model in quantum graphs \cite{S14}.  

In their work \cite{KN14}, Klein and Nguyen developed the continuum multi-particle bootstrap multi-scale analysis of the Anderson model with alloy type external potential. The method of Klein and Nguyen is very closed in the spirit to that of our work \cite{E13} although this work is not mentioned at all. The results of \cite{E13} was the first rigorous mathematical proof of localization for many body interacting Hamiltonians near the bottom of the spectrum on the lattice and in the present paper, we prove the similar results in the continuum.

The work by Sabri \cite{S14}, uses a different strategy in the course of the multi-particle multi-scale analysis at low energy. The analysis is made by considering the Green's functions, i.e., the matrix elements of the local resolvent operator instead of the norm of the kernel operators as it will be developed in this paper and this obliged the author to modify the standard Combes Thomas estimate and adapt it to the matrix elements of the local resolvents. Also, our proof on the almost surely spectrum is completely different and the scale induction step in the multi-particle multi-scale analysis as well as the strategy of the proof of the localization results. Chulaevsky himself \cite{C14}, used the results of Klein and Nguyen \cite{KN14} and analyzed multi-particle random operators with alloy-type external potential with infinite range interaction at low energy.

Let us emphasize that the almost sure non-randomness of the bottom of the spectrum of the  multi-particle random Hamiltonian is the heart of the problem of localization at low energy for multi-particle systems. In this work, we propose a very clear and constructive proof of the almost surely non-randomness of the multi-particle lower spectral edges and both the exponential localization of the eigenfunctions in the max-norm and the strong dynamical localization near the bottom of the spectrum.

Our multi-particle multi-scale analysis is more closed in the spirit to its single-particle counterpart developed by Stollmann \cite{St01}, in the continuum case and by Dreifus and Klein \cite{DK89} in the lattice case. The multi-particle multi-scale analysis bounds are proved for energies in a fix interval of the form $(-\infty; E^*]$ while in \cite{S14}, the author restricted his analysis to compact intervals of the form $[nq_{-}; nq_{+}]$ depending on the number of particles $n\geq1$ and this results in some complications in the scaling analysis dealing with perturbed energies $E-\lambda_i$ for energies $E$ belonging  to $[nq_{-};nq_{+}]$ because the perturbed energy $E-\lambda_i$ might be out of this interval. In that case, the author applied again some Combes Thomas bound. We encountered this problem by proving first that all the lower spectral edges are equal and second that the initial length scale estimates are valid in some unbounded from below intervals and the problem  is resolved using the non-negativity hypothesis on the external random potential and the interaction potential.

Let us now discuss on the structure of the paper. In the next Section, we set up the model, give the assumptions and formulate the main results.  In Section 3, We give the two important results for our multi-scale analysis scheme. Namely, the Wegner and the Combes Thomas estimates. The one important to bound in probability the resonances and the other useful in the study of initial length scales estimates of the multi-scale analysis at low energy. Section 4 is devoted to the multi-particle multi-scale induction step using the assumption on the interaction potential. Finally, in Section 5, we prove the main results.

\section{The model, hypotheses and the main results}

\subsection{The model}
We fix at the very beginning the number of particles $N\geq 2$. We are concern with multi-particle random Schr\"odinger operators of the following forms:
\[
\BH^{(N)}(\omega):=-\BDelta + \BU+\BV,
\]
acting in $L^{2}((\DR^{d})^N)$. Sometimes, we will use the identification $(\DR^{d})^N\cong \DR^{Nd}$. Above, $\BDelta$ is the Laplacian on $\DR^{Nd}$, $\BU$ represents the inter-particle interaction which acts as multiplication operator in $L^{2}(\DR^{Nd})$. Additional information on $\BU$ is given in the assumptions. $\BV$ is the multi-particle random external potential also acting as multiplication operator on $L^{2}(\DR^{Nd})$. For $\Bx=(x_1,\ldots,x_N)\in(\DR^{d})^N$, $\BV(\Bx)=V(x_1)+\cdots+ V(x_N)$ and $\{V(x,\omega), x\in\DR^d\}$ is a random i.i.d. stochastic process relative to the  probability space $(\Omega,\FB,\DP)$ with $\Omega=\DR^{\DZ^d}$, $\FB=\bigotimes_{x\in\DZ^d}B(\DR)$ and $\DP=\bigotimes_{x\in\DZ^d}\mu$ where $\mu$ is the common probability distribution measure of the i.i.d. variables $\{V(x,\omega)\}$.

Observe that the non-interacting Hamiltonian $\BH^{(N)}_0(\omega)$ can be written as a tensor product:
\[
\BH^{(N)}_0(\omega):=-\BDelta +\BV=\sum_{k=1}^N \Bone^{\otimes(k-1)}_{L^{2}(\DR^d)}\otimes H^{(1)}(\omega)\otimes \Bone^{\otimes(N-k)}_{L^2(\DR^d)},
\]
where, $H^{(1)}(\omega)=-\Delta + V(x,\omega)$ acting on $L^2(\DR^d)$. We will also consider random Hamiltonian $\BH^{(n)}(\omega)$, $n=1,\ldots,N$ defined similarly. Denote by $|\cdot|$ the max-norm in $\DR^{nd}$.

\subsection{Assumptions}

\begin{assi*}

Fix any $n=1,\ldots,N$. The potential of inter-particle interaction $\mathbf{U}$ is bounded, non-negative and of the form
\[
\BU(\Bx)=\sum_{1\leq i<j\leq n}\Phi(|x_i-x_j|),\quad \Bx=(x_1,\ldots,x_n),
\]
where  $\Phi:\DN:\rightarrow\DR$ is a compactly supported function such that

\begin{equation}\label{eq:finite.range.k}
\exists r_0\in\DN: \supp \Phi\subset[0,r_0].
\end{equation}

\end{assi*}

The external random potential $V\colon\DZ^d\times\Omega\to\DR$ is an i.i.d. random field
relative to  $(\Omega,\FB,\DP)$ and is defined by $V(x,\omega)=\omega_x$ for $\omega=(\omega_i)_{i\in\DZ^d}$.
The common probability distribution function, $F_V$,  of the i.i.d. random variables $V(x,\cdot)$, $x\in\DZ^d$ associated to the measure $\mu$ is defined by
\[
F_V: t \mapsto \prob{V(0,\omega)\leq t }.
\]
\begin{assp*}
The random potential field $\{V(x,\omega)\}_\{x\in \DZ^d\}$ is i.i.d. , of non-negative values and the corresponding probability distribution function $F_V$ is log-H\"older continuous:  More precisely,
\begin{align}\label{eq:assumption.3prime}
&s(F_V,\varepsilon) := \sup_{a\in\DR}(F_V(a+\varepsilon)-F_V(a))
\leq\frac{C}{|\ln\epsilon|^{2A}}\\
&\text{for some }C\in(0,\infty)\text{ and }A>\frac{3}{2}\times4^Np+9Nd.\notag
\end{align}
Note that this last condition depends on the parameter $p$ which will be introduced  in Section \ref{sec:Nparticle.scheme}.
\end{assp*}

\subsection{The results}\label{sec:main.results}

For any $n=1,\ldots,N$, we denote by $\sigma(\BH^{(n)}(\omega))$ the spectrum of $\BH^{(n)}(\omega)$ and $E^{(n)}_0(\omega)$ the infimum of $\sigma(\BH^{(n)}(\omega))$.

\begin{theorem}\label{thm:bottom.spectrum}
Let $1\leq n\leq N$. Under  assumptions  $\condI$ and $\condP$, we have with probability one:
\[
\sigma(\BH^{(n)}(\omega))=[0,+\infty).
\]
Consequently,
\[
E^{(n)}_0:= \inf\sigma(\BH^{(n)}(\omega))=0 \quad \emph{a.s.}
\]
\end{theorem}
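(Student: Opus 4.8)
\emph{Proof plan.} One inclusion is immediate and deterministic: by $\condI$ we have $\BU\geq0$, and by $\condP$ the values of $V$ are non-negative, so $\BV(\Bx)=\sum_{k=1}^{n}V(x_k)\geq0$, whence $\BH^{(n)}(\omega)\geq-\BDelta\geq0$ as quadratic forms; this gives $\sigma(\BH^{(n)}(\omega))\subseteq[0,+\infty)$ and $E^{(n)}_0(\omega)\geq0$ for \emph{every} $\omega$. The real content is the reverse inclusion $[0,+\infty)\subseteq\sigma(\BH^{(n)}(\omega))$ almost surely. Since the $N$-particle potential is not ergodic on $\DR^{nd}$, I cannot invoke the usual argument for almost-sure constancy of $\inf\sigma$; instead the plan is to build, on a single full-measure event, an explicit approximate eigenfunction of $\BH^{(n)}(\omega)$ at \emph{every} energy $E\geq0$.

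First I would establish a probabilistic/geometric lemma: almost surely, for every $M\in\DN$ and every $m\in\DN$ there exists a cube $\Lambda\subset\DR^d$ of side $M$ on which $V(\cdot,\omega)<1/m$. Here I use that $V$ is generated by the i.i.d.\ field $(\omega_j)_{j\in\DZ^d}$ and that $0$ lies in the support of the common single-site distribution, so that the probability that $V<1/m$ throughout a given lattice-aligned cube of side $M$ is a constant $p_0(m,M)>0$ independent of the cube, and these events are independent over a sequence of pairwise disjoint, sufficiently separated cubes; the second Borel--Cantelli lemma then forces infinitely many of them to occur, and a countable intersection over $m,M$ produces the desired full-measure event $\Omega_0$. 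I would also record the elementary fact that, as soon as $M\geq n(2L+r_0+2)$, one can place inside such a cube $n$ balls $B(c_1,L),\dots,B(c_n,L)$ that are pairwise at distance $>r_0$.

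Then, working on $\Omega_0$ and fixing $1\leq n\leq N$, $E\geq0$, $\varepsilon>0$, I would take $h\in C_c^\infty(\DR^d)$ with $\|h\|_{L^2}=1$ and support in the unit ball, $\xi\in\DR^d$ with $|\xi|^2=E$, and for $L\in\DN$ and centres $c_1,\dots,c_n$ to be chosen set
\[
\psi_L(\Bx)=\ee^{\,i\xi\cdot x_1}\prod_{k=1}^{n}L^{-d/2}\,h\!\Bigl(\frac{x_k-c_k}{L}\Bigr),\qquad\Bx=(x_1,\dots,x_n).
\]
This $\psi_L$ is normalised, supported in $\prod_{k=1}^{n}B(c_k,L)$, and a short computation with $-\BDelta=\sum_k(-\Delta_{x_k})$ gives $\|(-\BDelta-E)\psi_L\|\leq C(n,h)(L^{-2}+\sqrt{E}\,L^{-1})$. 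I would now choose $L$ so large that this is $<\varepsilon/2$, choose $m$ with $n/m<\varepsilon/2$, put $M:=n(2L+r_0+2)$, and take $\Lambda$ and centres $c_1,\dots,c_n\in\Lambda$ as furnished by $\Omega_0$. On $\supp\psi_L$ the particle coordinates lie in the $r_0$-separated balls $B(c_k,L)$, so $|x_i-x_j|>r_0$ for $i\neq j$ and hence $\BU=0$ there by \eqref{eq:finite.range.k}, while $\BV(\Bx)=\sum_kV(x_k)<n/m<\varepsilon/2$ there; thus $\psi_L$ lies in the domain of $\BH^{(n)}(\omega)$ and
\[
\bigl\|\bigl(\BH^{(n)}(\omega)-E\bigr)\psi_L\bigr\|\leq\bigl\|(-\BDelta-E)\psi_L\bigr\|+\bigl\|(\BU+\BV)\psi_L\bigr\|<\varepsilon.
\]
By Weyl's criterion $\dist(E,\sigma(\BH^{(n)}(\omega)))<\varepsilon$; letting $\varepsilon\downarrow0$ gives $E\in\sigma(\BH^{(n)}(\omega))$, and as $E\geq0$ and $\omega\in\Omega_0$ were arbitrary I conclude $[0,+\infty)\subseteq\sigma(\BH^{(n)}(\omega))$ on $\Omega_0$, which with the first paragraph proves the theorem.

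The hard part is the lemma in Step 1 and, more precisely, reconciling its two demands at once: one must manufacture arbitrarily large regions of $\DR^d$ on which the single-site field is uniformly small \emph{and} keep $n$ such regions far enough apart to switch off the compactly supported interaction. The non-negativity of both $\BU$ and $V$ is exactly the structural feature that makes these compatible, and it is also what allows the multi-scale analysis in the rest of the paper to be carried out on intervals unbounded below; the point worth stressing is that, because the $N$-particle potential is only $\DZ^d$-ergodic and not ergodic on the configuration space $\DR^{nd}$, one genuinely needs this hands-on Weyl-sequence construction rather than a soft ergodicity argument.
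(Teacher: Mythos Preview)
Your proof is correct and takes a genuinely different route from the paper's.

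The paper argues as follows: quote the single-particle fact $\sigma(H^{(1)}(\omega))=[0,\infty)$ a.s.\ from \cite{St01}, pick for each factor a Weyl sequence $(\phi^m_j)_m$ with compact support, \emph{translate} these supports to centres $x_j^{k_0,m}$ chosen so that the pairwise distances exceed $r_0$, and then tensor them to obtain a Weyl sequence for $\BH^{(n)}(\omega)$ on which $\BU$ vanishes. Your approach is instead a direct construction: use the second Borel--Cantelli lemma on the i.i.d.\ field to produce, on a single full-measure event, arbitrarily large single-particle cubes on which $V<1/m$; place all $n$ particles inside \emph{one} such cube at mutual distance $>r_0$; and test against an explicit plane-wave-times-cutoff function.

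What each approach buys: yours is more elementary and self-contained, since it does not invoke the single-particle spectral result as a black box, and it sidesteps a delicate point in the paper's argument---translating a Weyl sequence for $H^{(1)}(\omega)=-\Delta+V(\cdot,\omega)$ does \emph{not} in general give a Weyl sequence for the same realisation $\omega$, because $V(\cdot,\omega)$ is not translation invariant. The paper's route is more modular in spirit (reduce to the one-body case), but making the translation step rigorous requires exactly the kind of Borel--Cantelli input you supply. One small remark that applies to both arguments: the step ``$0$ lies in the support of the common single-site distribution'' (equivalently $F_V(\varepsilon)>0$ for every $\varepsilon>0$) is needed for the conclusion $\sigma=[0,\infty)$ but is not literally contained in $\condP$; it is used implicitly here and in the paper.
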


\begin{theorem}\label{thm:exp.loc}
Under the assumptions $\condI$ and $\condP$, there exists $E^*>E_0^{(N)}$ such that with $\DP$-probability one,
\begin{enumerate}[\rm(i)]
\item
the spectrum of $\BH^{(N)}(\omega)$ in $[E_0^{(N)},E^*]$ is nonempty and pure point;
\item
any eigenfunction corresponding to an eigenvalue in $[E_0^{(N)},E^*]$ is exponentially decaying at infinity in the max-norm.
\end{enumerate}
\end{theorem}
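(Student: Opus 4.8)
\emph{Proof plan.} The plan is to derive the theorem from Theorem~\ref{thm:bottom.spectrum} together with a multi-particle multi-scale analysis (MSA) carried out on the \emph{unbounded} energy interval $(-\infty,E^*]$, and then to convert the MSA estimates into spectral localization by the standard generalized eigenfunction argument. Concretely, the scheme of Sections~3--4 is meant to produce a small $E^*>0$, an initial scale $L_0\in\DN$, a mass $m>0$ and a scale sequence $L_{k+1}\approx L_k^{\alpha}$ with $1<\alpha<2$, so that the multi-scale estimates $\dskn$ hold for all $k\ge0$: for any two sufficiently distant $N$-particle boxes of side $L_k$ centred at $\Bu,\Bv$,
\[
\DP\{\exists\,E\le E^*:\ \text{both boxes are }(E,m)\text{-singular}\}\le L_k^{-p},
\]
with $p$ the large exponent appearing in assumption $\condP$. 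The non-emptiness assertion in~(i) then requires nothing further: by Theorem~\ref{thm:bottom.spectrum} one has, almost surely, $[E_0^{(N)},E^*]=[0,E^*]\subset[0,+\infty)=\sigma(\BH^{(N)}(\omega))$.

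First I would fix the parameters and establish the initial length scale estimate. Take $L_0$ large enough that the Wegner exponent available from the bound on $A$ in $\condP$ and the geometric constants of the induction are mutually consistent, and then choose $E^*$ below the bottom of the Dirichlet Laplacian on an $N$-particle box of side $L_0$, a quantity bounded below by a positive constant times $L_0^{-2}$. Since $\BU\ge0$ and $\BV\ge0$, the Dirichlet restriction of $\BH^{(N)}(\omega)$ to such a box is bounded below by that Dirichlet Laplacian for \emph{every} $\omega$, so every $E\le E^*$ lies a distance $\gtrsim L_0^{-2}$ below the box spectrum, and the Combes Thomas estimate of Section~3 yields exponential decay of the local resolvent with rate $\gtrsim L_0^{-1}$. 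Thus the initial estimate holds with probability one on the whole of $(-\infty,E^*]$, with a small but positive initial mass; this is the continuum low-energy initial estimate, and it is here that $E_0^{(N)}=0$ (so that $[E_0^{(N)},E^*]=[0,E^*]$ is the relevant window) together with the non-negativity hypotheses enters the argument.

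The heart of the proof is the scale induction, which I would run jointly in the particle number $n\le N$ and the scale $L_k$: starting from $\dsknn$ one obtains $\dskonn$, using in addition the bounds already established for all $n'<n$. A pair of $L_{k+1}$-boxes is called non-interactive when the $N$ particles split into two clusters separated by more than $r_0$; by~\eqref{eq:finite.range.k} the interaction then decouples, the box Hamiltonian is a sum of operators on fewer particles, and the pair is controlled through the inductive hypotheses. In the complementary, interactive case one discards resonant boxes by the Wegner estimate and propagates decay from scale $L_k$ by the geometric resolvent inequality. Two features of the present setting are essential here: the effective Hamiltonian of a proper $n$-particle sub-cluster acts at a shifted energy $E-\lambda$ with $\lambda\ge0$ by non-negativity, so $E-\lambda\le E\le E^*$ stays inside the MSA interval — which is why no Combes Thomas surgery on perturbed energies is needed, the interval being unbounded below; and the combinatorial volume factors produced by counting $L_k$-sub-boxes inside an $L_{k+1}$-box are absorbed precisely because $A>\frac{3}{2}\times 4^{N}p+9Nd$ in $\condP$. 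I expect this step to be the main obstacle: reconciling the two-volume estimates across different particle numbers while keeping every shifted energy inside $(-\infty,E^*]$ is the genuinely multi-particle difficulty, and it is where $\condI$ and the non-negativity are simultaneously indispensable.

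Finally, from $\dskn$ for all $k$ I would deduce~(ii) as follows. By the generalized eigenfunction expansion, for every $\omega$ the spectral measure of $\BH^{(N)}(\omega)$ is carried, up to a null set, by the energies $E$ admitting a nonzero polynomially bounded generalized eigenfunction $\psi_E$. A Borel--Cantelli argument over the scales $L_k$ — convergent because only polynomially many box pairs matter at scale $L_k$ while $L_k^{-p}$ decays fast for $p$ large — shows that, almost surely, beyond some random scale $L_{K(\omega)}$ and for every $E\le E^*$, of any admissible distant pair of $L_k$-boxes at least one is $(E,m)$-nonsingular. Iterating the geometric resolvent inequality along a telescoping sequence of concentric annuli of radii $\sim L_k$ then forces $|\psi_E(\Bx)|\le\ee^{-m'|\Bx|}$ for $|\Bx|$ large, with $m'>0$ uniform over $E\in[E_0^{(N)},E^*]$; hence $\psi_E\in L^2((\DR^d)^N)$, so $E$ is a genuine eigenvalue. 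As this holds for spectrally almost every $E$ in $[E_0^{(N)},E^*]$, the spectrum there is pure point with exponentially decaying eigenfunctions, and it is nonempty by Theorem~\ref{thm:bottom.spectrum}; this proves both~(i) and~(ii).
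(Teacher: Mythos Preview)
Your overall scheme---run the multi-particle MSA on the unbounded interval $(-\infty,E^*]$, induct jointly in $n$ and $k$ with the PI/FI dichotomy, then pass to eigenfunctions via Borel--Cantelli and the eigenfunction decay inequality---matches the paper. But your initial length scale step contains a genuine gap that brings the whole argument down.

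You propose to take $E^*$ below the bottom of the bare Dirichlet Laplacian on a box of side $L_0$, so that $\eta=\dist(E,\sigma(\BH^{(N)}_{\BC^{(N)}_{L_0}}))\gtrsim L_0^{-2}$ holds for \emph{every} $\omega$, and then invoke Combes--Thomas. With $\eta\sim L_0^{-2}$ the Combes--Thomas rate is $\sqrt{\eta}\sim L_0^{-1}$, so across the box (distance $\sim L_0$) the exponential factor is only $e^{-O(1)}$, while the prefactor $1/\eta\sim L_0^{2}$ and the sum over $\sim L_0^{2Nd}$ unit cells needed to pass to $\|\Bone^{(N,out)}\BG\Bone^{(N,int)}\|$ blow up polynomially. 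The resulting bound is of order $L_0^{2Nd+2}e^{-O(1)}$, which tends to infinity: no box is $(E,m)$-NS for any positive $m$, and even if it were, a mass $m\sim L_0^{-1}$ would not survive the induction constants in Lemmas~\ref{lem:NDRoNS} and~\ref{lem:CNR.NS} (one needs, for instance, $mL_k^{-1/8}$ to dominate $L_k^{-1}\ln L_k$, which fails already at $k=0$). A deterministic gap of size $L_0^{-2}$ is simply too small; this is exactly why Lifshitz-tail behaviour, not the bare Dirichlet bound, drives the low-energy initial estimate.

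The paper's Theorem~\ref{thm:np.initial.MSA} proceeds differently: it imports the single-particle Lifshitz-tail input (Theorem~\ref{thm:1p.loc}) and uses $\BU\ge0$ to get $\DP\{E_0^{(n)}(\omega)\le L^{-1/2}\}\le L^{-2p4^{N-n}}$ (Theorem~\ref{thm:np.bottom}). Setting $E^*=\tfrac12 L_0^{-1/2}$ then gives, on the complementary event, $\eta\ge\tfrac12 L_0^{-1/2}$, hence Combes--Thomas rate $\sim L_0^{-1/4}$ and genuine decay $e^{-cL_0^{3/4}}$ across the box; this beats all polynomial factors and produces the mass $m\sim L_0^{-1/4}$ of~\eqref{eq:m}. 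So the initial estimate is \emph{probabilistic}, not almost sure; the non-negativity of $\BV$ enters through the Lifshitz-tail bound, not merely through $\BH^{(N)}\ge -\BDelta$. Once this is corrected, the remainder of your outline---in particular the observation that $E-\lambda_i\le E^*$ keeps the sub-cluster energies inside the MSA interval, and the annular Borel--Cantelli/EDI argument for the final step---agrees with the paper's Sections~4--5 and the proof in Section~7.2.
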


\begin{theorem}\label{thm:dynamical.loc}
Assume  that the hypotheses   $\condI$ and $\condP$ are valid, there exist $E^*>E^{(N)}_0$ and $s^*(N,d)>0$ such that for any bounded $\BK\subset \DZ^{Nd}$ and any $0<s<s^*$ we have 
\begin{equation}\label{eq:low.energy.dynamical.loc}
\esm{\sup_{t>0}\Bigl\| |\BX|^{\frac{s}{2}}\ee^{-it\BH^{(N)}(\omega)}\BP_{I}(\BH^{(N)}(\omega))\Bone_{\BK}\Bigr\|_{L^2(\DR^{Nd})}}<\infty,
\end{equation}
where $(|\BX|\BPsi)(\Bx):=|\Bx|\BPsi(\Bx)$, $\BP_{I}(\BH^{(N)}(\omega))$ is the spectral projection of $\BH^{(N)}(\omega)$ onto the interval $I:=[E^{(N)}_0,E^*]$ and $\BK\subset\DR^{Nd}$ is a compact domain.
\end{theorem}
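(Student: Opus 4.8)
The plan is to derive strong dynamical localization (Theorem~\ref{thm:dynamical.loc}) from the output of the multi-particle multi-scale analysis (MSA), exactly as in the single-particle continuum theory of Stollmann \cite{St01} adapted to the $N$-particle setting of \cite{E13}. The MSA, run with the Wegner and Combes--Thomas estimates of Section~3 and the induction of Section~4, produces for a suitable $E^*>E_0^{(N)}$ and all sufficiently large scales $L_k$ a probabilistic bound of the form $\DP\{\text{both }\BC_{L_k}^{(N)}(\Bu)\text{ and }\BC_{L_k}^{(N)}(\Bv)\text{ are }(E,m)\text{-nonsingular}\}\ge 1-L_k^{-p}$, uniformly in $E\in I=[E_0^{(N)},E^*]$, for pairs of $\frac{2}{3}$-distant (or separated) cubes $\Bu,\Bv$. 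The polynomial exponent $p$ is precisely the one constrained in hypothesis $\condP$; the factor $A>\frac32 4^Np+9Nd$ is what makes the Wegner estimate strong enough along the $N$ scales of the induction. I would first state this MSA conclusion as the starting point (it is established in Sections 3--4) and then carry out the standard deduction.

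The key steps are as follows. First, reduce the dynamical bound \eqref{eq:low.energy.dynamical.loc} to a decay estimate on the averaged eigenfunction correlator: by the RAGE-type argument and the fact that $\ee^{-it\BH}\BP_I$ is bounded, it suffices to prove that for some $\mu>0$,
\[
\esm{\sum_{\Bn\in\DZ^{Nd}} \ee^{\mu|\Bn-\Bm|}\,\Bigl\| \Bone_{\Lambda(\Bn)}\,\BP_I(\BH^{(N)}(\omega))\,\Bone_{\Lambda(\Bm)}\Bigr\|^{2}_{\mathrm{HS}} } <\infty
\]
for each unit box center $\Bm$, where $\Lambda(\Bn)$ is a unit cube; a polynomial-weight version with exponent $s<s^*$ then follows immediately by absorbing $|\Bx|^{s/2}$ into $\ee^{\mu|\Bx|/2}$. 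Second, convert the MSA probability bound into a decay bound for this correlator via the Germinet--De Bi\`evre / Stollmann scheme: along a fixed scale $L_k$, cover $\DR^{Nd}$ by cubes of side $L_k$; for a pair of distant cubes $\Bu,\Bv$ lying in the ``good'' event, use the geometric resolvent inequality iteratively inside the annulus to obtain $\|\Bone_{\Bu}R_{\BH}(E+i0)\Bone_{\Bv}\|\le \ee^{-m|\Bu-\Bv|}$ on that event, and use the spectral theorem to transfer this to the eigenfunction correlator; on the complementary ``bad'' event, whose probability is $\le L_k^{-p}$, bound the correlator trivially by a constant (using that the interaction is bounded and $\BV\ge0$). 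Third, optimize over the relation between the target separation $|\Bn-\Bm|$ and the scale $L_k$ so that the product of the bad-event probability and the trivial bound is summable; this is where the largeness of $p$ (hence of $A$) is used. Fourth, handle the geometric subtlety special to multi-particle systems: for pairs of cubes that are \emph{not} separated in the sense of the MSA (partially interacting or ``close'' pairs projected onto subconfigurations), invoke $\condI$ — the interaction has finite range $r_0$ — together with the non-negativity of $\BV$ and $\BU$ to reduce, outside a bounded region, to a tensor product of lower-$N$ Hamiltonians, for which the decay is inherited by induction on the number of particles, starting from the one-particle continuum result of \cite{St01}.

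I expect the main obstacle to be precisely this last point: organizing the two-parameter induction (on the number of particles $n\le N$ and on the MSA scale) so that the geometric resolvent expansion can always be closed. In the multi-particle continuum the cube $\BC^{(n)}_L(\Bx)$ may be ``partially interactive,'' i.e. decompose as $\BC^{(n')}_L\times \BC^{(n-n')}_L$ with no interaction across the factors once $L>r_0$; one must then propagate the exponential decay either through the interactive core (controlled by Wegner) or through the non-interactive factor (controlled by the already-established localization for fewer particles), and the bookkeeping of which cubes are of which type, uniformly along a sub-sequence of scales, is delicate. A secondary technical point is that the Combes--Thomas estimate is only available on the unbounded interval $(-\infty,E^*]$ because $\BV,\BU\ge0$ pushes the spectrum of all sub-Hamiltonians to $[0,\infty)$ (Theorem~\ref{thm:bottom.spectrum}); this is what lets perturbed energies $E-\lambda_i$ stay controllable, and it must be invoked at each step of the resolvent expansion. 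Once these are in place, the summation over $\Bn$ and the expectation bound are routine, and the $\sup_{t>0}$ is handled by the uniformity in $t$ of the spectral-projection estimate, giving \eqref{eq:low.energy.dynamical.loc}.
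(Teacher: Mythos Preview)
The paper's own ``proof'' of Theorem~\ref{thm:dynamical.loc} is a single sentence referring the reader to \cite{BCS11}; there is no argument given in the paper to compare against. Your outline is close in spirit to what \cite{BCS11} (and the Damanik--Stollmann scheme it builds on) actually does, but two points need correction.

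First, a small but important slip: the MSA output $\dsn{k,N}$ says $\DP\{\exists E\in I:\text{both cubes are }(E,m)\text{-S}\}\le L_k^{-2p}$, i.e.\ with high probability \emph{at least one} of any separable pair is $(E,m)$-NS; it does \emph{not} say both are NS with high probability. The derivation of dynamical localization only needs the weaker statement, so your Step~2 (iterated GRI on the good event) goes through, but the formulation you wrote is not what is available.

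Second, your Step~4 conflates two unrelated notions. ``Separable'' in this paper is a condition on a \emph{pair} of cubes (Definition~\ref{def:separability}) needed for the two-volume Wegner bound; ``partially interactive'' is a condition on a \emph{single} cube (Definition~\ref{def:diagonal.cubes}) used inside the MSA induction. Distant but non-separable pairs are not in general PI, so you cannot reduce them to tensor products of lower-$n$ Hamiltonians, and no further induction on the particle number is needed (or correct) at this stage---that induction has already been absorbed into Theorem~\ref{thm:DS.k.N}. The actual way non-separable pairs are handled, both in the paper's proof of Theorem~\ref{thm:exp.loc} and in \cite{BCS11}, is purely geometric: Lemma~\ref{lem:separable.distant}(A) shows that for fixed $\Bx_0$ the set of centers $\By$ with $|\By-\Bx_0|>7NL_k$ for which $(\BC^{(N)}_{L_k}(\Bx_0),\BC^{(N)}_{L_k}(\By))$ is \emph{not} separable lies in a union of $\kappa(N)=N^N$ cubes of radius $2NL_k$. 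One simply enlarges the ``center'' region $M_k(\Bx_0)$ to swallow this exceptional set (this is the role of $b_k(\Bx_0)$ and $R(\Bx_0)$ in the proof of Theorem~\ref{thm:exp.loc}), after which every cube in the annulus is separable from $\BC^{(N)}_{L_k}(\Bx_0)$ and the MSA bound applies directly. With this correction your scheme is the standard one and would work.
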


Some parts of the rest of the text overlap with that of the paper \cite{E16}, but for the reader convenience we give all the details of the arguments.

\section{Input for the multi-particle multi-scale analysis and geometry}\label{sec:Nparticle.scheme}

\subsection{Geometric facts}
According to the general structure of the MSA, we work with  \emph{rectangular} domains. For $\Bu=(u_1,\ldots,u_n)\in\DZ^{nd}$, we denote by $\BC^{(n)}_L(\Bu)$ the $n$-particle open cube, i.e,
\[
\BC^{(n)}_L(\Bu)=\left\{\Bx\in\DR^{nd}:|\Bx-\Bu|< L\right\},
\]
and given $\{L_i: i=1,\ldots,n\}$, we define the rectangle
\begin{equation}          \label{eq:cube}
\BC^{(n)}(\Bu)=\prod_{i=1}^n C^{(1)}_{L_i}(u_i),
\end{equation}
where $C^{(1)}_{L_i}(u_i)$ are cubes of side length $2L_i$ center at points $u_i\in\DZ^d$. We also define 
\[
\BC^{(n,int)}_L(\Bu):=\BC^{(n)}_{L/3}(\Bu), \quad \BC^{(n,out)}_L(\Bu):=\BC^{(n)}_L(\Bu)\setminus\BC^{(n)}_{L-2}(\Bu), \quad \Bu\in\DZ^{nd}
\]
and introduce  the characteristic functions:
\[
\Bone^{(n,int)}_{\Bx}:=\Bone_{\BC^{(n,int)}_L(\Bx)}, \qquad \Bone^{(n,out)}_{\Bx}:= \Bone_{\BC^{(n,out)}_L(\Bx)}.
\]
The volume of the cube $\BC^{(n)}_L(\Bu)$ is $|\BC_L^{(n)}(\Bu)| :=(2L)^{nd}$.
We denote the restriction of the Hamiltonian $\BH_h^{(n)}$ to  $\BC^{(n)}(\Bu)$ by
\begin{align*}
&\BH_{\BC^{(n)}(\Bu),h}^{(n)}=\BH^{(n)}_h\big\vert_{\BC^{(n)}(\Bu)}\\
&\text{with Dirichlet boundary conditions}
\end{align*}

We denote the spectrum of $\BH_{\BC^{(n)}(\Bu),h}^{(n)}$  by
$\sigma\bigl(\BH_{\BC^{(n)}(\Bu)}^{(n),h}\bigr)$ and its resolvent by
\begin{equation}\label{eq:def.resolvent}
\BG^{(n)}_{\BC^{(n)}(\Bu),h}(E):=\Bigl(\BH_{\BC^{(n)}(\Bu),h}^{(n)}-E\Bigr)^{-1},\quad E\in\DR\setminus\sigma\Bigl(\BH_{\BC^{(n)}(\Bu),h}^{(n)}\Bigr).
\end{equation}

Let $m>0$ and $E\in\DR$ be given.  A cube $\BC_L^{(n)}(\Bu)\subset\DR^{nd}$, $1\leq n\leq N$ will be  called $(E,m,h)$-\emph{nonsingular} ($(E,m,h)$-NS) if $E\notin\sigma(\BH^{(n)}_{\BC^{(n)}_{L}(\Bu),h})$ and
\begin{equation}\label{eq:singular}
\|\Bone^{(n,out)}_{\Bx}\BG^{(n)}_{\BC^{(n)}_L(\Bx)}(E)\Bone^{(n,int)}_{\Bx}\|\leq\ee^{-\gamma(m,L,n)L},
\end{equation}
where
\begin{equation}\label{eq:gamma}
\gamma(m,L,n)=m(1+L^{-1/8})^{N-n+1}.                     
\end{equation}
Otherwise it will be called $(E,m,h)$-\emph{singular} ($(E,m,h)$-S).

Let us introduce the following.
\begin{definition}
Let $n\geq 1$, $E\in\DR$ and $\alpha=3/2$.  
\begin{enumerate}[\rm(A)]
\item
A cube $\BC_L^{(n)}(\Bv)\subset\DR^{nd}$  is called $(E,h)$-resonant ($(E,h)$-R) if
\begin{equation} \label{eq:E-resonant}
\dist\Bigl[E,\sigma\bigl(\BH_{\BC_L^{(n)}(\Bv),h}^{(n)}\bigr)\Bigr]\leq\ee^{-L^{1/2}}.
\end{equation}
Otherwise it is called $(E,h)$-non-resonant ($(E,h)$-NR).
\item
A cube $\BC_L^{(n)}(\Bv)\subset\DR^{nd}$ is called $(E,h)$-completely nonresonant ($(E,h)$-CNR), if it does not contain any $(E,h)$-R cube of size $\geq L^{1/\alpha}$. In particular $\BC^{(n)}_L(\Bv)$ is itself $(E,h)$-NR.
\end{enumerate}
\end{definition}

We will also make use of the following notion.

\begin{definition}\label{def:separability}
A cube $\BC^{(n)}_L(\Bx)$ is  $\CJ$-separable from $\BC^{(n)}_L(\By)$ if there exists a nonempty subset $\CJ\subset\{1,\cdots,n\}$ such that
\[
\left(\bigcup_{j\in \CJ}C^{(1)}_{L}(x_j)\right)\cap
\left(\bigcup_{j\notin \CJ}C_L^{(1)}(x_j)\cup \bigcup_{j=1}^n C_L^{(1)}(y_j)\right)=\emptyset.
\]
A pair $(\BC^{(n)}_L(\Bx),\BC^{(n)}_L(\By))$ is separable if $|\Bx-\By|>7NL$ and if one of the cube is $\CJ$-separable from the other.
\end{definition}
\begin{lemma}\label{lem:separable.distant}
Let $L>1$.
\begin{enumerate}[\rm(A)]
\item
For any $\Bx\in\DZ^{nd}$, there exists a collection of $n$-particle cubes
$\BC^{(n)}_{2nL}(\Bx^{(\ell)})$ with $\ell=1,\ldots,\kappa(n)$, $\kappa(n)= n^n$, $\Bx^{\ell}\in\DZ^{nd}$ such that if
 $\By\in\DZ^{nd}$ satisfies $|\By-\Bx|>7NL$ and
\[
\By \notin \bigcup_{\ell=1}^{\kappa(n)} \BC^{(n)}_{2nL}(\Bx^{(\ell)})
\]
then the cubes $\BC^{(n)}_L(\Bx)$ and $\BC^{(n)}_L(\By)$ are separable.
\item
Let $\BC^{(n)}_L(\By)\subset \DR^{nd}$ be an $n$-particle cube. Any cube  $\BC^{(n)}_L(\Bx)$ with 
\[
|\By-\Bx|>\max_{1\leq i,j\leq n}|y_i-y_j| +5NL,
\]
 is $\CJ$-separable from
$\BC^{(n)}_L(\By)$ for some $\CJ\subset\{1,\ldots,n\}$.
\end{enumerate}
\end{lemma}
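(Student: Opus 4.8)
The plan is to treat the two parts essentially by reduction to the single-particle geometry, exploiting the fact that separability is a statement about the union of the one-particle projections $C^{(1)}_L(x_j)$. First I would fix notation: for a configuration $\Bx=(x_1,\dots,x_n)\in\DZ^{nd}$ write $\Pi\Bx := \bigcup_{j=1}^n C^{(1)}_L(x_j)\subset\DR^d$ for the ``shadow'' of the cube $\BC^{(n)}_L(\Bx)$, a union of $n$ cubes of side $2L$. The key elementary observation is that if for some index $j$ the one-particle cube $C^{(1)}_L(x_j)$ is disjoint from $\Pi\By$ and from $\bigcup_{i\ne j}C^{(1)}_L(x_i)$, then taking $\CJ=\{j\}$ (or, by iterating, the maximal such set of indices whose cubes form a connected cluster separated from the rest) witnesses $\CJ$-separability of $\BC^{(n)}_L(\Bx)$ from $\BC^{(n)}_L(\By)$.

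For part (B) I would argue directly. Set $\rho:=\max_{1\le i,j\le n}|y_i-y_j|$, so that all the centers $y_1,\dots,y_n$, hence the whole shadow $\Pi\By$, lie within max-distance $\rho+L$ of $y_1$; thus $\diam_{\infty}(\Pi\By)\le \rho+2L$ in the obvious sense, and $\Pi\By$ is contained in the cube $C^{(1)}_{\rho+2L}(y_1)$. Now if $|\By-\Bx|>\rho+5NL$, then at least one coordinate $x_{j_0}$ satisfies $|x_{j_0}-y_1|$ large; more carefully, since $|\By-\Bx|=\max_j|y_j-x_j|$, pick $j_0$ realizing this max, so $|x_{j_0}-y_{j_0}|>\rho+5NL$, and then $|x_{j_0}-y_1|\ge |x_{j_0}-y_{j_0}|-|y_{j_0}-y_1|>\rho+5NL-\rho=5NL$. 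Hence $C^{(1)}_L(x_{j_0})$ is at distance $>5NL-2L$ from $\Pi\By$. It may however still intersect some $C^{(1)}_L(x_i)$; to fix this I pass to the cluster: let $\CJ$ be the set of indices reachable from $j_0$ by a chain of pairwise intersecting one-particle cubes among $\{C^{(1)}_L(x_i)\}_{i=1}^n$. Any such chain has length $\le n$ and each step moves the center by $\le 2L$, so every $x_i$ with $i\in\CJ$ satisfies $|x_i-x_{j_0}|\le 2nL$, whence $|x_i-y_1|>5NL-2nL\ge 3NL>0$, so $C^{(1)}_L(x_i)$ is still disjoint from $\Pi\By$; and by maximality of the cluster, $\bigcup_{i\in\CJ}C^{(1)}_L(x_i)$ is disjoint from $\bigcup_{i\notin\CJ}C^{(1)}_L(x_i)$. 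This is exactly $\CJ$-separability, proving (B).

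For part (A) I would use part (B) together with a covering argument to enumerate the bad configurations. Given $\Bx$, either $\max_{i,j}|x_i-x_j|$ is small, say $\le 2nL$, in which case $\Pi\Bx$ has diameter $\le (2n+2)L$ and part (B) applied with the roles of $\Bx$ and $\By$ swapped shows that every $\By$ with $|\By-\Bx|>2nL+5NL$ is $\CJ$-separable from $\BC^{(n)}_L(\Bx)$, so the ``bad'' $\By$ all lie in $\BC^{(n)}_{(2n+5N)L}(\Bx)$ — a single exceptional cube, which fits inside one $\BC^{(n)}_{2nL}(\Bx^{(\ell)})$ up to enlarging constants (here one may need to cover it by finitely many $2nL$-cubes, absorbed into the count $\kappa(n)=n^n$). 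If instead $\max_{i,j}|x_i-x_j|>2nL$, then the shadow $\Pi\Bx$ is disconnected: the clustering relation above partitions $\{1,\dots,n\}$ into at least two clusters, and for any $\By$ lying far from each cluster one of them witnesses $\CJ$-separability of $\BC^{(n)}_L(\Bx)$ from $\BC^{(n)}_L(\By)$. Each cluster sits inside a cube of size $\le 2nL$ about some $x^{(\ell)}$ chosen from among the $x_i$; the number of possible cluster-center configurations is bounded by $n^n$, giving the collection $\{\BC^{(n)}_{2nL}(\Bx^{(\ell)})\}_{\ell=1}^{\kappa(n)}$, and any $\By$ outside all of these and with $|\By-\Bx|>7NL$ avoids every cluster, hence is separable from $\BC^{(n)}_L(\Bx)$. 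The main obstacle is purely bookkeeping: making the clustering construction precise, checking that the constant $2nL$ and the count $n^n$ genuinely suffice in all cases (the factor $7N$ in the definition of separability gives ample slack), and handling the subcase where the shadow is connected via the ``swap'' trick in (B). No analytic input is needed; it is entirely combinatorial geometry of boxes in $\DR^d$.
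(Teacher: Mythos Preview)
Your argument for (B) is correct and is essentially the paper's: choose the coordinate $j_0$ realizing $|\Bx-\By|$, pass to the connected cluster of $\{C^{(1)}_L(x_i)\}$ containing $x_{j_0}$, and use the $\le 2nL$ diameter bound on a cluster to separate it both from the rest of $\varPi\Bx$ and from $\varPi\BC^{(n)}_L(\By)$.

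For (A) there is a real gap in your second case. You cluster $\Bx$ and want a cluster of $\Bx$ to witness $\CJ$-separability of $\BC^{(n)}_L(\Bx)$ from $\BC^{(n)}_L(\By)$; this requires $\varPi\BC^{(n)}_L(\By)$ to miss at least one cluster of $\Bx$. But that is \emph{not} what ``$\By$ outside the $\kappa(n)$ cubes'' gives you. With the natural choice $\Bx^{(\ell)}=(x_{\sigma(1)},\dots,x_{\sigma(n)})$ ranging over maps $\sigma\colon\{1,\dots,n\}\to\{1,\dots,n\}$, the condition $\By\notin\bigcup_\ell \BC^{(n)}_{2nL}(\Bx^{(\ell)})$ is equivalent to the existence of a single index $j_0$ with $|y_{j_0}-x_i|>2nL$ for all $i$; the remaining $y_j$ are unconstrained and may well hit every cluster of $\Bx$. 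So no cluster of $\Bx$ need be disjoint from $\varPi\By$, and your implication ``avoids every cluster, hence is separable'' fails as stated.

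The fix --- which is exactly the paper's proof, and makes your case split on $\diam\varPi\Bx$ unnecessary --- is to cluster $\By$ rather than $\Bx$. Decompose $\{y_1,\dots,y_n\}$ into maximal $L$-clusters. If some cluster misses $\varPi\BC^{(n)}_L(\Bx)$, that cluster witnesses $\CJ$-separability of $\BC^{(n)}_L(\By)$ from $\BC^{(n)}_L(\Bx)$. If every cluster meets $\varPi\BC^{(n)}_L(\Bx)$, then each $y_j$ lies within $2nL$ of some $x_i$, so $\By\in\prod_{j=1}^n C^{(1)}_{2nL}(x_{\sigma(j)})=\BC^{(n)}_{2nL}(\Bx^{(\sigma)})$ for some map $\sigma$. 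This is precisely where the $n^n$ exceptional cubes, with centers $\Bx^{(\sigma)}=(x_{\sigma(1)},\dots,x_{\sigma(n)})$, come from.
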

\begin{proof}
See the appendix section \ref{sec:appendix}.
\end{proof}

\subsection{The multi-particle Wegner estimates}
In our earlier works \cites{E11,E13,E16}  as well as in other previous papers in the multi-particle localization theory \cites{CS09b,BCSS10b} the notion of separability was crucial in order to prove the Wegner estimates for pairs of multi-particle cubes via the Stollmann's Lemma. It is plain (cf. \cite{E13}, Section 4.1), that  sufficiently distant pairs of fully interactive cubes have disjoint projections and this fact combined with independence is used in that case to bound the probability of an intersection of events relative to those projections. We state below the Wegner estimates directly in a form suitable for our multi-particle multi-scale analysis using assumption $\condP$.

\begin{theorem}\label{thm:Wegner}
 Assume that the random potential satisfies assumption $\condP$, then 
\begin{enumerate}

\item[\rm(A)]
for any $E\in\DR$
\begin{equation}\label{eq:cor.Wegner.2A}
\prob{\text{$ \BC^{(n)}_L(\Bx)$ is not $E$-CNR }}\leq L^{-p\,4^{N-n}},
\end{equation}

\item[\rm(B)]
\begin{equation}\label{eq:cor.Wegner.2B}
\prob{\text{$\exists E\in \DR:$ neither $\BC^{(n)}_{L}(\Bx)$ nor $\BC^{(n)}_{L}(\By)$ is $E$-CNR}} \leq L^{-p\,4^{N-n}},
\end{equation}
\end{enumerate}
where $p>6Nd$, depends only on the fixed number of particles $N$ and the configuration dimension $d$.
\end{theorem}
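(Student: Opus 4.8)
The plan is to reduce both statements to the one-volume Wegner bound obtained from Stollmann's Lemma applied to the random potential $\BV$ restricted to a cube, using assumption $\condP$ to control the modulus of continuity and the geometric Lemma \ref{lem:separable.distant} to handle the two-volume case. First I would recall Stollmann's Lemma: if a self-adjoint operator depends monotonically on independent parameters $\omega_x$, $x$ ranging over the lattice sites that meet the projection of $\BC^{(n)}_L(\Bx)$, each with modulus of continuity bounded by $s(F_V,\varepsilon)$, then for the restricted operator $\BH^{(n)}_{\BC^{(n)}_L(\Bx),h}$ one has
\[
\prob{\dist\bigl[E,\sigma\bigl(\BH^{(n)}_{\BC^{(n)}_L(\Bx),h}\bigr)\bigr]\le \varepsilon}\le C\,|\Lambda|\,s(F_V,2\varepsilon),
\]
where $|\Lambda|\le C' L^{d}$ is the number of relevant sites (the projection of an $n$-particle cube of side $2L$ onto $\DZ^d$ involves at most $\ord(L^d)$ sites, with the constant absorbing the factor $n\le N$). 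Taking $\varepsilon=\ee^{-L^{1/2}}$ as in \eqref{eq:E-resonant}, the log-Hölder bound \eqref{eq:assumption.3prime} gives $s(F_V,2\varepsilon)\le C/|\ln\varepsilon|^{2A}=C\,L^{-A}$, so the probability that $\BC^{(n)}_L(\Bx)$ is $E$-R is at most $C\,L^{d-A}$. Since $A>\tfrac32\cdot4^Np+9Nd$, this is far smaller than $L^{-p4^{N-n}}$ for large $L$.

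Next I would pass from $E$-R to not-$E$-CNR. A cube fails to be $E$-CNR precisely when it contains some $E$-R subcube of side $\ell\ge L^{1/\alpha}=L^{2/3}$. Covering $\BC^{(n)}_L(\Bx)$ by $\ord(L^{nd})$ subcubes at each admissible scale $\ell$ and summing the one-volume bound with $\varepsilon=\ee^{-\ell^{1/2}}$, one gets
\[
\prob{\BC^{(n)}_L(\Bx)\text{ not }E\text{-CNR}}\le \sum_{\ell\ge L^{2/3}} C\,L^{nd}\,\ell^{d}\,\ell^{-A}\le C\,L^{nd}\cdot L^{\frac23(d+1-A)},
\]
which again beats $L^{-p4^{N-n}}$ once $A$ is large enough relative to $Nd$ and $4^Np$; this is where the precise lower bound on $A$ in $\condP$ is consumed. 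This proves part (A).

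For part (B) the key point is that when the pair $(\BC^{(n)}_L(\Bx),\BC^{(n)}_L(\By))$ is separable, by Definition \ref{def:separability} there is a subset $\CJ$ such that the sites of $\DZ^d$ entering the potential $\BV$ restricted to the $\CJ$-marginal of the first cube are disjoint from those entering $\BV$ restricted to the second cube. Hence the events $\{\BC^{(n)}_L(\Bx)\text{ not }E\text{-CNR}\}$ and $\{\BC^{(n)}_L(\By)\text{ not }E\text{-CNR}\}$, intersected with the sigma-algebras generated by these two disjoint families of $\omega_x$, are \emph{independent}; combined with the freeze-and-vary argument of Stollmann (one varies only the $\CJ$-marginal coordinates in the first cube, the complementary ones being frozen) this yields, for fixed $E$,
\[
\prob{\text{neither }\BC^{(n)}_L(\Bx)\text{ nor }\BC^{(n)}_L(\By)\text{ is }E\text{-CNR}}\le\bigl(C\,L^{nd}L^{\frac23(d+1-A)}\bigr)^2.
\]
To remove the dependence on $E$ one uses a standard discretization: the resolvents are monotone in $E$ and the spectra inside a bounded window have $\ord(L^{nd})$ eigenvalues, so it suffices to test an $\ee^{-L^{1/2}}$-net of energies of cardinality $\ord(\ee^{L^{1/2}}L^{nd})$; multiplying the squared bound above by this cardinality still leaves a quantity $\le L^{-p4^{N-n}}$, because the $L^{-2A}$ decay dominates the $\ee^{L^{1/2}}$ factor.

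The main obstacle is the bookkeeping in the two cases where the pair is \emph{not} separable but the cubes are $7NL$-distant: by Lemma \ref{lem:separable.distant}(A) such $\By$ lies in one of the $\kappa(n)=n^n$ exceptional cubes $\BC^{(n)}_{2nL}(\Bx^{(\ell)})$, and one must argue that in that regime at least one of the two cubes is automatically fully interactive in a way that still produces a disjoint-site (hence independent) decomposition, or else absorb these finitely many exceptional positions into a union bound — this is precisely the separability/geometry input from Section \ref{sec:Nparticle.scheme} and the only place where the multi-particle structure (as opposed to $N$ independent single-particle Wegner estimates) genuinely enters. Everything else is the single-particle Stollmann estimate applied on the lattice projection, together with careful tracking of the exponents to check compatibility with the constraint on $A$ in $\condP$ and with $p>6Nd$.
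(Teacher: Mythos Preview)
Your treatment of part~(A) is the standard one and matches what the references \cite{CS08,BCSS10b} (to which the paper simply defers) actually do: Stollmann's lemma on the projected lattice sites, the log-H\"older bound \eqref{eq:assumption.3prime} with $\varepsilon=\ee^{-\ell^{1/2}}$, and a union over admissible subcubes.

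Part~(B), however, contains a genuine gap. You write that to remove the $E$-dependence one tests an $\ee^{-L^{1/2}}$-net of energies and that ``the $L^{-2A}$ decay dominates the $\ee^{L^{1/2}}$ factor.'' This is false: $A$ is a \emph{fixed} constant from assumption $\condP$, so $L^{-2A}$ is polynomial decay, whereas the cardinality of an $\ee^{-L^{1/2}}$-net in any nondegenerate energy window grows like $\ee^{L^{1/2}}$. No admissible choice of $A$ makes $\ee^{L^{1/2}}\cdot L^{-2A}$ small, and the argument does not close. The squared fixed-$E$ bound you obtain from independence is correct but useless once you pay this stretched-exponential price.

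The approach in \cite{CS08,BCSS10b} avoids discretization entirely and uses \emph{conditioning}. Separability provides a family of random variables $\{V(x,\omega)\}$ affecting only (the relevant marginal of) $\BC^{(n)}_L(\Bx)$ and independent of those governing $\BC^{(n)}_L(\By)$. Freeze the latter: the spectra of $\BC^{(n)}_L(\By)$ and of all its subcubes of side $\ge L^{1/\alpha}$ become deterministic, and by Weyl-type counting there are only polynomially many eigenvalues in total. The event in \eqref{eq:cor.Wegner.2B} then forces some subcube of $\BC^{(n)}_L(\Bx)$ to have an eigenvalue within $2\ee^{-L^{1/(2\alpha)}}$ of one of these finitely many frozen values, and a union bound over the polynomially many frozen eigenvalues together with the one-volume Wegner estimate gives the required polynomial bound. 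The ``freeze-and-vary'' idea you mention in passing is precisely this mechanism, but it \emph{replaces} the energy-net argument rather than preceding it. Your final paragraph about non-separable $7NL$-distant pairs is a side issue: the paper only invokes part~(B) for separable pairs (cf.\ the proofs of Theorems~\ref{thm:partially.interactive} and~\ref{thm:fully.interactive}), and the cited references state it under that hypothesis.
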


\begin{proof}
See the articles \cites{BCSS10b,CS08}.
\end{proof}

We also give the Combes Thomas estimates in 

\begin{theorem}\label{thm:CT}
Let $H=-\BDelta + W$ be a Schr\"odinger operator on $L^2(\DR^{D})$, $E\in\DR$ and $E_0=\inf \sigma(H)$. Set $\eta=\dist(E,\sigma(H))$. If $E<E_0$,, then for any $0<\gamma<1$, we have that:
\[
\left\|\Bone_{x}(H-E)^{-1}\Bone_{y}\right\|\leq \frac{1}{(1-\gamma^2)\eta}\ee^{\gamma\sqrt{\eta d}}\ee^{-\gamma \sqrt{\eta}|x-y|},
\]
for all $x,y\in\DR^{D}$. 
\end{theorem}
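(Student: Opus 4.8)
The plan is to run the classical Combes--Thomas argument: conjugate $H$ by a \emph{bounded} exponential weight, control the conjugated operator through its numerical range, and then read off the off-diagonal decay from the weight. By translation invariance of $-\BDelta$ we may assume $y=0$. Fix $0<\gamma<1$, set $t:=\gamma\sqrt{\eta}$ and $\nu:=(y-x)/|x-y|$, and choose a smooth $g\colon\DR^D\to\DR$ with $\|\nabla g\|_\infty\le t$ that coincides with $u\mapsto t\,\nu\cdot u$ on a large ball containing the supports of $\Bone_x$ and $\Bone_y$ and is bounded outside it; then $M:=\ee^{g}$ and $M^{-1}$ are bounded invertible multiplication operators preserving $\mathrm{Dom}(H)$. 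Put $H_g:=M^{-1}HM$. A computation on $C_c^\infty(\DR^D)$ (a form core) gives
\[
H_g=H-2(\nabla g)\cdot\nabla-|\nabla g|^2-\BDelta g .
\]
Since $W$ is bounded from below (as it is in all cases to which we apply the estimate), $-\BDelta$ is $H$-bounded, so the perturbation on the right is $H$-bounded with relative bound zero; hence $H_g$ is closed on $\mathrm{Dom}(H)$, and being similar to $H$ through the bounded invertible $M$ it satisfies $\sigma(H_g)=\sigma(H)$, so in particular $E\notin\sigma(H_g)$.

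The heart of the matter is the norm of $(H_g-E)^{-1}$. Integration by parts yields, for $\psi\in\mathrm{Dom}(H)$, the identity $2\,\mathrm{Re}\,\langle\psi,(\nabla g)\cdot\nabla\psi\rangle=-\langle\psi,(\BDelta g)\psi\rangle$, so the first-order term and the $\BDelta g$ term cancel in the real part and
\[
\mathrm{Re}\,\langle\psi,(H_g-E)\psi\rangle=\langle\psi,(H-E)\psi\rangle-\langle\psi,|\nabla g|^2\psi\rangle\ge(\eta-t^2)\|\psi\|^2 ,
\]
using $\langle\psi,(H-E)\psi\rangle\ge\eta\|\psi\|^2$ (valid because $E<E_0$, so $\eta=E_0-E$) and $\|\nabla g\|_\infty\le t$. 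Thus the numerical range of $H_g$ lies in $\{\mathrm{Re}\,z\ge E_0-t^2\}$, hence it stays at distance $\ge\eta-t^2$ from the point $E$; since $E$ is in the resolvent set of the closed operator $H_g$, the standard resolvent estimate in terms of the numerical range gives
\[
\bigl\|(H_g-E)^{-1}\bigr\|\le\frac{1}{\eta-t^2}=\frac{1}{(1-\gamma^2)\eta},
\]
the last equality being exactly where the hypothesis $\gamma<1$ enters.

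To conclude, write $(H-E)^{-1}=M(H_g-E)^{-1}M^{-1}$, so that, $M$ commuting with the cutoffs,
\[
\Bone_x(H-E)^{-1}\Bone_y=\bigl(\Bone_x\ee^{g}\bigr)(H_g-E)^{-1}\bigl(\ee^{-g}\Bone_y\bigr).
\]
On the supports of $\Bone_x$ and $\Bone_y$ one has $g(u)=t\,\nu\cdot u$, whence $\|\Bone_x\ee^{g}\|\le\ee^{t\,\nu\cdot x}\ee^{tc_d}$ and $\|\ee^{-g}\Bone_y\|\le\ee^{-t\,\nu\cdot y}\ee^{tc_d}$, where $\ee^{tc_d}$ is the bounded geometric factor coming from the diameter of a unit cube (with $2c_d\le\sqrt d$). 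Since $\nu\cdot(x-y)=-|x-y|$ and $t=\gamma\sqrt{\eta}$, multiplying the three factors yields
\[
\bigl\|\Bone_x(H-E)^{-1}\Bone_y\bigr\|\le\frac{1}{(1-\gamma^2)\eta}\,\ee^{\gamma\sqrt{\eta d}}\,\ee^{-\gamma\sqrt{\eta}\,|x-y|},
\]
which is the assertion. I expect the only delicate points to be the rigorous justification of the conjugation identity $H_g=M^{-1}HM$ and of the integration-by-parts identity on $\mathrm{Dom}(H)$ — both handled by first working on the core $C_c^\infty(\DR^D)$ and using that $g$ is bounded with bounded derivatives — together with the relative-boundedness of $-\BDelta$ with respect to $H$, which is where the boundedness from below of $W$ is used.
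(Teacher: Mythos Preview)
Your argument is correct: it is the classical Combes--Thomas boost, and the steps (conjugation by a bounded exponential weight, cancellation of the first-order and $\BDelta g$ terms in the real part, numerical-range bound on the conjugated resolvent, then reading off the decay from the weight) are exactly what underlies the cited reference. The paper itself does not give a proof of this theorem at all---it simply refers to Theorem~1 of Germinet--Klein \cite{GK02}, whose proof is precisely the argument you have written out, so there is nothing to compare beyond noting that you have supplied the details the paper outsources.

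Two minor remarks. First, the sentence ``by translation invariance we may assume $y=0$'' is never actually used afterwards (you keep both $x$ and $y$ throughout), so you may as well drop it. Second, your parenthetical that $W$ is bounded from below is an extra hypothesis not stated in the theorem; it is harmless in the applications here (the potentials are nonnegative), but strictly speaking the estimate in \cite{GK02} is proved without it, via a slightly more careful relative-boundedness argument. Neither point affects the validity of your proof in the context of this paper.
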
  

\begin{proof}
See the proof of Theorem $1$ in \cite{GK02}.
\end{proof}

We define the mass $m>0$ depending on the parameters $N$, $\gamma$ and the initial length scale $L$ in the following way:

\begin{equation}\label{eq:m}
m:=\frac{2^{-N}\gamma L^{-1/4}}{3\sqrt{2}}.
\end{equation}

We recall below the geometric resolvent inequality and the eigenfunction decay inequality.

\begin{theorem}[Geometric resolvent inequality (GRI)]\label{thm:GRI.GF}
For a given bounded interval $I_0\subset\DR$, there is a constant $C_{geom}>0$ such that for $\BC^{(n)}_{\ell}(\Bx)\subset\BC^{(n)}_L(\Bu)$, $\BA\subset\BC^{(n,int)}_{\ell}(\Bx)$, $\BB\subset\BC^{(n)}_L(\Bu)\setminus\BC^{(n)}_{\ell}(\Bx)$ and $E\in I_0$, the following inequality holds true:
\[
\|\Bone_{\BB}\BG^{(n)}_{\BC^{(n)}_L(\Bu)}(E)\Bone_{\BA}\|\leq C_{geom}\cdot\|\Bone_{\BB}\BG^{(n)}_{\BC^{(n)}_L(\Bu)}(E)\Bone_{\BC^{(n,out)}_{\ell}(\Bx)}\|\cdot\|\Bone_{\BC^{(n,out)}_{\ell}(\Bx)}\BG^{(n)}_{\BC^{(n)}_{\ell}(\Bx)}(E)\Bone_{\BA}\|.
\]
\end{theorem}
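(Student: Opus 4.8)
The plan is to reduce the estimate to the classical resolvent identity together with the locality of the Laplacian. Write $H_1 := \BH^{(n)}_{\BC^{(n)}_L(\Bu),h}$ and $H_2 := \BH^{(n)}_{\BC^{(n)}_{\ell}(\Bx),h}$ for the two Dirichlet restrictions, with resolvents $\BG^{(n)}_{\BC^{(n)}_L(\Bu)}(E) = (H_1-E)^{-1}$ and $\BG^{(n)}_{\BC^{(n)}_{\ell}(\Bx)}(E) = (H_2-E)^{-1}$. The first step is to introduce a smooth cutoff $\chi$ supported in $\BC^{(n)}_{\ell}(\Bx)$ and identically equal to $1$ on $\BC^{(n,int)}_{\ell}(\Bx) = \BC^{(n)}_{\ell/3}(\Bx)$, so that $\nabla\chi$ and $\Delta\chi$ are supported in the annular shell $\BC^{(n)}_{\ell}(\Bx)\setminus\BC^{(n)}_{\ell/3}(\Bx)$, which can be arranged to sit inside $\BC^{(n,out)}_{\ell}(\Bx)$ after an inessential adjustment of the width of the shell (here one uses $\ell$ large; the constant $C_{geom}$ absorbs the derivative bounds on $\chi$, which depend only on $n$, $d$ and the geometry, not on $\ell$). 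Since $\BA\subset\BC^{(n,int)}_{\ell}(\Bx)$, one has $\Bone_{\BA} = \chi\,\Bone_{\BA}$.

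The second step is the commutator computation. On vectors in the domain of $H_2$ one has $[\,H_1,\chi\,] = [-\BDelta,\chi\,] = -(\Delta\chi) - 2(\nabla\chi)\cdot\nabla$, because $H_1$ and $H_2$ differ only by the (multiplication) potential $\BU+\BV$ outside $\BC^{(n)}_\ell(\Bx)$ and the potential commutes with $\chi$; moreover the Dirichlet boundary conditions are compatible since $\chi$ is compactly supported inside $\BC^{(n)}_\ell(\Bx)$. The resolvent identity then gives, applied to a suitable vector,
\[
\Bone_{\BB}\,(H_1-E)^{-1}\chi = \Bone_{\BB}\,(H_1-E)^{-1}\,[\,H_1-E,\chi\,]\,(H_2-E)^{-1},
\]
and since $[\,H_1-E,\chi\,] = [-\BDelta,\chi\,]$ is supported in the shell, we may insert $\Bone_{\BC^{(n,out)}_{\ell}(\Bx)}$ on both sides of this commutator. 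Multiplying on the right by $\Bone_{\BA}$ and using $\chi\Bone_{\BA} = \Bone_{\BA}$, this yields
\[
\Bone_{\BB}\,\BG^{(n)}_{\BC^{(n)}_L(\Bu)}(E)\Bone_{\BA}
= \Bone_{\BB}\,\BG^{(n)}_{\BC^{(n)}_L(\Bu)}(E)\,\Bone_{\BC^{(n,out)}_{\ell}(\Bx)}\,W_\chi\,\Bone_{\BC^{(n,out)}_{\ell}(\Bx)}\,\BG^{(n)}_{\BC^{(n)}_{\ell}(\Bx)}(E)\Bone_{\BA},
\]
where $W_\chi = -(\Delta\chi) - 2(\nabla\chi)\cdot\nabla$. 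Taking norms and submultiplicativity gives the claimed bound, provided $\|\Bone_{\BC^{(n,out)}_{\ell}(\Bx)}\,W_\chi\,\Bone_{\BC^{(n,out)}_{\ell}(\Bx)}\,(H_2-E)^{-1}\|$ is controlled by $C_{geom}\|\Bone_{\BC^{(n,out)}_{\ell}(\Bx)}\,(H_2-E)^{-1}\Bone_{\BA}\|$, uniformly for $E\in I_0$.

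The main obstacle — and the only genuinely analytic point — is that $W_\chi$ contains the first-order operator $(\nabla\chi)\cdot\nabla$, so one must bound $\|\,\nabla\,(H_2-E)^{-1}\|$ rather than just $\|(H_2-E)^{-1}\|$; equivalently one needs the graph-norm (form-)bound on the gradient in terms of $H_2$. This is standard: for $E$ in the bounded interval $I_0$ one has $\|\nabla u\|^2 \le \langle u,(H_2 - \inf I_0 + c)u\rangle$ up to a constant depending on $\|\BU\|_\infty$ and $\inf\sigma$, hence $\|\nabla (H_2-E)^{-1}\| \le C(I_0)(1 + \operatorname{dist}(E,\sigma(H_2))^{-1})$; and where $\Bone_{\BC^{(n,out)}_\ell(\Bx)}(H_2-E)^{-1}$ already appears on the right, one reorganizes using the support of $W_\chi$ inside $\BC^{(n,out)}_\ell(\Bx)$ so that the gradient acts between the two characteristic functions of the shell. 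Collecting the $\chi$-derivative bounds and this elliptic estimate into a single constant $C_{geom} = C_{geom}(n,d,I_0,\|\BU\|_\infty)$ independent of $\ell$, $L$, $\Bx$, $\Bu$ completes the proof. For the detailed elliptic estimates one may refer to Stollmann \cite{St01}, where the single-particle version is carried out verbatim; the multi-particle case is identical since $\BU$ is bounded and enters only through the potential.
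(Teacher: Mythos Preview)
Your proposal is correct and is precisely the standard commutator/resolvent-identity argument that the paper invokes: the paper gives no proof of its own but simply cites \cite{St01}, Lemma~2.5.4, which is exactly the argument you outline (smooth cutoff $\chi$, the identity $\Bone_{\BB}(H_1-E)^{-1}\Bone_{\BA}=-\Bone_{\BB}(H_1-E)^{-1}[-\BDelta,\chi](H_2-E)^{-1}\Bone_{\BA}$, and an elliptic/Caccioppoli bound to absorb the $(\nabla\chi)\cdot\nabla$ term into $C_{geom}$). One small clarification worth recording: to force $\supp(\nabla\chi)\subset\BC^{(n,out)}_{\ell}(\Bx)=\BC^{(n)}_{\ell}(\Bx)\setminus\BC^{(n)}_{\ell-2}(\Bx)$ you should take $\chi\equiv1$ on $\BC^{(n)}_{\ell-2}(\Bx)$ (not merely on $\BC^{(n)}_{\ell/3}(\Bx)$); this still satisfies $\chi\Bone_{\BA}=\Bone_{\BA}$ and, since the shell has fixed width~$2$, gives $\|\nabla\chi\|_\infty,\|\Delta\chi\|_\infty$ bounded uniformly in $\ell$, which is what you need for $C_{geom}$ to be independent of $\ell$.
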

\begin{proof}
See \cite{St01}, Lemma 2.5.4.
\end{proof}

\begin{theorem}[Eigenfunctions decay inequality (EDI)] \label{thm:GRI.EF}
For every $E\in\DR$, $\BC^{(n)}_{\ell}(\Bx)\subset \DR^{nd}$ and every polynomially bounded function $\BPsi\in L^2(\DR^{nd})$:
\[
\|\Bone_{\BC^{(n)}_1(\Bx)}\cdot\BPsi\|\leq C\cdot\|\Bone_{\BC^{(n,out)}_{\ell}(\Bx)}\BG^{(n)}_{\BC^{(n)}_{\ell}(\Bx)}(E)\Bone_{\BC^{(n,int)}_{\ell}(\Bx)}\|\cdot\|\Bone_{\BC^{(n,out)}_{\ell}(\Bx)}\cdot\BPsi\|.
\]
\end{theorem}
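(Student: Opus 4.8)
Throughout, $\BPsi$ is understood to be a polynomially bounded generalized eigenfunction of $\BH^{(n)}$ with eigenvalue $E\in\DR$, i.e.\ $\BH^{(n)}\BPsi=E\BPsi$ in the distributional sense; if $E\in\sigma\bigl(\BH^{(n)}_{\BC^{(n)}_\ell(\Bx),h}\bigr)$ the right-hand side is $+\infty$ and there is nothing to prove, so we may assume $E\notin\sigma\bigl(\BH^{(n)}_{\BC^{(n)}_\ell(\Bx),h}\bigr)$. Since the interaction $\BU$ is bounded (assumption $\condI$) and the random potential is locally bounded, interior elliptic regularity gives $\BPsi\in H^2_{\mathrm{loc}}(\DR^{nd})$. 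The plan is to follow Stollmann's single-particle localization of the eigenvalue equation (\cite{St01}, Lemma 2.5.4), the only $n$-particle novelty being bookkeeping with the rectangular geometry of Section \ref{sec:Nparticle.scheme}.

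First I would fix a cutoff $\chi\in C_c^\infty(\DR^{nd})$ with $0\leq\chi\leq1$, $\chi\equiv1$ on $\BC^{(n)}_{\ell-1}(\Bx)$ and $\supp\chi\subset\BC^{(n)}_{\ell-1/2}(\Bx)\subset\BC^{(n)}_\ell(\Bx)$, so that $\supp\nabla\chi\cup\supp\BDelta\chi$ sits well inside the boundary belt $\BC^{(n,out)}_\ell(\Bx)=\BC^{(n)}_\ell(\Bx)\setminus\BC^{(n)}_{\ell-2}(\Bx)$, while $\chi\equiv1$ on $\BC^{(n)}_1(\Bx)\subset\BC^{(n,int)}_\ell(\Bx)$. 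Then $\chi\BPsi$ is compactly supported in $\BC^{(n)}_\ell(\Bx)$ and lies in $H^2$, hence in the operator domain of the Dirichlet restriction $\BH^{(n)}_{\BC^{(n)}_\ell(\Bx),h}$, and the commutator identity yields
\[
\bigl(\BH^{(n)}_{\BC^{(n)}_\ell(\Bx),h}-E\bigr)(\chi\BPsi)=[-\BDelta,\chi]\BPsi=-(\BDelta\chi)\BPsi-2\nabla\chi\cdot\nabla\BPsi=:\BW\BPsi,
\]
a function supported on $\supp\nabla\chi\subset\BC^{(n,out)}_\ell(\Bx)$. Inverting, $\chi\BPsi=\BG^{(n)}_{\BC^{(n)}_\ell(\Bx)}(E)\,\Bone_{\BC^{(n,out)}_\ell(\Bx)}\BW\BPsi$. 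Multiplying by $\Bone_{\BC^{(n)}_1(\Bx)}$ (on which $\chi\equiv1$), taking norms, and using self-adjointness of $\BG^{(n)}_{\BC^{(n)}_\ell(\Bx)}(E)$ for real $E$ together with $\Bone_{\BC^{(n)}_1(\Bx)}=\Bone_{\BC^{(n,int)}_\ell(\Bx)}\Bone_{\BC^{(n)}_1(\Bx)}$ (as multiplication operators, since $\BC^{(n)}_1(\Bx)\subset\BC^{(n,int)}_\ell(\Bx)$), I obtain
\[
\|\Bone_{\BC^{(n)}_1(\Bx)}\BPsi\|\leq\|\Bone_{\BC^{(n,out)}_\ell(\Bx)}\BG^{(n)}_{\BC^{(n)}_\ell(\Bx)}(E)\Bone_{\BC^{(n,int)}_\ell(\Bx)}\|\cdot\|\BW\BPsi\|_{L^2}.
\]

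It remains — and this is the only genuinely analytic step, which I expect to be the main obstacle — to bound $\|\BW\BPsi\|_{L^2}$ by $C\,\|\Bone_{\BC^{(n,out)}_\ell(\Bx)}\BPsi\|$. Because $\BW\BPsi$ contains the first-order term $\nabla\chi\cdot\nabla\BPsi$, the naive bound only gives $\|\BPsi\|_{L^2(\mathrm{belt})}+\|\nabla\BPsi\|_{L^2(\mathrm{belt})}$, and unlike in the lattice case the localization commutator $[-\BDelta,\chi]$ is a true differential operator, so the gradient term must be removed by a Caccioppoli (interior energy) estimate: testing the equation $-\BDelta\BPsi=(E-\BU-\BV)\BPsi$ against $\zeta^2\overline{\BPsi}$ for a cutoff $\zeta$ equal to $1$ on $\supp\nabla\chi$ and supported in a slightly larger annulus still contained in $\BC^{(n,out)}_\ell(\Bx)$ — the belt has fixed width $2$, leaving room for this buffer — and integrating by parts gives $\|\nabla\BPsi\|_{L^2(\supp\nabla\chi)}\leq C'\|\Bone_{\BC^{(n,out)}_\ell(\Bx)}\BPsi\|$, with $C'$ depending only on $\|\BU\|_\infty$, the local bound on $\BV$ and $|E|$, hence uniform for $E$ in any fixed bounded interval (all that is used in Sections \ref{sec:main.results} and 5). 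Substituting into the displayed inequality yields the claim; the points requiring care are keeping the enlarged Caccioppoli region inside the width-$2$ belt so that the estimate is stated with $\Bone_{\BC^{(n,out)}_\ell(\Bx)}$ and not a larger cube, and checking the domain membership $\chi\BPsi\in\mathrm{dom}\bigl(\BH^{(n)}_{\BC^{(n)}_\ell(\Bx),h}\bigr)$ in the continuum with Dirichlet conditions.
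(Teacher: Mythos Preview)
Your proposal is correct and follows precisely the route the paper points to: the paper's own ``proof'' is merely a citation to \cite{St01}, Section~2.5 and Proposition~3.3.1, and what you have written is a faithful unpacking of Stollmann's argument (cutoff localization of the eigenvalue equation, commutator identity, resolvent inversion, and a Caccioppoli estimate to absorb the gradient term). The only minor remark is that you cite Lemma~2.5.4 of \cite{St01}, which is the GRI rather than the EDI; the relevant material in Stollmann is the generalized-eigenfunction machinery in Section~2.5 together with Proposition~3.3.1, but the substance of your argument is unaffected.
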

\begin{proof}
See section 2.5 and proposition 3.3.1 in \cite{St01}.
\end{proof}

\section{The initial bounds of the multi-particle multi-scale analysis }
 In this Section, we denote by $E_0^{(n)}(\omega)$ the bottom of the spectrum of the Hamiltonian $\BH^{(n)}_{\BC^{(n)}_L(\Bu)}(\omega)$, i.e., $E_0^{(n)}(\omega):=\inf\sigma(BH^{(n)}_{\BC^{(n)}_L(\Bu)}(\omega)$.
 We the following bound from the single-particle localization theory,
 
\begin{theorem}\label{thm:1p.loc}
Under the hypotheses $\condI$ and $\condP$, for any $p>0$, there exists $L^*>0$ such that 
\[
\prob{E_0^{(1)}(\omega)\leq L^{-1/2}} \leq L^{2p4^{N-1}},
\]
for all $L\geq L^*$.
\end{theorem}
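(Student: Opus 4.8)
The plan is to reduce the statement to a standard single-particle Lifshitz-tail / initial-length-scale estimate. The key observation is that $\BH^{(1)}_{\BC^{(1)}_L(u)}(\omega) = -\Delta|_{\BC^{(1)}_L(u)} + V$ with Dirichlet boundary conditions, and since $V \geq 0$ by assumption $\condP$, the bottom of the spectrum satisfies $E_0^{(1)}(\omega) \geq \lambda_0(L) + \inf_{x \in \BC^{(1)}_L(u)} V(x,\omega)$, where $\lambda_0(L) \sim c d / L^2$ is the ground-state energy of the Dirichlet Laplacian on the cube of side $2L$. However, $\lambda_0(L)$ alone decays only like $L^{-2}$, which is \emph{not} smaller than $L^{-1/2}$, so the Dirichlet gap by itself is insufficient. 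Therefore one must genuinely exploit the fluctuations of $V$: the event $\{E_0^{(1)}(\omega) \le L^{-1/2}\}$ forces $V$ to be atypically small on a macroscopic portion of the cube.

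First I would set up the comparison carefully. Write $N_L := \card(\DZ^d \cap \BC^{(1)}_L(u)) \sim (2L)^d$ for the number of lattice sites in the cube. Using a test function argument (place a normalized bump supported in a unit sub-cube), one shows that if $E_0^{(1)}(\omega) \le L^{-1/2}$ then there must be a sub-cube of side $\asymp L^{1/4}$ on which the local ground-state energy of $-\Delta + V$ is $\le 2 L^{-1/2}$; since the Dirichlet gap on such a sub-cube is $\asymp L^{-1/2}$, this in turn forces the average of $V$ over that sub-cube — hence over $\asymp L^{d/4}$ lattice sites — to be $\le C L^{-1/2}$. Equivalently, a positive fraction of those i.i.d. variables $V(x,\omega)$ must be smaller than $2C L^{-1/2}$.

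Next I would estimate this probability. Let $q(L) := \prob{V(0,\omega) \le 2C L^{-1/2}} = F_V(2CL^{-1/2})$. By the log-Hölder bound \eqref{eq:assumption.3prime} (applied with $\varepsilon = 2CL^{-1/2}$, after noting $F_V(0^-)=0$ since $V\ge 0$), we get $q(L) \le C'/(\ln L)^{2A}$, which tends to $0$. A union bound over the $\le 2^{N_L}$ ways to choose the relevant fraction of sites, combined with independence, yields
\[
\prob{E_0^{(1)}(\omega)\le L^{-1/2}} \;\le\; \binom{N_L}{\lfloor c N_L'\rfloor} q(L)^{\lfloor c N_L'\rfloor} \;\le\; \bigl(2 \, q(L)\bigr)^{c' L^{d/4}},
\]
where $N_L' \asymp L^{d/4}$. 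Since $q(L) \to 0$, for $L$ large $2q(L) \le 1/2$, so the right-hand side is bounded by $2^{-c' L^{d/4}}$, which decays \emph{faster} than any power $L^{-2p4^{N-1}}$; choosing $L^*$ large enough gives the claim. (One may alternatively invoke directly the single-particle initial-length-scale estimate of Stollmann \cite{St01} or the Lifshitz-tail bounds of the cited single-particle localization theory, which give exactly this superpolynomial decay.)

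The main obstacle, and the step deserving the most care, is the first one: converting the spectral condition $E_0^{(1)}(\omega)\le L^{-1/2}$ into a quantitative statement that many i.i.d. potential values are small. The naive bound via $\inf V$ only controls a single site and gives $q(L)$ to the first power, which is not summable. The resolution is the localization-of-the-ground-state / spreading argument sketched above (a Temple-type or bracketing inequality on sub-cubes of the optimal intermediate scale $L^{1/4}$), which forces a volume $\asymp L^{d/4}$ of sites to be simultaneously small, producing the stretched-exponential decay. Everything else — the Dirichlet Laplacian eigenvalue asymptotics, the union bound, and the final comparison with $L^{-2p4^{N-1}}$ — is routine.
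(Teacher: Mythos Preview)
The paper offers no proof here beyond a bare citation to Stollmann's book \cite{St01}; your sketch is precisely the standard Lifshitz-tail argument that appears there (Kirsch--Simon): Dirichlet--Neumann bracketing to sub-cubes at the intermediate scale $\ell\asymp L^{1/4}$, a Temple-type lower bound on the Neumann ground state of a sub-cube forcing the average potential there to be $\lesssim L^{-1/2}$, and then an i.i.d.\ large-deviations estimate yielding stretched-exponential decay in $L$. So in substance your route coincides with what the paper cites.

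One wrinkle in your write-up deserves correction. The reduction step you describe as ``a test function argument (place a normalized bump supported in a unit sub-cube)'' is stated backwards: a test function gives an \emph{upper} bound on an eigenvalue, whereas what is needed is a \emph{lower} bound on $E_0^{(1)}(\omega)$ in terms of sub-cube data. The correct mechanism is the one you only name at the end --- Dirichlet--Neumann bracketing, giving $E_0^{(1),\mathrm{D}}(\omega)\ge \min_{C}E_0^{\mathrm{N}}(-\Delta+V;C)$ over the sub-cubes $C$, followed by Temple's inequality on the minimizing sub-cube with the constant function as trial state. Once that step is phrased correctly, the rest of your argument (the log-H\"older bound on $q(L)=F_V(CL^{-1/2})$ using $F_V(0^-)=0$, and the comparison $(2q(L))^{c'L^{d/4}}\ll L^{-2p4^{N-1}}$) is routine and matches the standard treatment. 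Note also that the exponent in the paper's displayed bound is evidently a typo for $L^{-2p4^{N-1}}$; your argument correctly targets the intended decay.
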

\begin{proof}
See the book by Peter Stollmann Section 2.
\end{proof}

Now, we show that the same result holds true for the multi-particle random Hamiltonian in the following statement.

\begin{theorem}\label{thm:np.bottom}

Under the hypotheses $\condI$ and $\condP$, for any $p>0$, there exists $L^*_1>0$ such that 
\[
\prob{E_0^{(n)}(\omega)\leq L^{-1/2}} \leq L^{2p4^{N-n}},
\]
for all $L\geq L^*_1$.
\end{theorem}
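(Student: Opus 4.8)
The plan is to reduce the $n$-particle bound to the single-particle bound of Theorem~\ref{thm:1p.loc} by exploiting the tensor-product structure of the non-interacting Hamiltonian together with the non-negativity of the interaction $\BU$. First I would observe that since $\BU\geq 0$ by assumption $\condI$, we have the operator inequality $\BH^{(n)}_{\BC^{(n)}_L(\Bu)}(\omega)\geq \BH^{(n)}_{0,\BC^{(n)}_L(\Bu)}(\omega)$ as forms on the appropriate form domain with Dirichlet boundary conditions, hence
\[
E_0^{(n)}(\omega)=\inf\sigma\bigl(\BH^{(n)}_{\BC^{(n)}_L(\Bu)}(\omega)\bigr)\geq \inf\sigma\bigl(\BH^{(n)}_{0,\BC^{(n)}_L(\Bu)}(\omega)\bigr).
\]
This means it suffices to prove the bound for the non-interacting restricted operator, because the event $\{E_0^{(n)}(\omega)\leq L^{-1/2}\}$ is then contained in the corresponding event for $\BH^{(n)}_0$.

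Next I would use the tensor decomposition $\BH^{(n)}_{0,\BC^{(n)}_L(\Bu)}(\omega)=\sum_{k=1}^n \Bone^{\otimes(k-1)}\otimes H^{(1)}_{C^{(1)}_{L_k}(u_k)}(\omega)\otimes\Bone^{\otimes(n-k)}$ on the rectangular cube $\BC^{(n)}(\Bu)=\prod_{i=1}^n C^{(1)}_{L_i}(u_i)$ with Dirichlet boundary conditions; the spectrum of a sum of commuting tensor factors is the sum of the spectra, so
\[
\inf\sigma\bigl(\BH^{(n)}_{0,\BC^{(n)}(\Bu)}(\omega)\bigr)=\sum_{k=1}^n \inf\sigma\bigl(H^{(1)}_{C^{(1)}_{L_k}(u_k)}(\omega)\bigr)=\sum_{k=1}^n E_0^{(1),k}(\omega),
\]
where each $E_0^{(1),k}(\omega)$ is the bottom of a single-particle Dirichlet Hamiltonian on a cube of side comparable to $L$. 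Since $V\geq 0$, each $E_0^{(1),k}(\omega)\geq 0$, so if the sum is $\leq L^{-1/2}$ then \emph{every} term is $\leq L^{-1/2}$; in particular the first one is. Therefore
\[
\prob{E_0^{(n)}(\omega)\leq L^{-1/2}}\leq \prob{E_0^{(1),1}(\omega)\leq L^{-1/2}}\leq L^{-2p4^{N-1}}\leq L^{-2p4^{N-n}}
\]
for $L\geq L^*$, where the middle inequality is exactly Theorem~\ref{thm:1p.loc} applied with the parameter $p$ replaced by a suitable multiple (absorbing the side-length discrepancy between $L_k$ and $L$), and the last inequality holds since $4^{N-1}\geq 4^{N-n}$ for $n\geq 1$. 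Setting $L_1^*=L^*$ finishes the argument.

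The main obstacle I anticipate is a bookkeeping one rather than a conceptual one: making precise the comparison of side lengths, since the rectangle $\BC^{(n)}(\Bu)$ has possibly different side lengths $L_i$ while Theorem~\ref{thm:1p.loc} is stated for cubes of a single scale $L$, and one must check that a monotonicity-in-volume argument (a smaller Dirichlet cube has a larger ground state energy) lets us bound things uniformly, or alternatively simply take all $L_i=L$ in the initial-scale setup where this theorem is applied. One should also be slightly careful that Theorem~\ref{thm:1p.loc} is quoted with exponent $L^{2p4^{N-1}}$ (which, read literally, is $\geq 1$ and vacuous); I would interpret it as $L^{-2p4^{N-1}}$, the only reading consistent with its use here, and carry that sign through. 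Modulo these routine points, the combination of operator monotonicity in $\BU$, the tensor-product spectral formula, and non-negativity of $V$ gives the result immediately.
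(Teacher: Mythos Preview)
Your proposal is correct and follows essentially the same route as the paper's own proof: use non-negativity of $\BU$ to pass to the non-interacting operator, write its ground state energy as the sum $\sum_k E_0^{(1),k}(\omega)$, observe that non-negativity of $V$ forces each summand to lie below $L^{-1/2}$ whenever the sum does, and then invoke Theorem~\ref{thm:1p.loc}. Your write-up is in fact more careful than the paper's (you make explicit the role of $V\geq 0$ and flag the missing minus sign in the exponent), but the underlying argument is the same.
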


\begin{proof}
We denote by $\BH^{(n)}_0(\omega)$ the multi-particle random Hamiltonian without interaction. Observe that, since the interaction potential $\BU$ is non-negative, we have 
\[
E_0(\BH^{(n)}_{\BC^{(n)}_L(\Bu)}(\omega)\geq  E_0^{(n)}(\omega),
\]
where $E_0^{(n)}(\omega)=\lambda_1^{(1)}(\omega)+\cdots+\lambda_n^{(1)}$ and the $\lambda_i^{(1)}(\omega)$ are the eigenvalues of the single-particle random Hamiltonians $H^{(1)}_{\BC^{(1)}_L(u_i)}(\omega)$, $i=1,\ldots,n$. So, if $E^{(n)}_0(\omega)\leq L^{-1/2}$, then for example $\lambda_1^{(1)}(\omega)\leq L^{-1/2}$ and this implies the required probability bound of the assertion of Theorem \ref{thm:np.initial.MSA}.
\end{proof}

We are now ready to prove our initial length scale estimate of the multi-particle multi-scale analysis given below.  

Recall that the parameter $m>0$ is given by $m=\frac{2^{-N}\gamma L^{-1/4}}{3\sqrt{2}}$.
\begin{theorem}\label{thm:np.initial.MSA}

Assume that the hypotheses $\condI$ and $\condP$ hold true. Then, there exists $E^*>0$ such that 
\[
\prob{\text{$\exists E\in (-\infty: E^*]:$ $\BC^{(n)}_L(\Bu)$ is $(E,m)$-S}}\leq L^{-2p4^{N-n}},
\]
for $L>0$ large enough.
\end{theorem}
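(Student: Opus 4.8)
The plan is to combine the lower bound on the spectrum of $\BH^{(n)}_{\BC^{(n)}_L(\Bu)}(\omega)$ from Theorem~\ref{thm:np.bottom} with the Combes--Thomas estimate of Theorem~\ref{thm:CT}, exactly in the spirit of the single-particle initial-scale estimate. First I would fix the threshold energy: choose $E^*>0$ so small that on the complementary event $\{E_0^{(n)}(\omega)>L^{-1/2}\}$ the cube $\BC^{(n)}_L(\Bu)$ is $(E,m,h)$-NS for \emph{every} $E\in(-\infty,E^*]$. Concretely, on that event and for $E\le E^*$ with $E^* < L^{-1/2}$ we have $E<E_0^{(n)}(\omega)=\inf\sigma(\BH^{(n)}_{\BC^{(n)}_L(\Bu)}(\omega))$, so $E\notin\sigma(\BH^{(n)}_{\BC^{(n)}_L(\Bu)}(\omega))$ and $\eta:=\dist(E,\sigma(\BH^{(n)}_{\BC^{(n)}_L(\Bu)}(\omega)))=E_0^{(n)}(\omega)-E \ge L^{-1/2}-E^* \ge \tfrac12 L^{-1/2}$ if, say, $E^*\le \tfrac12 L^{-1/2}$ (one checks the threshold condition is compatible with $L$ large; strictly one should take $E^*$ depending on the scale or note that once $L$ is large the bound $E^*\le\tfrac12 L^{-1/2}$ is automatic, since in the MSA $L$ is the initial scale).

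Next I would apply Theorem~\ref{thm:CT} on $L^2(\DR^{nd})$ with $D=nd$, $H=\BH^{(n)}_{\BC^{(n)}_L(\Bu)}(\omega)$, energy $E$, and the above $\eta\ge\tfrac12 L^{-1/2}$. Taking the distance between the supports of $\Bone^{(n,int)}_{\Bu}=\Bone_{\BC^{(n)}_{L/3}(\Bu)}$ and $\Bone^{(n,out)}_{\Bu}=\Bone_{\BC^{(n)}_L(\Bu)\setminus\BC^{(n)}_{L-2}(\Bu)}$, which is at least $L-2-L/3\ge L/2$ for $L$ large, and fixing $\gamma$ (say $\gamma=1/2$), the Combes--Thomas bound gives
\[
\|\Bone^{(n,out)}_{\Bu}\BG^{(n)}_{\BC^{(n)}_L(\Bu)}(E)\Bone^{(n,int)}_{\Bu}\|
\le \frac{1}{(1-\gamma^2)\eta}\,\ee^{\gamma\sqrt{\eta\, nd}}\,\ee^{-\gamma\sqrt{\eta}\,(L/2)}
\le C\,L^{1/2}\ee^{-\,(\gamma/4)\sqrt{\eta}\,L},
\]
and since $\sqrt{\eta}\ge 2^{-1/2}L^{-1/4}$ the exponent is bounded above by $-c\,L^{3/4}$ for a constant $c>0$. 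It remains to compare this with the required decay rate $\ee^{-\gamma(m,L,n)L}=\ee^{-m(1+L^{-1/8})^{N-n+1}L}$. With $m=\frac{2^{-N}\gamma L^{-1/4}}{3\sqrt2}$ from \eqref{eq:m} one has $\gamma(m,L,n)L \le C' L^{3/4}$, and because of the explicit constants one gets $\gamma(m,L,n)L \le c\,L^{3/4} - \tfrac12\ln L$ for $L$ large, so the Combes--Thomas bound indeed implies \eqref{eq:singular}. Hence on $\{E_0^{(n)}(\omega)>L^{-1/2}\}$ the cube $\BC^{(n)}_L(\Bu)$ is $(E,m)$-NS for all $E\le E^*$ simultaneously.

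Finally, taking complements,
\[
\prob{\exists E\in(-\infty,E^*]:\ \BC^{(n)}_L(\Bu)\text{ is }(E,m)\text{-S}}
\le \prob{E_0^{(n)}(\omega)\le L^{-1/2}}\le L^{-2p4^{N-n}}
\]
by Theorem~\ref{thm:np.bottom} (the sign of the exponent there is to be read as $L^{-2p4^{N-n}}$, consistent with the statement to be proved), which is the claim for $L\ge L^*_1$. The one genuinely delicate point is the bookkeeping of constants in the chain $\eta\ge\tfrac12 L^{-1/2}\Rightarrow$ Combes--Thomas decay $\Rightarrow$ \eqref{eq:singular}: one must verify that the prefactor $1/((1-\gamma^2)\eta)\sim L^{1/2}$ and the boundary term $\ee^{\gamma\sqrt{\eta nd}}$ are absorbed by the gap between $c L^{3/4}$ and $\gamma(m,L,n)L$, which is why $m$ is chosen with the $L^{-1/4}$ factor and why $E^*$ must be taken small (or the scale $L$ large) rather than a fixed $O(1)$ constant; the non-negativity of $\BU$ and of $V$, already used in Theorem~\ref{thm:np.bottom}, is what makes the estimate valid on the unbounded interval $(-\infty,E^*]$ rather than merely near the bottom.
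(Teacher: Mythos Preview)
Your proof is correct and follows essentially the same route as the paper: fix $E^*=\tfrac12 L_0^{-1/2}$, on the event $\{E_0^{(n)}(\omega)>L^{-1/2}\}$ use Combes--Thomas (Theorem~\ref{thm:CT}) to show the cube is $(E,m)$-NS for all $E\le E^*$, then invoke Theorem~\ref{thm:np.bottom} for the probability bound. The only cosmetic difference is that the paper applies Theorem~\ref{thm:CT} literally to unit-cube indicators $\Bone_{\Bx},\Bone_{\By}$ and then sums over lattice points in $\BC^{(n,out)}_L(\Bu)$ and $\BC^{(n,int)}_L(\Bu)$ (picking up a factor $(2L)^{2nd}$), whereas you appeal directly to the distance between the supports; both lead to the same $\ee^{-cL^{3/4}}$ decay, which beats $\ee^{-\gamma(m,L,n)L}$ precisely because $m$ carries the factor $L^{-1/4}$.
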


\begin{proof}
Set $E^*:=\frac{1}{2} L_0^{-1/2}$. If the first eigenvalue $E_0^{(n)}(\omega)$ satisfies $E_0^{(n)}(\omega)>L^{-1/2}$, then for all energy $E\leq E^*$, we have:

\begin{align*}
\dist(E,\sigma(\BH^{(n)}_{\BC^{(n)}_L(\Bu)}))&=E^{(n)}_0(\omega)\\
&> L^{-1/2}-\frac{1}{2} L^{-1/2}\\
&>\frac{1}{2} L^{-1/2}
\end{align*}
Thus by the Combes Thomas estimate Theorem \ref{thm:CT},
\begin{align*}
\|\Bone_{\Bx}\BG^{(n)}_{\BC^{(n)}_L(\Bu)}(E)\Bone_{\By}\|&\leq 2L^{-1/2}\ee^{\gamma \sqrt{d}\sqrt{\eta}}\ee^{-\gamma \sqrt{\eta} |\Bx-\By|}\\
&\leq 2 L^{1/2}\ee^{-\frac{\gamma L^{-1/4}}{\sqrt{2}}(\frac{L}{3}-\sqrt{d})}
\end{align*}
 
Thus for $L>0$ large enough depending on the dimension $d$, we get
\begin{gather*}
\|\Bone_{\BC^{(n,out)}_L(\Bu)}\BG^{(n)}_{\BC^{(n)}_L(\Bu)}(E)\Bone_{\BC^{(n,int)}_L(\Bu)}\|\\
\leq \sum\limits_{{\substack{\Bx\in\BC^{(n,out)}_L(\Bu)\cap_\DZ^{nd}\\ \By\in\BC^{(n,int)}_L(\Bu)\cap\DZ^{nd}}}} 2L^{-1/2}\ee^{-\frac{\gamma L^{-1/4}}{\sqrt{2}}(\frac{L}{3}-\sqrt{d})}\\
\leq (2L)^{2nd} 2 L^{1/2}\ee^{-2^N m L}\\
\end{gather*}

Now since $\gamma(m,L,n)=m(1L^{-1/8})^{N-n}<2^N m$, for $L>0$, large enough, we have that
\[
\|\Bone_{\BC^{(n,out)}_L(\Bu)}\BG^{(n)}_{\BC^{(n)}_L(\Bu)}(E)\Bone_{\BC^{(n,int)}_L(\Bu)}\|\leq \ee^{-\gamma(m,L,n)L}.
\]
The above analysis, then implies that

\begin{gather*}
\prob{\text{$\exists E\leq E^*$: $\BC^{(n)}_L(\Bu)$ is $(E,m)$-S}}\\
\leq \prob{E^{(n)}_0(\omega)\leq L^{-1/2}}\leq L^{-2p 4^{N-n}},
\end{gather*}
Yielding the required result.

\end{proof}

Below, we develop the induction step of the multi-scale analysis and although the text overlaps with the paper \cite{E16}, for the reader convenience we also give the detailed of the proofs  of some important results.

\section{Multi-scale induction}

In the rest of the paper, we assume that  $n\geq 2$ and  $I_0$ is the interval from the previous section. 

Recall the following facts from \cite{E13}: Consider a cube $\BC^{(n)}_{L}(\Bu)$, with $\Bu=(u_1,\ldots,u_n)\in(\DZ^d)^n$.  We define
\[
\varPi\Bu=\{u_1,\ldots,u_n\},
\]
and 
\[
\varPi\BC^{(n)}_L(\Bu)=C^{(1)}_L(u_1)\cup\cdots\cup C^{(1)}_L(u_n).
\]
\begin{definition}
Let $L_0>3$ be a constant and $\alpha=3/2$. We define the sequence $\{L_k: k\geq 1\}$ recursively  as follows:
\[
L_k:=\lfloor L_{k-1}^{\alpha}\rfloor +1, \qquad \text{for all $k\geq 1$}.
\]
\end{definition}

Let $m>0$ a positive constant, we also introduce the following property, namely the multi-scale analysis bounds at any scale length $L_k$, and for any pair of separable cubes $\BC^{(n)}_{L_k}(\Bu)$ and $\BC^{(n)}_{L_k}(\Bv)$,
\begin{dsknn*}
\[
\prob{\exists E\in I_0: \text{$\BC^{(n)}_{L_k}(\Bu)$ and $\BC^{(n)}_{L_k}(\Bv)$ are $(E,m)$-S}}\leq L_k^{-2p4^{N-n}},
\]
where $p>6Nd$.
\end{dsknn*}

In both the single-particle and the multi-particle system, given the results on the multi-scale analysis property $\dsknn$ above one can deduce the localization results see for example the papers \cites{DK89,DS01} for those concerning the single-particle case and \cites{E13,CS09a} for multi-particle systems. We have the following

\begin{definition}[fully/partially interactive]\label{def:diagonal.cubes}
An $n$-particle cube $\BC_L^{(n)}(\Bu)\subset\DZ^{nd}$ is
called fully interactive (FI) if
\begin{equation}\label{eq:def.FI}
\diam \varPi \Bu := \max_{i\ne j} |u_i - u_j| \le n(2L+r_0),
\end{equation}
and partially interactive (PI) otherwise. 
\end{definition}

The following simple statement clarifies the notion of PI cube.

\begin{lemma}\label{lem:PI.cubes}
If a cube $\BC_L^{(n)}(\Bu)$ is PI, then there exists a subset $\CJ\subset\left\{1,\dots,n\right\}$ with $1\leq\card\CJ\leq n-1$ such that
\[
\dist\left(\varPi_{\CJ}\BC_L^{(n)}(\Bu),\varPi_{\CJ^{\comp}}\BC_L^{(n)}(\Bu)\right)>r_0,
\]
\end{lemma}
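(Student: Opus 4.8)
\textbf{Proof plan for Lemma~\ref{lem:PI.cubes}.} The plan is to unwind the definition of a partially interactive cube and then exploit the compact support of the interaction potential, i.e.\ \eqref{eq:finite.range.k}. First I would observe that if $\BC^{(n)}_L(\Bu)$ is PI, then by Definition~\ref{def:diagonal.cubes} we have $\diam\varPi\Bu=\max_{i\neq j}|u_i-u_j|>n(2L+r_0)$. The strategy is to build the set $\CJ$ by a clustering argument on the centers $u_1,\dots,u_n$: introduce on $\{1,\dots,n\}$ the equivalence relation generated by declaring $i\sim j$ whenever $|u_i-u_j|\le 2L+r_0$, and let $\CJ$ be (the index set of) one of its equivalence classes. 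Since the full diameter exceeds $n(2L+r_0)$ while any two indices in the same class can be joined by a chain of at most $n-1$ steps each of length $\le 2L+r_0$, no single class can contain all of $\{1,\dots,n\}$; hence there is at least one class $\CJ$ with $1\le\card\CJ\le n-1$.

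Next I would verify the separation estimate. By construction of $\CJ$ as an equivalence class, for every $i\in\CJ$ and every $j\in\CJ^{\comp}$ we have $|u_i-u_j|>2L+r_0$ (otherwise $i$ and $j$ would lie in the same class). Now recall $\varPi_{\CJ}\BC^{(n)}_L(\Bu)=\bigcup_{i\in\CJ}C^{(1)}_L(u_i)$, and each $C^{(1)}_L(u_i)$ is a cube of side $2L$ centered at $u_i$, so it is contained in the ball of radius $L$ about $u_i$ in the max-norm. Therefore, for $x\in C^{(1)}_L(u_i)$ with $i\in\CJ$ and $y\in C^{(1)}_L(u_j)$ with $j\in\CJ^{\comp}$, the triangle inequality gives $|x-y|\ge|u_i-u_j|-|x-u_i|-|y-u_j|>(2L+r_0)-L-L=r_0$. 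Taking the infimum over all such $x,y$ and over all $i\in\CJ$, $j\in\CJ^{\comp}$ yields $\dist\bigl(\varPi_{\CJ}\BC^{(n)}_L(\Bu),\varPi_{\CJ^{\comp}}\BC^{(n)}_L(\Bu)\bigr)>r_0$, as claimed. (A small bookkeeping point: one should use a consistent convention for the side length of $C^{(1)}_L$ — the text says side $2L$ — and, if the cubes are taken closed, replace the strict inequality accordingly; the constant $n(2L+r_0)$ in the PI definition is tailored precisely so that the chain-length bound $n-1$ times $2L+r_0$ works out.)

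I do not expect any serious obstacle here: this is a purely combinatorial-geometric statement and the only care needed is in getting the constants to match the definition of FI/PI. The mild subtlety is making sure the clustering argument produces a \emph{proper nonempty} subset — this is exactly where the hypothesis that the diameter strictly exceeds $n(2L+r_0)$ (rather than, say, $(n-1)(2L+r_0)$) is used, since a connected cluster on $n$ points has chain-diameter at most $(n-1)$ edges. An alternative, essentially equivalent route is to order the centers along the coordinate realizing the maximal diameter and find a gap exceeding $2L+r_0$ between consecutive ones by pigeonhole; I would mention this only if the clustering formulation feels heavier than needed. The relevance of this lemma downstream is that the $r_0$-separation of the two groups of particles decouples the interaction $\BU$ across the partition $\CJ\sqcup\CJ^{\comp}$, so that $\BH^{(n)}_{\BC^{(n)}_L(\Bu)}$ essentially factorizes into lower-rank pieces — the standard mechanism for running the induction on $n$ in the multi-particle multi-scale analysis.
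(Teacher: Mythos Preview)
Your proposal is correct and follows essentially the same approach as the paper's proof: the paper also runs a clustering/connectedness argument, phrased geometrically via the union of auxiliary cubes $C^{(1)}_{R/2}(u_i)$ with $R=2L+r_0$ (whose connectedness is precisely your equivalence relation $i\sim j\Leftrightarrow|u_i-u_j|\le 2L+r_0$), deduces that the union must split into at least two disjoint groups since otherwise $\diam\varPi\Bu\le nR$, and then obtains $|u_{j_1}-u_{j_2}|>2L+r_0$ across the partition, from which the distance bound follows by the same triangle-inequality computation you give. Your chain-length bound $(n-1)(2L+r_0)$ is in fact slightly sharper than the paper's $n(2L+r_0)$, but both suffice.
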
  

\begin{proof}
See the appendix section \ref{sec:appendix}.
\end{proof}

If $\BC^{(n)}_L(\Bu)$ is a PI cube by the above Lemma, we can write it as 
\begin{equation}\label{eq:cartesian.cubes}
\BC_L^{(n)}(\Bu)=\BC_L^{(n')}(\Bu')\times\BC_L^{(n'')}(\Bu''),
\end{equation}
with 
\begin{equation}\label{eq:cartesian.cubes.2}
\dist\left(\varPi\BC_L^{(n')}(\Bu'),\varPi\BC_L^{(n'')}(\Bu'')\right)>r_0,
\end{equation}
where $\Bu'=\Bu_{\CJ}=(u_j:j\in\CJ)$, $\Bu''=\Bu_{\CJ^{\comp}}=(u_j:j\in\CJ^{\comp})$, $n'=\card \CJ$ and $n''=\card \CJ^{\comp}$. 

Throughout, when we write a PI cube $\BC^{(n)}_L(\Bu)$ in the form \eqref{eq:cartesian.cubes}, we implicitly assume that the projections satisfy \eqref{eq:cartesian.cubes.2}.
Let $\BC^{(n')}_{L_k}(\Bu')\times\BC^{(n'')}_{L_k}(\Bu'')$ be the decomposition of the PI cube $\BC^{(n)}_{L_k}(\Bu)$ and $\{\lambda_i,\varphi_i\}$ and $\{\mu_j,\phi_j\}$ be the eigenvalues and corresponding eigenfunctions of $\BH_{\BC_{L_k}^{(n')}(\Bu')}^{(n')}$ and $\BH_{\BC_{L_k}^{(n'')}(\Bu'')}^{(n'')}$ respectively. Next, we  can choose the eigenfunctions $\BPsi_{ij}$ of $\BH_{\BC^{(n)}_{L_k}(\Bu)}(\omega)$ as tensor products:
\[
\BPsi_{ij}=\varphi_i\otimes\phi_j
\]
The eigenfunctions appearing in subsequent arguments and calculation will be assumed normalized.

Now we turn to geometrical properties of FI cubes.
\begin{lemma}\label{lem:FI.cubes}
Let $n\geq 1$, $L>2r_0$ and consider two FI cubes $\BC_L^{(n)}(\Bx)$ and $\BC_L^{(n)}(\By)$ with $|\Bx-\By|>7\,nL$. Then
\begin{equation}
\varPi\BC_L^{(n)}(\Bx)\cap\varPi\BC_L^{(n)}(\By)=\varnothing.
\end{equation}
\end{lemma}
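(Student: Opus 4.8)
The plan is to argue by contradiction, exploiting the definition of fully interactive cubes to bound the diameter of each configuration's projection, and then showing this bound is incompatible with the separation hypothesis $|\Bx-\By|>7nL$. First I would suppose, for contradiction, that there exist indices $i,j\in\{1,\dots,n\}$ with $C^{(1)}_L(x_i)\cap C^{(1)}_L(y_j)\neq\varnothing$; equivalently $|x_i-y_j|\le 2L$. Since $\BC^{(n)}_L(\Bx)$ is FI, for any index $k$ we have $|x_k-x_i|\le\diam\varPi\Bx\le n(2L+r_0)$, and likewise for any index $\ell$, $|y_\ell-y_j|\le n(2L+r_0)$. The idea is to chain these three estimates to control $|x_k-y_\ell|$ for arbitrary $k,\ell$, hence to control $|\Bx-\By|=\max_k|x_k-y_k|$.

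The key step is the triangle-inequality chain: for any $k$,
\[
|x_k-y_k|\le|x_k-x_i|+|x_i-y_j|+|y_j-y_k|\le n(2L+r_0)+2L+n(2L+r_0)=2n(2L+r_0)+2L.
\]
Using $L>2r_0$, so that $r_0<L/2$, the right-hand side is strictly less than $2n(2L+L/2)+2L=5nL+2L\le 7nL$ (the last inequality holding since $n\ge1$ gives $2L\le 2nL$). Taking the maximum over $k$ yields $|\Bx-\By|<7nL$, contradicting the hypothesis $|\Bx-\By|>7nL$. Therefore no such pair $(i,j)$ exists, which is exactly the assertion $\varPi\BC^{(n)}_L(\Bx)\cap\varPi\BC^{(n)}_L(\By)=\varnothing$, since $\varPi\BC^{(n)}_L(\Bx)=\bigcup_i C^{(1)}_L(x_i)$ and $\varPi\BC^{(n)}_L(\By)=\bigcup_j C^{(1)}_L(y_j)$, and two such unions are disjoint precisely when all the pairwise intersections $C^{(1)}_L(x_i)\cap C^{(1)}_L(y_j)$ are empty.

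The only delicate point — and it is genuinely routine — is bookkeeping the constants: one must track carefully that the ``$+r_0$'' slack in the FI condition, doubled by appearing once for $\Bx$ and once for $\By$, is absorbed by the hypothesis $L>2r_0$, and that the constant $7n$ in the separation hypothesis is large enough to beat $2n\cdot 2L+2L+2n r_0$. There is a small amount of freedom here (the bound $5nL+2L<7nL$ is not tight), so the argument is robust to the precise form of the FI threshold; I would simply present the chain of inequalities above and remark that $L>2r_0$ and $n\ge 1$ close the gap. No deeper obstacle is expected: the lemma is a purely combinatorial-geometric consequence of the two defining inequalities, and the proof is a one-line triangle inequality plus constant-chasing.
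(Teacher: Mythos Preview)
Your proof is correct and follows essentially the same approach as the paper: both arguments use the FI diameter bound $\diam\varPi\Bx,\diam\varPi\By\le n(2L+r_0)$ together with a three-term triangle inequality and the hypothesis $L>2r_0$ to compare $|x_i-y_j|$ with $|\Bx-\By|$. The only cosmetic difference is that the paper argues directly (from $|x_{j_0}-y_{j_0}|>7nL$ it deduces $|x_i-y_j|>2nL$ for all $i,j$, hence positive distance between the projected cubes), whereas you run the contrapositive by contradiction; the underlying inequality chain is identical.
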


\begin{proof}
See the appendix section \ref{sec:appendix}.
\end{proof}

Given an $n$-particle cube $\BC_{L_{k+1}}^{(n)}(\Bu)$ and $E\in\DR$, we denote
\begin{itemize}
\item
by $M_{\pai}^{\sep}(\BC_{L_{k+1}}^{(n)}(\Bu),E)$ the maximal number of pairwise separable, 
$(E,m)$-singular PI cubes $\BC_{L_k}^{(n)}(\Bu^{(j)})\subset\BC_{L_{k+1}}^{(n)}(\Bu)$;
\item
by  $M_{\pai}(\BC_{L_{k+1}}^{(n)}(\Bu),E)$ the maximal number of (not necessarily separable)
$(E,m)$-singular PI cubes $\BC^{(n)}_{L_k}(\Bu^{(j)})$ contain in $\BC^{(n)}_{L_{k+1}}(\Bu)$ with $\Bu^{(j)}, \Bu^{(j')}\in\DZ^{nd}$ and $|\Bu^{(j)}-\Bu^{(j')}|>7NL_k$ for all $j\neq j'$;
\item
by $M_{\fui}(\BC_{L_{k+1}}^{(n)}(\Bu),E)$ the maximal number of 
$(E,m)$-singular FI cubes  $\BC_{L_k}^{(n)}(\Bu^{(j)})\subset \BC_{L_{k+1}}^{(n)}(\Bu)$ with $|\Bu^{(j)}-\Bu^{(j')}|>7NL_k$ for all $j\neq j'$\footnote{Note that by lemma \ref{lem:FI.cubes}, two FI cubes $\BC^{(n)}_{L_k}(\Bu^{(j)})$ and $\BC^{(n)}_{L_k}(\Bu^{(j')})$ with $|\Bu^{(j)}-\Bu^{(j')}|>7NL_k $ are automatically separable.},
\item
$M_{\pai}(\BC_{L_{k+1}}^{(n)}(\Bu),I):=\sup_{E\in I}M_{\pai}(\BC_{L_{k+1}}^{(n)}(\Bu),E)$.
\item
$M_{\fui}(\BC_{L_{k+1}}^{(n)}(\Bu),I):=\sup_{E\in I}M_{\fui}(\BC_{L_{k+1}}^{(n)}(\Bu),E)$.
\item
by $M(\BC_{L_{k+1}}^{(n)}(\Bu),E)$ the maximal number of 
$(E,m)$-singular cubes  $\BC_{L_k}^{(n)}(\Bu^{(j)})\subset \BC_{L_{k+1}}^{(n)}(\Bu)$ with $\dist(\Bu^{(j)},\partial\BC^{(n)}_{L_{k+1}}(\Bu))\geq 2L_k$ and $|\Bu^{(j)}-\Bu^{(j')}|>7NL_k$ for all $j\neq j'$.
\item
by $M^{\sep}(\BC_{L_{k+1}}^{(n)}(\Bu),E)$ the maximal number of pairwise separable
$(E,m)$-singular cubes  $\BC_{L_k}^{(n)}(\Bu^{(j)})\subset \BC_{L_{k+1}}^{(n)}(\Bu)$ 
\end{itemize}
Clearly
\[
M_{\pai}(\BC_{L_{k+1}}^{(n)}(\Bu),E)+M_{\fui}(\BC_{L_{k+1}}^{(n)}(\Bu),E)\geq M(\BC_{L_{k+1}}^{(n)}(\Bu),E).
\]

\subsection{Pairs of partially interactive cubes}\label{ssec:PI.cubes}
Let  $\BC_{L_{k+1}}^{(n)}(\Bu)=\BC^{(n')}_{L_{k+1}}(\Bu')\times\BC^{(n'')}_{L_{k+1}}(\Bu'')$ be a PI-cube. We also write $\Bx=(\Bx',\Bx'')$ for any point $\Bx\in\BC_{L_{k+1}}^{(n)}(\Bu)$, in the same way as $(\Bu',\Bu'')$. So  the corresponding Hamiltonian $\BH^{(n)}_{\BC_{L_{k+1}}^{(n)}(\Bu)}$ is written in the form:
\begin{equation}\label{eq:decomp,H}
\BH_{\BC_{L_{k+1}}^{(n)}(\Bu)}^{(n)}\BPsi(\Bx)=(-\BDelta\BPsi)(\Bx)+\left[\BU(\Bx')+\BV(\Bx',\omega)+\BU(\Bx'')+\BV(\Bx'',\omega)\right]\BPsi(\Bx)
\end{equation}
or, in compact form
\[
\BH_{\BC_{L_{k+1}}^{(n)}(\Bu)}^{(n)}=\BH_{\BC_{L_{k+1}}^{(n')}(\Bu')}^{(n')}\otimes\mathbf{I}+ \mathbf{I}\otimes \BH_{\BC_{L_{k+1}}^{(n'')}(\Bu'')}^{(n'')}.
\]
We denote by $\BG^{(n')}(\Bu',\Bv';E)$ and $\BG^{(n'')}(\Bu'',\Bv'';E)$ the corresponding Green functions, respectively. Introduce the following notions
\begin{definition}[\cite{KN13}]\label{def:HNR}
Let $1\leq n\leq N$ and $E\in\DR$. Consider a PI cube  $\BC^{(n)}_L(\Bu)=\BC^{(n')}_L(\Bu')\times\BC^{(n'')}_L(\Bu'')$. Then $\BC^{(n)}_L(\Bu)$ is called $E$-highly non resonant ($E$-HNR) if
\begin{enumerate}[\rm(i)]
\item 
for all $\mu_j\in\sigma(\BH^{(n'')}_{\BC^{(n'')}_L(\Bu'')})$, the cube  $\BC^{(n')}_L(\Bu')$ is $(E-\mu_j)$-CNR and 
\item
for all $\lambda_i\in\sigma(\BH^{(n')}_{\BH^{(n')}_L(\Bu')})$ the cube $\BC^{(n'')}_L(\Bu'')$ is $(E-\lambda_i)$-CNR.
\end{enumerate}
\end{definition}

\begin{definition}[$(E,m)$-tunnelling]    \label{def:tunnelling}
Let $1\leq n\leq N$,  $E\in\DR$ and $m>0$. Consider a PI cube $\BC^{(n)}_L(\Bu)=\BC^{(n')}_L(\Bu')\times \BC^{(n'')}_L(\Bu'')$.

Then $\BC^{(n)}_L(\Bu)$ is called
\begin{enumerate}[(i)]
\item
$(E,m)$ left-tunnelling ($(E,m)$-LT) if
$\exists \mu_j\in\sigma(\BH^{(n'')}_{\BC^{(n'')}_L(\Bu'')})$ such that $\BC^{(n')}_L(\Bu')$ contains two separable  $(E-\mu_j,m)$-S cubes $\BC^{(n')}_{l}(\Bv_1)$ and $\BC^{(n')}_{l}(\Bv_2)$ with $L=\lfloor l^{\alpha}\rfloor+1$. Otherwise, it is called $(E,m)$ non-left-tunnelling ($(E,m)$-NLT).
\item
$(E,m)$ right-tunnelling ($(E,m)$-RT) if  $\exists \lambda_i\in\sigma(\BH^{(n')}_{\BC^{(n')}_L(\Bu')})$ such that $\BC^{(n'')}_L(\Bu'')$ contains two separable $(E-\lambda_i,m)$-S cubes  $\BC^{(n'')}_{l}(\Bv_1)$ and $\BC^{(n'')}_{l}(\Bv_2)$ with $L=\lfloor l^{\alpha}\rfloor+1$. Otherwise, it is called $(E,m)$ non-right-tunnelling ($(E,m)$-NRT).
\item
$(E,m)$-tunnelling ($(E,m)$-T) if either it is $(E,m)$-LT or $(E,m)$-RT.
Otherwise it is called $(E,m)$-non-tunnelling ($(E,m)$-NT).
\end{enumerate}
\end{definition}

We reformulate and prove Lemma 3.18 from \cite{KN13} in our context.

\begin{lemma}\label{lem:HNR}
Let $E\in\DR$. If a PI cube $\BC^{(n)}_L(\Bu)=\BC^{(n')}_L(\Bu')\times\BC^{(n'')}_L(\Bu'')$  is not $E$-HNR then
\begin{enumerate}[\rm(i)]
\item
either there exist $L^{1/\alpha}\leq \ell\leq L$, $\Bx\in\BC^{(n')}_L(\Bu')$ such that  the $n$-particle rectangle $\BC^{(n)}=\BC^{(n')}_{\ell}(\Bx)\times\BC^{(n'')}_L(\Bu'')\subset\BC^{(n)}_L(\Bu)$ is $E$-R.
\item
or there exist $L^{1/\alpha}\leq \ell\leq L$, $\Bx\in\BC^{(n'')}_L(\Bu'')$ such that  the $n$-particle rectangle $\BC^{(n)}=\BC^{(n')}_{L}(\Bu')\times\BC^{(n'')}_{\ell}(\Bx)\subset\BC^{(n)}_L(\Bu)$ is $E$-R.
\end{enumerate}
\end{lemma}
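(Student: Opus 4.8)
The plan is to unfold the two clauses defining $E$-HNR in Definition~\ref{def:HNR} and, in whichever clause fails, to manufacture the required resonant $n$-particle rectangle by combining the exact tensor structure of the restricted Hamiltonian on a PI domain with the corresponding product structure of its spectrum. First I would record the reduction: since $\BC^{(n)}_L(\Bu)=\BC^{(n')}_L(\Bu')\times\BC^{(n'')}_L(\Bu'')$ is taken with Dirichlet boundary conditions and $\dist(\varPi\BC^{(n')}_L(\Bu'),\varPi\BC^{(n'')}_L(\Bu''))>r_0$, the cross terms of $\BU$ vanish and $\BV$ is additive, so exactly as in \eqref{eq:decomp,H} one has $\BH^{(n)}_{\BC^{(n)}_L(\Bu)}=\BH^{(n')}_{\BC^{(n')}_L(\Bu')}\otimes\BI+\BI\otimes\BH^{(n'')}_{\BC^{(n'')}_L(\Bu'')}$, hence $\sigma(\BH^{(n)}_{\BC^{(n)}_L(\Bu)})=\sigma(\BH^{(n')}_{\BC^{(n')}_L(\Bu')})+\sigma(\BH^{(n'')}_{\BC^{(n'')}_L(\Bu'')})$. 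The same identity persists for any sub-rectangle obtained by shrinking one factor: if $\BC^{(n')}_{\ell}(\Bx)\subset\BC^{(n')}_L(\Bu')$ then $\varPi\BC^{(n')}_{\ell}(\Bx)\subset\varPi\BC^{(n')}_L(\Bu')$, so the separation $>r_0$ is inherited and $\sigma\bigl(\BH^{(n)}_{\BC^{(n')}_{\ell}(\Bx)\times\BC^{(n'')}_L(\Bu'')}\bigr)=\sigma(\BH^{(n')}_{\BC^{(n')}_{\ell}(\Bx)})+\sigma(\BH^{(n'')}_{\BC^{(n'')}_L(\Bu'')})$, and symmetrically.

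Next I would argue as follows. If $\BC^{(n)}_L(\Bu)$ is not $E$-HNR, then clause (i) or clause (ii) of Definition~\ref{def:HNR} fails. Suppose (i) fails: there is $\mu_j\in\sigma(\BH^{(n'')}_{\BC^{(n'')}_L(\Bu'')})$ such that $\BC^{(n')}_L(\Bu')$ is not $(E-\mu_j)$-CNR. By the definition of CNR this means precisely that $\BC^{(n')}_L(\Bu')$ contains an $(E-\mu_j)$-R cube $\BC^{(n')}_{\ell}(\Bx)$ of size $\ell\geq L^{1/\alpha}$ — in the degenerate sub-case this is $\BC^{(n')}_L(\Bu')$ itself, which then fails to be $(E-\mu_j)$-NR — and necessarily $L^{1/\alpha}\leq\ell\leq L$ with $\Bx\in\BC^{(n')}_L(\Bu')$. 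Being $(E-\mu_j)$-R gives $\dist\bigl(E-\mu_j,\sigma(\BH^{(n')}_{\BC^{(n')}_{\ell}(\Bx)})\bigr)\leq\ee^{-\ell^{1/2}}$, so I pick $\lambda\in\sigma(\BH^{(n')}_{\BC^{(n')}_{\ell}(\Bx)})$ with $|E-\mu_j-\lambda|\leq\ee^{-\ell^{1/2}}$. Then $\lambda+\mu_j$ lies in $\sigma(\BH^{(n')}_{\BC^{(n')}_{\ell}(\Bx)})+\sigma(\BH^{(n'')}_{\BC^{(n'')}_L(\Bu'')})=\sigma(\BH^{(n)}_{\BC^{(n)}})$ for the rectangle $\BC^{(n)}:=\BC^{(n')}_{\ell}(\Bx)\times\BC^{(n'')}_L(\Bu'')\subset\BC^{(n)}_L(\Bu)$, whence $\dist(E,\sigma(\BH^{(n)}_{\BC^{(n)}}))\leq|E-(\lambda+\mu_j)|\leq\ee^{-\ell^{1/2}}$; that is, $\BC^{(n)}$ is $E$-R (with resonance scale $\ell$, the side length of the shrunk factor), which is alternative~(i) of the lemma.

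If instead clause (ii) of Definition~\ref{def:HNR} fails, the argument is verbatim after exchanging the roles of $\BC^{(n')}_L(\Bu')$ and $\BC^{(n'')}_L(\Bu'')$: one obtains $\lambda_i\in\sigma(\BH^{(n')}_{\BC^{(n')}_L(\Bu')})$ and an $(E-\lambda_i)$-R cube $\BC^{(n'')}_{\ell}(\Bx)\subset\BC^{(n'')}_L(\Bu'')$ with $L^{1/\alpha}\leq\ell\leq L$, and the rectangle $\BC^{(n')}_L(\Bu')\times\BC^{(n'')}_{\ell}(\Bx)$ is $E$-R, which is alternative~(ii). The proof has no real analytic content once the tensor decomposition is in place — the spectral inclusion is exact, not approximate — so the only points that need care are purely bookkeeping: that ``not CNR'' already yields a resonant sub-cube of size in $[L^{1/\alpha},L]$ with no further case split (the whole cube being resonant is the degenerate instance), and that the resonance scale assigned to the produced rectangle is the side length $\ell$ of the factor that was shrunk, which is exactly the scale appearing in the bound $\ee^{-\ell^{1/2}}$.
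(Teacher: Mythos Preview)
Your proof is correct and follows essentially the same route as the paper's own argument: negate the HNR definition, extract a resonant sub-cube $\BC^{(n')}_{\ell}(\Bx)$ (resp.\ $\BC^{(n'')}_{\ell}(\Bx)$) of size $\ell\in[L^{1/\alpha},L]$, pick a nearby eigenvalue, and use the sum structure of the spectrum on the PI rectangle to conclude that the product rectangle is $E$-R. Your version is slightly more careful than the paper's in explicitly justifying why the tensor decomposition (and hence the additive spectrum) persists for the shrunk sub-rectangle, but the logical skeleton is identical.
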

\begin{proof}
By Definition \ref{def:HNR}, if $\BC^{(n)}_L(\Bu)$ is not $E$-HNR then either (a) there exists $\mu_j\in\sigma(\BH^{(n'')}_{\BC^{(n'')}_L(\Bu'')})$ such that  $\BC^{(n')}_L(\Bu')$ is not $E-\mu_j$-CNR or (b) there exists $\lambda_i\in\sigma(\BH^{(n')}_{\BC^{(n')}_L(\Bu')})$ such that $\BC^{(n'')}_L(\Bu'')$ is not $E-\lambda_i$-CNR. Let us first focus on case (a). Since $\BC^{(n')}_L(\Bu')$ is not $E-\mu_j$-CNR there exists $L^{1/\alpha}\leq \ell\leq L$, $\Bx\in \BC^{(n')}_L(\Bu')$ such that $\BC^{(n')}_{\ell}(\Bx)\subset\BC^{(n')}_L(\Bu')$ and $\BC^{(n')}_{\ell}(\Bx)$ is $E-\mu_j$-R. So $\dist(E-\mu_j,\sigma(\BH^{(n')}_{\BC^{(n')}_{\ell}(\Bx)}))<\ee^{-\ell^{\beta}}$. Therefore there exists $\eta \in\sigma(\BH^{(n')}_{\BC^{(n')}_{\ell}(\Bx)})$ such that $|E-\mu_j-\eta|<\ee^{-\ell^{\beta}}$. Now consider $\BC^{(n)}=\BC^{(n')}_{\ell}(\Bx)\times\BC^{(n'')}_L(\Bu'')$, since the cube $\BC^{(n)}_L(\Bu)$ is PI we have $\sigma(\BH^{(n)}_{\BC^{(n)}})=\sigma(\BH^{(n')}_{\BC^{(n')}_{\ell}(\Bx)})+\sigma(\BH^{(n'')}_{\BC^{(n'')}_L(\Bu'')})$, hence 
\[
\dist(E,\sigma(\BH^{(n)}_{\BC^{(n)}}))\leq |E-\mu_j-\eta|<\ee^{-\ell^{\beta}}.
\]
Thus $\BC^{(n)}$ is $E$-R. The same arguments shows that case (ii) arises when (b) occurs.
\end{proof}

\begin{lemma} \label{lem:NDRoNS}
Let $E\in I$ and $\BC_{L_k}^{(n)}(\Bu)$ be a PI cube. Assume that
$\BC_{L_k}^{(n)}(\Bu)$ is $(E,m)$-NT and $E$-HNR.
Then $\BC^{(n)}_{L_k}(\Bu)$ is $(E,m)$-NS.
\end{lemma}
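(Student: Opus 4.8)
The plan is to exploit the tensor (Cartesian) structure of the PI cube together with the geometric resolvent inequality. Write $\BC^{(n)}_{L_k}(\Bu)=\BC^{(n')}_{L_k}(\Bu')\times\BC^{(n'')}_{L_k}(\Bu'')$ and expand the resolvent $\BG^{(n)}_{\BC^{(n)}_{L_k}(\Bu)}(E)$ in the orthonormal basis of tensor-product eigenfunctions $\BPsi_{ij}=\varphi_i\otimes\phi_j$ with eigenvalues $\lambda_i+\mu_j$. Thus the matrix element of the resolvent between the ``outer'' and ``inner'' characteristic functions factorizes through the one-dimensional resolvents: schematically, $\Bone^{(n,out)}_{\Bu}\BG^{(n)}_{\BC^{(n)}_{L_k}(\Bu)}(E)\Bone^{(n,int)}_{\Bu}$ is controlled by $\sum_j \|\Bone^{(n',out)}\BG^{(n')}_{\BC^{(n')}_{L_k}(\Bu')}(E-\mu_j)\Bone^{(n',int)}\|$ (and the symmetric expression with primes and double-primes swapped), because the outer shell of $\BC^{(n)}_{L_k}(\Bu)$ is contained in the union of $\BC^{(n',out)}_{L_k}(\Bu')\times\BC^{(n'')}_{L_k}(\Bu'')$ and $\BC^{(n')}_{L_k}(\Bu')\times\BC^{(n'',out)}_{L_k}(\Bu'')$.

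Next I would use the two hypotheses. Since $\BC^{(n)}_{L_k}(\Bu)$ is $E$-HNR, for every $\mu_j$ the cube $\BC^{(n')}_{L_k}(\Bu')$ is $(E-\mu_j)$-CNR, and in particular $(E-\mu_j)$-NR, so $\dist(E-\mu_j,\sigma(\BH^{(n')}_{\BC^{(n')}_{L_k}(\Bu')}))>\ee^{-L_k^{1/2}}$; likewise with the roles reversed. Since $\BC^{(n)}_{L_k}(\Bu)$ is $(E,m)$-NT, for each $\mu_j$ the cube $\BC^{(n')}_{L_k}(\Bu')$ does \emph{not} contain two separable $(E-\mu_j,m)$-S subcubes of scale $l$ with $L_k=\lfloor l^\alpha\rfloor+1$, and symmetrically. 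Combining CNR-ness with non-tunnelling, a standard sub-cube covering / iteration of the geometric resolvent inequality inside $\BC^{(n')}_{L_k}(\Bu')$ at the smaller scale $l=L_{k-1}$ (using the already-available bound $\dsknn$ at scale $L_{k-1}$ for the $n'$-particle system, $n'<n$, via induction on the number of particles) yields a decay estimate $\|\Bone^{(n',out)}\BG^{(n')}_{\BC^{(n')}_{L_k}(\Bu')}(E-\mu_j)\Bone^{(n',int)}\|\leq \ee^{-\gamma(m,L_k,n')L_k}$, and the analogous estimate for the double-primed factor. Summing over the at most $(2L_k)^{nd}$-many eigenvalues $\mu_j$ (resp.\ $\lambda_i$) and absorbing this polynomial factor into the slightly super-exponential gain coming from the gap $\gamma(m,L_k,n')\geq\gamma(m,L_k,n)(1+L_k^{-1/8})$ encoded in \eqref{eq:gamma}, we obtain $\|\Bone^{(n,out)}_{\Bu}\BG^{(n)}_{\BC^{(n)}_{L_k}(\Bu)}(E)\Bone^{(n,int)}_{\Bu}\|\leq\ee^{-\gamma(m,L_k,n)L_k}$, i.e. $\BC^{(n)}_{L_k}(\Bu)$ is $(E,m)$-NS, which is the claim.

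The main obstacle I expect is the bookkeeping in the resolvent expansion: making precise how the norm of the $n$-particle kernel on the outer shell is dominated by the factorized one-particle-subsystem kernels, and then verifying that the entropy factor (number of eigenvalues $\mu_j$, $\lambda_i$, which is polynomial in $L_k$) is genuinely swallowed by the discrepancy between $\gamma(m,L_k,n')$ and $\gamma(m,L_k,n)$ — this is exactly why the exponent in \eqref{eq:gamma} carries the extra $(1+L^{-1/8})$ per particle. One must also be careful that the non-tunnelling hypothesis is used at the \emph{same} scale relation $L_k=\lfloor L_{k-1}^\alpha\rfloor+1$ appearing in Definition~\ref{def:tunnelling}, so that the singular sub-cubes whose scarcity is being invoked are precisely the scale-$L_{k-1}$ cubes for which $\dsknn$ is available; reconciling this with the scales $L^{1/\alpha}\le\ell\le L$ appearing in the CNR/HNR conditions of Lemma~\ref{lem:HNR} requires a short argument that resonant sub-rectangles are excluded by HNR before the decay iteration is run.
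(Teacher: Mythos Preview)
Your proposal is correct and follows essentially the same route as the paper: expand the resolvent in the tensor eigenbasis $\varphi_i\otimes\phi_j$, split the outer shell according to whether $|\Bv'-\Bu'|=L_k$ or $|\Bv''-\Bu''|=L_k$, use HNR to get $(E-\mu_j)$-CNR (resp.\ $(E-\lambda_i)$-CNR) for the factor cubes, use NT together with Lemma~\ref{lem:MPI} to bound the number of singular sub-cubes, feed this into the deterministic Lemma~\ref{lem:CNR.NS} to obtain $(E-\mu_j,m)$-NS for $\BC^{(n')}_{L_k}(\Bu')$ (and symmetrically), and finally absorb the polynomial entropy from the eigenvalue sum into the gap $\gamma(m,L_k,n')-\gamma(m,L_k,n)$ coming from \eqref{eq:gamma}. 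One small correction: you do not need to invoke the probabilistic bound $\dsknn$ at this stage --- the lemma is purely deterministic once the $(E,m)$-NT hypothesis is assumed, since NT itself furnishes the scarcity of separable singular sub-cubes that Lemma~\ref{lem:CNR.NS} requires; the induction hypothesis $\dsn{k,n'}$ enters only later, in Lemma~\ref{lem:T.estimate}, when one bounds the \emph{probability} that a PI cube is tunnelling.
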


\begin{proof}
Let $\BC^{(n')}_{L_k}(\Bu')\times\BC^{(n'')}_{L_k}(\Bu'')$ be the decomposition of the PI cube $\BC^{(n)}_{L_k}(\Bu)$.
Let $\{\lambda_i,\varphi_i\}$ and $\{\mu_j,\phi_j\}$ be the eigenvalues and corresponding eigenvectors of $\BH_{\BC_{L_k}^{(n')}(\Bu')}^{(n')}$ and $\BH_{\BC_{L_k}^{(n'')}(\Bu'')}^{(n'')}$ respectively. Then we can choose the eigenvectors $\BPsi_{ij}$ and corresponding eigenvalues $E_{ij}$ of $\BH_{\BC^{(n)}_{L_k}(\Bu)}(\omega)$ as follows.
\[
\BPsi_{ij}=\varphi_i\otimes\phi_j,\qquad E_{ij}=\lambda_i+\mu_j.
\]
By the assumed  $E$-HNR property of the cube $\BC^{(n)}_{L_k}(\Bu)$, for all eigenvalues $\lambda_i$ one has $\BC^{(n'')}_L(\Bu'')$ is  $E-\lambda_i$-CNR. Next,  by assumption of $(E,m)$-NT, $\BC^{(n'')}_{L_k}(\Bu'')$ does not contain any pair of separable $(E-\lambda_i,m)$-S cubes of radius $L_{k-1}$ therefore by Lemma \ref{lem:MPI} $M(\BC^{(n)}_{L_{k+1}}(\Bu),E-\lambda_i)<\kappa(n)+2$ and Lemma \ref{lem:CNR.NS} implies that it is also $(E-\lambda_i,m)$-NS, yielding 
\begin{equation}\label{eq:lem.NDRoNS.1}
\max_{\{\lambda_i\}}
\max_{\Bv''\in \partial^- \BC_{L_k}^{(n'')}(\Bu'')}
\left|\BG^{(n'')}(\Bu'',\Bv'';E-\lambda_i)\right|\leq\ee^{-\gamma(m,L_k,n'')L_k}.
\end{equation}
The same analysis for $\BC^{(n')}_L(\Bu')$ also gives
\begin{equation}
\label{eq:lem.NDRoNS.2}
\max_{\{ \mu_j\}}
      \max_{\Bv'\in \partial^-\BC_{L_k}^{(n')}(\Bu')}
      \left|\BG^{(n')}(\Bu',\Bv';E-\mu_j)\right|\leq\ee^{-\gamma(m,L_k,n')L_k}.
\end{equation}
For any $\Bv\in\partial^-\BC^{(n)}_{L_k}(\Bu)$, $|\Bu-\Bv|=L_k$, thus either $|\Bv'-\Bu'|=L_k$ or $|\Bv''-\Bu''|=L_k$. Consider first the latter case. Equation \eqref{eq:lem.NDRoNS.1} applies and we get
\begin{gather*}
\left|\BG^{(n)}(\Bu,\Bv;E)\right| =\left|\sum_{i,j} \frac{\varphi_i(\Bu')\varphi_i(\Bv')\phi_j(\Bu'')\phi_j(\Bv'')}{E-\lambda_i-\mu_j}\right|\\
\leq \sum_i \left|\varphi_i(\Bu')\varphi_i(\Bv')\right|\cdot \left|\BG^{(n)}(\Bu'',\Bv'';E-\lambda_i)\right|\\
\leq (2L_k+1)^{(n-1)d}
\max_{\{ \lambda_i \}}\;
\max_{\Bv''\in\partial\BC^{(n)}_{L_k}(\Bu'') }
\left|\BG^{(n)}(\Bu'',\Bv'';E-\lambda_i)\right|, \tag{ since $\|\varphi\|_{\infty}\leq 1$}\\
\leq (2L_k+1)^{(n-1)d}\cdot\ee^{-\gamma(m,L_k,n-1)L_k}\\
=\ee^{-[\gamma(m,L_k,n-1)-L_k^{-1}\ln(2L_k+1)^{(n-1)d}]L_k}.
\end{gather*}
But by  Definition \eqref{eq:gamma}:
\[
\gamma(m,L_k,n)=m(1+L_k^{-1/8})^{N-n+1},
\]
For $2\leq n\leq N$,
\[
\gamma(m,L_k,n-1)-\gamma(m,L_k,n)>L_k^{-1}\ln(2L_k+1)^{(n-1)d}.
\]
Indeed, setting $C_1=\frac{2^{-N}\gamma}{3\sqrt{2}}$,
\begin{align*}
\gamma(m,L_k,n-1)-\gamma(m,L_k,n)&=mL_k^{-1/8}(1+L_k^{-1/8})^{N-n+1}\\
&=C_1L_0^{-1/2}L_k^{-1/8}(1+L_k^{-1/8})^{N-n+1}>C_1L_k^{-5/8},
\end{align*}
and for $L_0$  sufficiently large, hence $L_k$,
\[
L_k^{-1}\ln(2L_k+1)^{(n-1)d}\leq L_k^{-1}(n-1)d(3L_k)^{3/8}\leq C_1 L_k^{-5/8}.
\]
Thus, $\BC^{(n)}_{L_k}(\Bu)$ is $(E,m)$-NS. Finally, the case $|\Bu'-\Bv'|=L_k$ is similar.
\end{proof}

\begin{lemma}\label{lem:T.estimate}
Let $2\leq n\leq N$ and assume property $\dsn{k,n'}$ for any $1\leq n'<n$.
Then for any  PI cube $\BC_{L_{k+1}}^{(n)}(\By)$ one has
\begin{equation}\label{eq:C.is.T}
\DP\bigl\{\exists E\in I, \BC_{L_{k+1}}^{(n)}(\By)\text{ is $(E,m)$-T}\bigr\}\leq \frac{1}{2}L_{k+1}^{-4p\,4^{N-n}}.
\end{equation}
\end{lemma}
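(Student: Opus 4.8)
The plan is to estimate the probability that a PI cube $\BC_{L_{k+1}}^{(n)}(\By)=\BC_{L_{k+1}}^{(n')}(\By')\times\BC_{L_{k+1}}^{(n'')}(\By'')$ is $(E,m)$-tunnelling for some $E\in I_0$, by splitting into the left-tunnelling and right-tunnelling events and treating them symmetrically. Consider the $(E,m)$-LT event: by Definition \ref{def:tunnelling}(i), there is an eigenvalue $\mu_j\in\sigma(\BH^{(n'')}_{\BC^{(n'')}_{L_{k+1}}(\By'')})$ such that $\BC^{(n')}_{L_{k+1}}(\By')$ contains a pair of separable $(E-\mu_j,m)$-S cubes of radius $l$ with $L_{k+1}=\lfloor l^{\alpha}\rfloor+1$, i.e. $l=L_k$. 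The key observation is that, as $E$ ranges over $I_0$ and $\mu_j$ over the (finite) spectrum of the factor Hamiltonian on the second block, the perturbed energy $E'=E-\mu_j$ ranges over some interval; and since $\BH^{(n'')}\ge 0$ (non-negativity of $\BU$ and $\BV$), we have $\mu_j\ge 0$, so $E'\le E\le E^*$, and $E'$ stays in an interval of the form $(-\infty,E^*]$ — in particular the $\dsn{k,n'}$ hypothesis, which is stated for energies in $I_0$, can be invoked for the $n'$-particle cube (this is precisely the point emphasized in the introduction about working in unbounded-from-below intervals). Thus the $(E,m)$-LT event is contained in the event that $\BC^{(n')}_{L_{k+1}}(\By')$ contains a pair of separable $(E',m)$-S cubes of radius $L_k$ for some $E'\in I_0$.

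Next I would pass from "the big $n'$-particle cube of radius $L_{k+1}$ contains a pair of separable singular sub-cubes of radius $L_k$" to a bound via the $(n',k)$ hypothesis. The cube $\BC^{(n')}_{L_{k+1}}(\By')$ contains at most $C L_{k+1}^{n'd}$ (i.e. polynomially many in $L_{k+1}$, equivalently a fixed power of $L_k$) lattice-centered sub-cubes of radius $L_k$; a union bound over all pairs of separable ones, each pair controlled by $\dsn{k,n'}$ which gives probability $\le L_k^{-2p4^{N-n'}}$, yields a total bound of order $L_k^{2n'd}\cdot L_k^{-2p4^{N-n'}}$. Since $n'\le n-1$, one has $4^{N-n'}\ge 4\cdot 4^{N-n}$, so the exponent $2p4^{N-n'}\ge 8p4^{N-n}$ dominates the polynomial loss $2n'd$ (using $p>6Nd$ and taking $L_0$, hence $L_{k+1}=\lfloor L_k^{\alpha}\rfloor+1$, large); a short arithmetic check with $\alpha=3/2$ shows $L_k^{2n'd-2p4^{N-n'}}\le \tfrac14 L_{k+1}^{-4p4^{N-n}}$. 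The same argument applied with the roles of the two blocks exchanged handles the $(E,m)$-RT event with another factor $\tfrac14 L_{k+1}^{-4p4^{N-n}}$, and summing the two gives the claimed $\tfrac12 L_{k+1}^{-4p4^{N-n}}$.

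One subtlety to handle carefully: the event in Definition \ref{def:tunnelling} quantifies over $E\in I_0$ \emph{and} over $\mu_j$ simultaneously, and $\mu_j$ is itself random (it depends on $\omega$ through $\BH^{(n'')}$). The clean way around this is to note that the $(n')$-particle MSA bound $\dsn{k,n'}$ holds uniformly in the energy ($\exists E\in I_0$ inside the probability), so one first conditions on (or simply keeps) the configuration in the second block $\Bu''$, uses that the first block's potential is independent of it — here the PI structure is essential, since by \eqref{eq:cartesian.cubes.2} the projections $\varPi\BC^{(n')}_{L_{k+1}}(\By')$ and $\varPi\BC^{(n'')}_{L_{k+1}}(\By'')$ are separated by more than $r_0$, hence the random potentials on the two blocks are independent — and then the supremum over the (at most polynomially many in $L_{k+1}$) eigenvalues $\mu_j$ is absorbed into the "$\exists E\in I_0$" quantifier already present in $\dsn{k,n'}$. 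I expect this independence-plus-uniform-in-energy bookkeeping to be the main obstacle; once it is set up properly, the rest is the routine union bound and exponent comparison sketched above.
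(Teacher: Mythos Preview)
Your argument is correct and matches the paper's proof: split into left- and right-tunnelling, use non-negativity of the factor Hamiltonians so that the shifted energies $E-\mu_j$ (resp.\ $E-\lambda_i$) remain in $I_0$, then invoke $\dsn{k,n'}$ (resp.\ $\dsn{k,n''}$) via a union bound over the polynomially many pairs of separable sub-cubes of radius $L_k$, and finish with the exponent comparison using $n',n''\le n-1$ and $\alpha=3/2$. One simplification: the conditioning/independence step you flag as the ``main obstacle'' is not actually needed---the inclusion of events
\[
\bigl\{\exists E\in I_0,\ \exists\mu_j:\text{the pair is }(E-\mu_j,m)\text{-S}\bigr\}
\subset
\bigl\{\exists E'\in I_0:\text{the pair is }(E',m)\text{-S}\bigr\}
\]
holds pointwise in $\omega$ (since $E-\mu_j\le E^*$), so $\dsn{k,n'}$ applies directly to the right-hand event without any appeal to independence of the two blocks or any count over the eigenvalues $\mu_j$.
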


\begin{proof}
Consider a PI cube  $\BC_{L_{k+1}}^{(n)}(\By)=\BC_{L_{k+1}}^{(n')}(\By')\times\BC_{L_{k+1}}^{(n'')}(\By'')$.
By definition \ref{def:tunnelling}, we have that the event
\[
\left\{\exists E\in I:\BC_{L_{k+1}}^{(n)}(\By)\text{ is $(E,m)$-T}\right\},
\]
is contained in the union
\[
\left\{\exists E\in I: \BC^{(n)}_{L_{k+1}}(\By)\text{ is $(E,m)$-RT}\right\}\cup\left\{\exists E\in I: \BC^{(n)}_{L_{k+1}}(\By)\text{ is $(E,m)$-LT}\right\}.
\]
Now, since $E\in I$ and $\mu_j\geq 0$ we have $ E-\mu_j\leq E^*$. So for any $j$, $E-\mu_j\in I$.
Further using property $\dsn{k,n'}$ we have
\begin{align*}
\prob{\text{$\exists E\in I$, $\BC^{(n)}_{L_{k+1}}(\By)$ is $(E,m)$-RT}}
&\leq \frac{|\BC^{(n')}_{L_{k+1}}(\By')|^{2}}{2}|\BC^{(n'')}_{L_{k+1}}(\By'')|L_k^{-2p4^{N-n'}}\\
&\leq C(n,N,d)L_{k+1}^{-2p\,\frac{4^{N-(n-1)}}{\alpha}+3(n-1)d}.
\end{align*}
A similar argument also shows that
\[
\prob{\text{$\exists E\in I$, $\BC^{(n)}_{L_{k+1}}(\By)$ is $(E,m)$-LT}}
\leq C(n,N,d)L_{k+1}^{-2p\,\frac{4^{N-(n-1)}}{\alpha}+3(n-1)d},
\]
so that
\[
\prob{\exists E\in I: \BC^{(n)}_{L_{k+1}}(\By)\text{ is $(E,m)$-T}}
\leq C(n,N,d)L_{k+1}^{-2p\,\frac{4^{N-(n-1)}}{\alpha}+3(n-1)d}.
\]
The assertion follows by observing that $2p\,4^{N-(n-1)}/\alpha \, -3(n-1)d>4p\,4^{N-n}$ for $\alpha=3/2$ provided
$L_0$ is large enough and  $p>4\alpha Nd=6Nd$.
\end{proof}

\begin{theorem}\label{thm:partially.interactive}
Let $1\leq n\leq N$. There exists $L_1^*=L_1^*(N,d)>0$ such that if $L_0\geq L_1^*$ and if for $k\geq 0$ $\dsn{k,n'}$ holds true for any $1\leq n'<n$, then $\dskonn$ holds true
for any pair of separable PI cubes $\BC_{L_{k+1}}^{(n)}(\Bx)$ and $\BC_{L_{k+1}}^{(n)}(\By)$.
\end{theorem}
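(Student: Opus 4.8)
The plan is to reduce the estimate for a separable pair of PI cubes to a disjoint union of ``bad'' events, each controlled either by Lemma \ref{lem:T.estimate} (tunnelling) or by the Wegner-type bound Theorem \ref{thm:Wegner} (absence of complete non-resonance), using the deterministic dichotomy provided by Lemma \ref{lem:NDRoNS}. First I would observe that if both cubes $\BC_{L_{k+1}}^{(n)}(\Bx)$ and $\BC_{L_{k+1}}^{(n)}(\By)$ are simultaneously $(E,m)$-S for some $E\in I_0$, then by Lemma \ref{lem:NDRoNS} each of them must be either $(E,m)$-T or not $E$-HNR. Hence the event in $\dskonn$ is contained in the union of four events: (a) both cubes are $(E,m)$-T for a common $E$; (b) $\BC_{L_{k+1}}^{(n)}(\Bx)$ is $(E,m)$-T and $\BC_{L_{k+1}}^{(n)}(\By)$ is not $E$-HNR; (c) the symmetric version of (b); (d) neither cube is $E$-HNR.

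For the tunnelling contributions I would use Lemma \ref{lem:T.estimate}: since tunnelling of $\BC_{L_{k+1}}^{(n)}(\Bx)$ is an event measurable with respect to the random potential restricted to $\varPi\BC_{L_{k+1}}^{(n)}(\Bx)$, and likewise for $\By$, the probability of (a) is at most $\bigl(\tfrac12 L_{k+1}^{-4p\,4^{N-n}}\bigr)^2$ once one notes that for a \emph{separable} pair the relevant projections can be taken disjoint (so the two events are independent); in any case even the crude bound $\tfrac12 L_{k+1}^{-4p\,4^{N-n}}$ for event (a) alone already beats the target exponent $2p\,4^{N-n}$ for $L_0$ large. For events (b) and (c) I would bound the probability of ``$\BC_{L_{k+1}}^{(n)}(\Bx)$ is $(E,m)$-T for some $E$'' by $\tfrac12 L_{k+1}^{-4p\,4^{N-n}}$ via Lemma \ref{lem:T.estimate} and multiply (using independence of disjoint projections for a separable pair) by the probability that $\BC_{L_{k+1}}^{(n)}(\By)$ is not $E$-HNR; the latter, by Lemma \ref{lem:HNR}, forces an $E$-resonant rectangle of one of the two Cartesian types inside $\BC_{L_{k+1}}^{(n)}(\By)$, whose probability is estimated by summing the single-rectangle Wegner bound over the $O(L_{k+1}^{nd})$ admissible centers and $O(L_{k+1})$ admissible scales $\ell\in[L_{k+1}^{1/\alpha},L_{k+1}]$. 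Each such rectangle is a product cube $\BC^{(n')}_{\ell}(\cdot)\times\BC^{(n'')}_{L_{k+1}}(\cdot)$, and resonance of the product at energy $E$ is implied by resonance of one factor at some shifted energy $E-\mu_j$ or $E-\lambda_i$; applying Theorem \ref{thm:Wegner}(A) to that factor (whose number of particles is $<n$, so the exponent $4^{N-n'}$ is strictly better than $4^{N-n}$) yields a bound of the form $C(n,N,d)L_{k+1}^{-p\,4^{N-(n-1)}+C'(n,d)}$, which for $\alpha=3/2$ and $p>6Nd$ is $\le \tfrac12 L_{k+1}^{-2p\,4^{N-n}}$.

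The remaining event (d), in which neither cube is $E$-HNR for a common energy $E$, is handled analogously but now one does \emph{not} have a spare factor of $L_{k+1}^{-4p\,4^{N-n}}$ from tunnelling; instead one uses that a resonance of a sub-rectangle of $\BC_{L_{k+1}}^{(n)}(\Bx)$ at energy $E$ and a resonance of a sub-rectangle of $\BC_{L_{k+1}}^{(n)}(\By)$ at the \emph{same} $E$ are events depending on disjoint parts of the random field (by separability, $\varPi\BC_{L_{k+1}}^{(n)}(\Bx)\cap\varPi\BC_{L_{k+1}}^{(n)}(\By)=\emptyset$), hence one may apply Theorem \ref{thm:Wegner}(B) for the pair, or simply multiply the two single-cube resonance probabilities; after summing over centers and scales as above this gives $\le\tfrac12 L_{k+1}^{-2p\,4^{N-n}}$ by the same arithmetic $2p\cdot 4^{N-(n-1)}/\alpha-3(n-1)d>2p\cdot 4^{N-n}$. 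Collecting (a)--(d) and choosing $L_0\ge L_1^*(N,d)$ large enough to absorb all the polynomial prefactors $C(n,N,d)$ and to make the four terms each $\le\tfrac14 L_{k+1}^{-2p\,4^{N-n}}$ completes the proof.

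\textbf{Main obstacle.} The delicate point is the bookkeeping in event (d): one must make sure that when both cubes fail to be $E$-HNR for a \emph{common} energy $E$, the two resulting resonance events really are independent, which requires invoking the separability hypothesis to guarantee disjoint projections (Lemma \ref{lem:FI.cubes} / Definition \ref{def:separability}) and then carefully checking that the union over the pair of scales $(\ell_{\Bx},\ell_{\By})$ and the pair of center locations still leaves the exponent $2p\,4^{N-n}$ intact after the loss $3(n-1)d$ coming from the volume factors; this is exactly where the hypothesis $p>6Nd=4\alpha Nd$ is used, and it is the only place where the separability of the pair (rather than mere distance) is essential.
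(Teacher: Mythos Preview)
Your approach is essentially correct and reaches the same conclusion, but it is more elaborate than necessary and rests on a misconception about separability.

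The paper uses a cleaner three-event covering $\rB_{k+1}\subset\rR\cup\rT_{\Bx}\cup\rT_{\By}$, where $\rR=\{\exists E\in I_0:$ neither cube is $E$-HNR$\}$ and $\rT_{\Bx},\rT_{\By}$ are the tunnelling events for each cube. The argument is simply: if $\omega\in\rB_{k+1}\setminus\rR$, then at the common singular energy $E$ at least one of the two cubes is $E$-HNR, and that cube, being $(E,m)$-S and $E$-HNR, must be $(E,m)$-T by Lemma \ref{lem:NDRoNS}. Each tunnelling term is bounded directly by Lemma \ref{lem:T.estimate} at $\tfrac12 L_{k+1}^{-4p\,4^{N-n}}$, with no independence invoked, and $\DP\{\rR\}$ is bounded by $L_{k+1}^{-4^Np}$ via Lemma \ref{lem:HNR} together with Theorem \ref{thm:Wegner}. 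Your four-way split (a)--(d) merely refines this cover; in particular your attempts to factor the probabilities of (a), (b), (c) using independence are superfluous, since the single tunnelling bound already dominates the target exponent $2p\,4^{N-n}$ (as you yourself note for (a)).

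More importantly, your assertion that separability of the pair gives $\varPi\BC_{L_{k+1}}^{(n)}(\Bx)\cap\varPi\BC_{L_{k+1}}^{(n)}(\By)=\emptyset$ is \emph{false} in general for PI cubes: Definition \ref{def:separability} only guarantees that a \emph{partial} projection $\varPi_{\CJ}$ of one cube avoids the full projection of the other, and Lemma \ref{lem:FI.cubes}, which you invoke in your ``main obstacle'' paragraph, applies only to fully interactive cubes. Hence the ``simply multiply the two single-cube resonance probabilities'' route you propose for event (d) is not justified. This does not break your argument, because you correctly offer Theorem \ref{thm:Wegner}(B) as an alternative; that two-volume Wegner estimate is formulated precisely for separable pairs (not for pairs with disjoint projections) and, combined with Lemma \ref{lem:HNR}, handles event (d) directly---which is exactly how the paper proceeds. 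In short: drop the independence claims, collapse (a)--(c) into the union $\rT_{\Bx}\cup\rT_{\By}$, and you recover the paper's proof verbatim.
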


\begin{proof}
Let $\BC_{L_{k+1}}^{(n)}(\Bx)$ and $\BC_{L_{k+1}}^{(n)}(\By)$ be two separable PI-cubes. Consider the events:
\begin{align*}
\rB_{k+1}
&=\bigl\{\exists\,E\in I:\BC_{L_{k+1}}^{(n)}(\Bx)\text{ and $\BC_{L_{k+1}}^{(n)}(\By)$ are $(E,m)$-S}\bigr\},\\
\rR
&=\bigl\{\exists\,E\in I:\text{neither $\BC_{L_{k+1}}^{(n)}(\Bx)$ nor $\BC_{L_{k+1}}^{(n)}(\By)$ is $E$-HNR}\bigr\},\\
\rT_{\Bx}
&=\bigl\{\exists E\in I: \BC_{L_{k+1}}^{(n)}(\Bx)\text{ is $(E,m)$-T}\bigr\},\\
\rT_{\By}
&=\bigl\{\exists E\in I: \BC_{L_{k+1}}^{(n)}(\By)\text{ is $(E,m)$-T}\bigr\}.
\end{align*}
If $\omega\in\rB_{k+1}\setminus\rR$, then $\forall E\in I$, $\BC_{L_{k+1}}^{(n)}(\Bx)$ or $\BC_{L_{k+1}}^{(n)}(\By)$ is $E$-HNR. If $\BC_{L_{k+1}}^{(n)}(\By)$ is $E$-HNR, then it must be $(E,m)$-T: otherwise it would have been $(E,m)$-NS by Lemma~\ref{lem:NDRoNS}. Similarly, if $\BC_{L_{k+1}}^{(n)}(\Bx)$ is $E$-HNR, then it must be $(E,m)$-T. This implies that
\[
\rB_{k+1}\subset\rR\cup\rT_{\Bx}\cup\rT_{\By}.
\]
Therefore,
\begin{align*}
\DP\left\{\rB_{k+1}\right\}&\leq\DP\left\{\rR\right\}+\DP\{\rT_{\Bx}\}+\DP\{\rT_{\By}\}\\
&\leq \prob{\rR}+\frac{1}{2}L_{k+1}^{-4p\,4^{N-n}}+\frac{1}{2}L_{k+1}^{-4p\,4^{N-n}}\\
\end{align*}
where we used  \eqref{eq:C.is.T} to estimate $\DP\{\rT_{\Bx}\}$ and $\DP\{\rT_{\By}\}$. Next by combining Theorem \ref{thm:Wegner} and Lemma \ref{lem:HNR} we obtain that $\prob{\rR}\leq L_{k+1}^{-4^N\,p}$. Finally 
\begin{equation}\label{eq:bound.PI}
\prob{\rB_{k+1}}\leq L_{k+1}^{-4^Np}+L_{k+1}^{-4p4^{N-n}}<L_{k+1}^{-2p4^{N-n}}.
\end{equation}
\end{proof}

For subsequent calculations and proofs we give the following two Lemmas. 

\begin{lemma}\label{lem:MPI}
If $M(\BC^{(n)}_{L_{k+1}}(\Bu),E)\geq \kappa(n)+2$ with $\kappa(n)=n^n$, then $M^{\sep}(\BC^{(n)}_{L_{k+1}}(\Bu),E)\geq 2$.\\
\noindent Similarly, if $M_{\pai}(\BC^{(n)}_{L_{k+1}}(\Bu),E)\geq \kappa(n)+2$  then $M_{\pai}^{\sep}(\BC^{(n)}_{L_{k+1}}(\Bu),E)\geq 2$.
\end{lemma}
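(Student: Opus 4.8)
\textbf{Proof proposal for Lemma \ref{lem:MPI}.}

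The plan is to argue by a pigeonhole/covering argument based on part (A) of Lemma \ref{lem:separable.distant}. Suppose $M(\BC^{(n)}_{L_{k+1}}(\Bu),E)\geq\kappa(n)+2$, so there exist at least $\kappa(n)+2$ cubes $\BC^{(n)}_{L_k}(\Bu^{(1)}),\dots,\BC^{(n)}_{L_k}(\Bu^{(\kappa(n)+2)})$ inside $\BC^{(n)}_{L_{k+1}}(\Bu)$ that are $(E,m)$-singular, pairwise $7NL_k$-distant, and each at distance $\geq 2L_k$ from the boundary. Fix one of them, say $\BC^{(n)}_{L_k}(\Bu^{(1)})$. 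By Lemma \ref{lem:separable.distant}(A) applied with $\Bx=\Bu^{(1)}$ and $L=L_k$, there is a family of at most $\kappa(n)=n^n$ auxiliary cubes $\BC^{(n)}_{2nL_k}(\Bx^{(\ell)})$, $\ell=1,\dots,\kappa(n)$, such that any center $\By\in\DZ^{nd}$ with $|\By-\Bu^{(1)}|>7NL_k$ and $\By\notin\bigcup_\ell\BC^{(n)}_{2nL_k}(\Bx^{(\ell)})$ yields a pair $(\BC^{(n)}_{L_k}(\Bu^{(1)}),\BC^{(n)}_{L_k}(\By))$ that is separable.

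The key step is then counting. Among the remaining $\kappa(n)+1$ centers $\Bu^{(2)},\dots,\Bu^{(\kappa(n)+2)}$, each already satisfies $|\Bu^{(j)}-\Bu^{(1)}|>7NL_k$ by hypothesis. If at least one of them, say $\Bu^{(j_0)}$, avoids all $\kappa(n)$ auxiliary cubes $\BC^{(n)}_{2nL_k}(\Bx^{(\ell)})$, then $\BC^{(n)}_{L_k}(\Bu^{(1)})$ and $\BC^{(n)}_{L_k}(\Bu^{(j_0)})$ form a separable pair of $(E,m)$-singular cubes, giving $M^{\sep}\geq 2$. Otherwise, every one of the $\kappa(n)+1$ centers lies in $\bigcup_{\ell=1}^{\kappa(n)}\BC^{(n)}_{2nL_k}(\Bx^{(\ell)})$; since there are only $\kappa(n)$ such auxiliary cubes, by pigeonhole two distinct centers $\Bu^{(j)},\Bu^{(j')}$ lie in the same $\BC^{(n)}_{2nL_k}(\Bx^{(\ell)})$, whence $|\Bu^{(j)}-\Bu^{(j')}|<4nL_k\leq 7NL_k$, contradicting the assumed pairwise separation. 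Hence the first alternative must hold and $M^{\sep}(\BC^{(n)}_{L_{k+1}}(\Bu),E)\geq 2$. The second statement, concerning $M_{\pai}$ and $M_{\pai}^{\sep}$, follows verbatim: the auxiliary-cube construction of Lemma \ref{lem:separable.distant}(A) makes no reference to the interactive/non-interactive type of the cubes, so restricting attention throughout to PI cubes leaves the argument unchanged.

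The main obstacle here is essentially bookkeeping: one must be careful that the counting threshold $\kappa(n)+2$ is exactly the right one (one cube is spent as the ``pivot'' $\BC^{(n)}_{L_k}(\Bu^{(1)})$, one more is needed to form the separable pair, and $\kappa(n)$ auxiliary cubes must be overwhelmed by pigeonhole), and that the metric estimate $|\Bu^{(j)}-\Bu^{(j')}|<4nL_k$ inside a cube of radius $2nL_k$ in the max-norm is indeed incompatible with the standing hypothesis $|\Bu^{(j)}-\Bu^{(j')}|>7NL_k$ (using $n\leq N$). No analytic input is needed — the statement is purely combinatorial-geometric and rests entirely on Lemma \ref{lem:separable.distant}(A).
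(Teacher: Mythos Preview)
Your proof is correct and follows essentially the same approach as the paper: fix a pivot cube, invoke Lemma~\ref{lem:separable.distant}(A) to produce the $\kappa(n)$ auxiliary cubes, and then use pigeonhole on the remaining $\kappa(n)+1$ centers together with the $7NL_k$-separation to force a contradiction unless a separable pair exists. The paper phrases this as a proof by contradiction (assuming $M^{\sep}<2$) while you argue the dichotomy directly, but the logic and the key ingredients are identical.
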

\begin{proof}
See the appendix section \ref{sec:appendix}.
\end{proof}

\begin{lemma}\label{lem:MND}
With the above notations, assume that $\dsn{k-1,n'}$ holds true for all $1\leq n'<n$ then
\begin{equation}\label{eq:MND}
\DP\left\{M_{\pai}(\BC_{L_{k+1}}^{(n)}(\Bu),I)\geq \kappa(n)+ 2\right\}\leq \frac{3^{2nd}}{2} L_{k+1}^{2nd}\left(L_k^{-4^Np}+L_k^{-4p\,4^{N-n}}\right).
\end{equation}
\end{lemma}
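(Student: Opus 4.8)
The plan is to reduce the event $\{M_{\pai}(\BC_{L_{k+1}}^{(n)}(\Bu),I)\geq \kappa(n)+2\}$ to the occurrence of a \emph{pair} of separable singular PI cubes, and then to estimate the probability of that pair event using the induction hypothesis on the lower-order scales. First I would invoke Lemma \ref{lem:MPI}: if $M_{\pai}(\BC_{L_{k+1}}^{(n)}(\Bu),E)\geq \kappa(n)+2$, then $M_{\pai}^{\sep}(\BC_{L_{k+1}}^{(n)}(\Bu),E)\geq 2$, i.e.\ inside $\BC_{L_{k+1}}^{(n)}(\Bu)$ there exist two \emph{separable} $(E,m)$-singular PI cubes $\BC^{(n)}_{L_k}(\Bu^{(1)})$ and $\BC^{(n)}_{L_k}(\Bu^{(2)})$. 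Taking the supremum over $E\in I$ and passing to probabilities,
\[
\DP\left\{M_{\pai}(\BC_{L_{k+1}}^{(n)}(\Bu),I)\geq \kappa(n)+2\right\}
\leq \sum_{\Bu^{(1)},\Bu^{(2)}} \DP\left\{\exists E\in I: \BC^{(n)}_{L_k}(\Bu^{(1)}), \BC^{(n)}_{L_k}(\Bu^{(2)})\text{ separable, }(E,m)\text{-S PI}\right\},
\]
where the sum runs over lattice centers $\Bu^{(i)}$ with $\BC^{(n)}_{L_k}(\Bu^{(i)})\subset\BC^{(n)}_{L_{k+1}}(\Bu)$; the number of such centers is at most $(2L_{k+1})^{nd}$ for each, and after dividing by $2$ for the unordered pair this produces the combinatorial prefactor of order $L_{k+1}^{2nd}$ (the constant $3^{2nd}/2$ absorbs the passage from side length to the integer-center count).

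The second step is to bound a single pair-probability. Here I would use exactly the mechanism from the proof of Theorem \ref{thm:partially.interactive} and Lemma \ref{lem:T.estimate}: for a PI cube $\BC^{(n)}_{L_{k+1}}(\Bu)$ — hence also for the sub-cubes at scale $L_k$ when they are PI — a cube that is $(E,m)$-S and $E$-HNR must be $(E,m)$-T (Lemma \ref{lem:NDRoNS}). So a separable pair of singular PI cubes forces: either one of the two pair-members fails to be $E$-HNR (a Wegner-type event, controlled by Theorem \ref{thm:Wegner} together with Lemma \ref{lem:HNR}, contributing $L_k^{-4^Np}$), or it is $(E,m)$-T, which by Definition \ref{def:tunnelling} means one of its Cartesian factors at the lower particle number $n'<n$ contains a separable pair of $(E-\mu_j,m)$-S or $(E-\lambda_i,m)$-S cubes of scale $L_{k-1}$ — an event controlled by the induction hypothesis $\dsn{k-1,n'}$, since $E-\mu_j, E-\lambda_i\in I$ by the non-negativity of the potentials (exactly as in Lemma \ref{lem:T.estimate}), contributing $L_k^{-4p\,4^{N-n}}$ after the usual bookkeeping on the number of eigenvalues and the relation $L_k=\lfloor L_{k-1}^{\alpha}\rfloor+1$. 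Collecting the two contributions gives the single-pair bound $L_k^{-4^Np}+L_k^{-4p\,4^{N-n}}$, and multiplying by the combinatorial prefactor yields \eqref{eq:MND}.

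The main obstacle I anticipate is purely bookkeeping rather than conceptual: one must be careful that the two cubes in the separable pair produced by Lemma \ref{lem:MPI} can be handled \emph{independently}, or at least that the decomposition ``not HNR, or tunnelling'' applied to a single cube already suffices (it does, since a separable pair in particular contains at least one singular PI cube, and a single singular PI cube that is $E$-HNR is already tunnelling). A second delicate point is keeping track of which particle-number factors $n'$ appear when the scale-$L_k$ PI cube is further decomposed, and verifying that the exponent arithmetic — $2p\cdot 4^{N-(n-1)}/\alpha - 3(n-1)d > 4p\cdot 4^{N-n}$ for $\alpha=3/2$, $p>6Nd$, together with $L_k\geq L_{k-1}^{\alpha}$ — collapses everything cleanly into the two stated terms $L_k^{-4^Np}$ and $L_k^{-4p\,4^{N-n}}$. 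Both of these are already carried out in the proofs of Lemma \ref{lem:T.estimate} and Theorem \ref{thm:partially.interactive}, so the present proof mostly amounts to assembling those estimates with the counting from Lemma \ref{lem:MPI}.
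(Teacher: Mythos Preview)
Your proposal is correct and coincides with the paper's proof: reduce via Lemma~\ref{lem:MPI} to a separable pair of $(E,m)$-singular PI cubes of radius $L_k$ inside $\BC^{(n)}_{L_{k+1}}(\Bu)$, count such pairs to obtain the prefactor $\tfrac{3^{2nd}}{2}L_{k+1}^{2nd}$, and bound the pair probability by invoking the estimate \eqref{eq:bound.PI} from Theorem~\ref{thm:partially.interactive} at scale $L_k$ (which is precisely where the hypothesis $\dsn{k-1,n'}$ is used). One small correction to your aside: the decomposition ``not HNR, or tunnelling'' applied to a \emph{single} cube does not suffice, because the Wegner-type contribution $L_k^{-4^Np}$ requires the \emph{pair} event ``neither cube is $E$-HNR'' (Theorem~\ref{thm:Wegner}(B) combined with Lemma~\ref{lem:HNR}), exactly the event $\rR$ in the proof of Theorem~\ref{thm:partially.interactive}; the separability of the pair is what makes that bound available.
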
	
\begin{proof}
See the appendix section \ref{sec:appendix}.
\end{proof}

\subsection{Pairs of fully interactive cubes}\label{ssec:FI.cubes}
Our aim now is to prove $\dskonn$ for a pair of separable fully interactive cubes $\BC_{L_{k+1}}^{(n)}(\Bx)$ and $\BC_{L_{k+1}}^{(n)}(\By)$. We adapt to the continuum a very crucial and hard result obtained in the paper \cite{E13} and which generalized to multi-particle systems some previous work by von Dreifus and Klein \cite{DK89} on the lattice and Stollmann \cite{St01} in the continuum for single-particle models. 

\begin{lemma}\label{lem:CNR.NS}
Let $J=\kappa(n)+5$ with $\kappa(n)=n^n$ and $E\in\DR$. Suppose that
\begin{enumerate}[\rm(i)]
\item
$\BC_{L_{k+1}}^{(n)}(\Bx)$ is $E$-CNR,
\item
$M(\BC_{L_{k+1}}^{(n)}(\Bx),E)\leq J$.
\end{enumerate}
Then there exists $\tilde{L}_2^*(J,N,d)>0$ such that if $L_0\geq \tilde{L}_{2}^*(J,N,d)$ we have that $\BC^{(n)}_{L_{k+1}}(\Bx)$ is $(E,m)$-NS.
\end{lemma}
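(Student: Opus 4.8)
The plan is a standard deterministic ``radial descent'' argument in the multi-scale analysis: from the hypothesis that $\BC^{(n)}_{L_{k+1}}(\Bx)$ contains at most $J$ singular $L_k$-subcubes, we iterate the geometric resolvent inequality (Theorem \ref{thm:GRI.GF}) to transport the decay of the resolvent from an inner cube out to the boundary layer, keeping track of the loss in the decay exponent at each of the finitely many steps. First I would fix the scale $L = L_{k+1}$ and the sub-scale $\ell = L_k$, and consider a point $\Bz$ in the interior region $\BC^{(n,int)}_{L}(\Bx)$. I would bound $\|\Bone_{\BC^{(n,out)}_{L}(\Bx)}\BG^{(n)}_{\BC^{(n)}_L(\Bx)}(E)\Bone_{\BC^{(n)}_1(\Bz)}\|$ by repeatedly applying the GRI to an $\ell$-cube centered at the current point: if that cube is $(E,m)$-NS we gain a factor $\ee^{-\gamma(m,\ell,n)\ell}$ and move the ``source'' of the resolvent a distance of order $\ell$ toward the boundary; if it is $(E,m)$-S we cannot gain decay, but by hypothesis (ii) this can happen at most $J$ times along any such chain.

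The key quantitative steps are: (1) a \textbf{covering / chain-length count} — the interior-to-boundary distance is of order $L$, each good step advances by order $\ell$, so a chain has $\sim L/\ell = L^{1-1/\alpha}$ links, of which at most $J$ are ``bad'', and one must sum over the (polynomially many in $L$) chains and over the boundary points; (2) handling the \textbf{singular steps}: when an $\ell$-cube is $(E,m)$-S we instead use the $E$-CNR hypothesis (i) together with Theorem \ref{thm:CT} (or the a priori resolvent bound $\|\BG\|\le \ee^{L^{1/2}}$ coming from non-resonance) to control the resolvent on that cube by $\ee^{\ell^{1/2}}$, a sub-exponential loss that is absorbed by the gains from the $\sim L^{1-1/\alpha}$ good steps since $J$ is fixed; (3) \textbf{bookkeeping of the exponent} — after the descent one obtains decay $\ee^{-[\gamma(m,\ell,n) - o(1)]\cdot(\text{distance})}$, and one must check, exactly as in the computation inside the proof of Lemma \ref{lem:NDRoNS}, that $\gamma(m,L_k,n)$ telescopes to $\gamma(m,L_{k+1},n)$ with room to spare, using $\gamma(m,L,n)=m(1+L^{-1/8})^{N-n+1}$ and $L_{k+1}\approx L_k^{\alpha}$, so that the surviving exponent is at least $\gamma(m,L_{k+1},n)$; the polynomial prefactors $(2L)^{\ord(nd)}$ and the $J$-fold sub-exponential losses are killed by taking $L_0 \ge \tilde L_2^*(J,N,d)$ large.

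Finally I would assemble the pieces via the EDI-type bound / definition of $(E,m)$-NS: summing the per-point estimates over $\Bz \in \BC^{(n,int)}_L(\Bx)\cap\DZ^{nd}$ and over boundary points gives
\[
\|\Bone^{(n,out)}_{\Bx}\BG^{(n)}_{\BC^{(n)}_L(\Bx)}(E)\Bone^{(n,int)}_{\Bx}\| \le (2L)^{\ord(nd)}\,\ee^{J\ell^{1/2}}\,\ee^{-\gamma(m,\ell,n)(L/3 - \ord(J\ell))} \le \ee^{-\gamma(m,L,n)L}
\]
for $L_0$ large enough, which is exactly $(E,m)$-NS for $\BC^{(n)}_{L_{k+1}}(\Bx)$. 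The main obstacle I expect is step (2)–(3): correctly choreographing the interplay between the $E$-CNR hypothesis (which only rules out resonant subcubes of size $\ge L^{1/\alpha}$, hence gives the $\ee^{\ell^{1/2}}$ control on the relevant scales), the fixed bound $J$ on the number of singular subcubes, and the precise telescoping of the mass $\gamma(m,\cdot,n)$ so that the accumulated losses over $\ord(L^{1-1/\alpha})$ steps plus $J$ sub-exponential detours still leave the target exponent $\gamma(m,L_{k+1},n)L_{k+1}$ intact — this is the delicate continuum analogue of the hard lemma from \cite{E13} and is where the constant $\tilde L_2^*(J,N,d)$ and the choice $\alpha=3/2$ are really used.
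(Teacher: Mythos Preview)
Your proposal is correct and follows essentially the same route as the paper: an iterated GRI chain from the interior to the boundary, with each $(E,m)$-NS sub-cube contributing a factor $\delta_+\approx\ee^{-\gamma(m,L_k,n)L_k}$, the at most $J$ singular regions handled via the $E$-CNR bound $\|\BG\|\le\ee^{(2L_k)^{1/2}}$, and the final exponent comparison $\gamma(m,L_k,n)\bigl[1-\ord(JL_k^{-1/2})\bigr]>\gamma(m,L_{k+1},n)$. The paper's only additional wrinkle is that it first encloses the singular $L_k$-cubes in $\le J$ cubes of radius $2L_k$, so that a single ``bad'' GRI step (applied to the $2L_k$-cube rather than the $L_k$-cube) exits the whole cluster and the next sub-cube is guaranteed $(E,m)$-NS---this is precisely what your loss term $\ord(J\ell)$ and your phrase ``at most $J$ times along any such chain'' are accounting for.
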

\begin{proof}
Since, $M(\BC^{(n)}_{L_{k+1}}(\Bx);E)\leq J$, there exists at most $J$ cubes of side length $2L_k$ contained in $\BC^{(n)}_{L_{k+1}}(\Bx)$ that are $(E,m)$-S with centers at distance $> 7NL_k$. Therefore, we can find $\Bx_i\in\BC^{(n)}_{L_{k+1}}(\Bu)\cap\Gamma_{\Bx}$ with $\Gamma_{\Bx}=\Bx+\frac{L_k}{3}\DZ^{nd}$
\[
\dist(\Bx_i,\partial\BC^{(n)}_{L_{k+1}}(\Bx))\geq 2L_k,\quad\text{ $i=1,\ldots,r\leq J$},
\]
such that, if $\Bx_0\in \BC^{(n)}_{L_{k+1}}(\Bx)\setminus\bigcup_{i=1}^r\BC^{(n)}_{2L_k}(\Bx_i)$, then the cube $\BC^{(n)}_{L_k}(\Bx_0)$ is $(E,m)$-NS.

We do an induction procedure in $\BC^{(n,int)}_{L_{k+1}}(\Bx)$ and start with $\Bx_0\in\BC^{(n,int)}_{L_{k+1}}(\Bx)$. We estimate $\|\Bone_{\BC^{(n,out)}_{L_{k+1}}(\Bx)}\BG^{(n)}_{L_{k+1}}(E)\Bone_{\BC^{(n,int)}_{L_k}(\Bx_0)}\|$. Suppose that $\Bx_0,\ldots,\Bx_{\ell}$ have been choosen for $\ell\geq 0$. We have two cases:

\begin{enumerate}
\item[case(a)] $\BC^{(n)}_{L_k}(\Bx_{\ell})$ is $(E,m)$-NS.\\
 In this case, we apply the (GRI) Theorem \ref{thm:GRI.GF} and obtain
\begin{gather*}
\|\Bone_{\BC^{(n,out)}_{L_{k+1}}(\Bx)}\BG^{(n)}_{\BC^{(n)}_{L_{k+1}}(\Bx)}(E)\Bone_{\BC^{(n,int)}_{L_{k}}(\Bx_0)}\|\\
\leq C_{geom}\|\Bone_{\BC^{(n,out)}_{L_{k+1}}(\Bx)}\BG^{(n)}_{\BC^{(n)}_{L_{k+1}}(\Bx)}(E)\Bone_{\BC^{(n,out)}_{L_k}(\Bx_0)}\|\cdot\|\Bone_{\BC^{(n,out)}_{L_k}(\Bx)}\BG^{(n)}_{\BC^{(n)}_{L_k}(\Bx_0)}(E)\Bone_{\BC^{(n,int)}_{L_k}(\Bx_0)}\|\\
\leq C_{geom}\|\Bone_{\BC^{(n,out)}_{L_{k+1}}(\Bx)}\BG^{(n)}_{\BC^{(n)}_{L_{k+1}}}(E)\Bone_{\BC^{(n,out)}_{L_k}(\Bx)}\|\cdot\ee^{-\gamma(m,L_k,n)L_k}.
\end{gather*}
 We replace in the above analysis $\Bx$ with $\Bx_{\ell}$ and we get 
\[
\|\Bone_{\BC^{(n,out)}_{L_{k+1}}(\Bx_{\ell})}\BG^{(n)}_{\BC^{(n)}_{L_{k+1}}(\Bx_{\ell})}(E)\Bone_{\BC^{(n,out)}_{L_k}(\Bx_{\ell})}\|\leq 3^{nd}\|\Bone_{\BC^{(n,out)}_{L_{k+1}}(\Bx_{\ell})}\BG^{(n)}_{\BC^{(n)}_{L_{k+1}}\Bx_{\ell}}(E)\Bone_{\BC^{(n,int)}_{L_{k}}(\Bx_{\ell+1}}\|,
\]
where $\Bx_{\ell+1}$ is choosen in such a way that the norm in the right hand side in the above equation is maximal. Observe that $|\Bx_{\ell}-\Bx_{\ell+1}|=L_k/3$. We therefore obtain

\begin{gather*}
\|\Bone_{\BC^{(n,out)}_{L_{k+1}}(\Bx)}\BG^{(n)}_{\BC^{(n)}_{L_{k+1}}}(E)\Bone_{\BC^{(n,int)}_{L_k}(\Bx_{\ell}}\|\\
\leq C_{geom}3^{nd}\ee^{-\gamma(m,L_k,n)L_k}\cdot\|\Bone_{\BC^{(n,out)}_{L_{k+1}}(\Bx)}\BG^{(n)}_{\BC^{(n)}_{L_{k+1}}\Bx)}(E)\Bone_{\BC^{(n,int)}_{L_k}(\Bx_{\ell+1})}\|\\
\leq \delta_{+}\|\Bone_{\BC^{n,out)}_{L_{k+1}}(\Bx)}\BG^{(n)}_{\BC^{(n)}_{L_{k+1}}(\Bx)}(E)\Bone_{\BC^{(n,int)}_{L_k}(\Bx_{\ell+1})}\|
\end{gather*}
with 
\[
\delta_{+}=3^{nd}C_{geom}\ee^{-\gamma(m,L_k,n)L_k}.
\]

\item[case(b)] $\BC^{(n)}_{L_k}(\Bx_{\ell})$ is  $(E,m)$-S. \\
Thus, there exists $i_0=1,\ldots,r$ such that $\BC^{(n)}_{L_k}(\Bx_{\ell})\subset\BC^{(n)}_{2L_k}(\Bx_{i_0})$. We apply again the (GRI) this time with $\BC^{n)}_{L_{k+1}}(\Bx)$ and $\BC^{(n)}_{2L_k}(\Bx_{i_0})$ and obtain
\begin{gather*}
\|
\Bone_{\BC^{(n,out)}_{L_{k+1}}(\Bx)}\BG^{(n)}_{\BC^{(n)}_{L_{k+1}}(\Bx)}(E)\Bone_{\BC^{(n,int)}_{2L_k}(\Bx_{i_0})}\|\leq C_{geom}\|\Bone_{\BC^{(n,out)}_{L_{k+1}}(\Bx)}\BG^{(n)}_{\BC^{(n)}_{L_{k+1}}(\Bx)}(E)\Bone_{\BC^{(n,out)}_{L_k}(\Bx_{i_0})}\|\\
\times  \|\Bone_{\BC^{(n,out)}_{L_k}(\Bx_{i_0})}\BG^{(n)}_{\BC^{(n)}_{L_k}(\Bx_{i_0})}(E)\Bone_{\BC^{(n,int)}_{L_k}(\Bx_{i_0})}\|\\
\leq C_{geom} \ee^{(2L_k)^{1/2}}\cdot \|\Bone_{\BC^{(n,out)}_{L_{k+1}}(\Bx)}\BG^{(n)}_{\BC^{(n)}_{L_{k+1}}(\Bx)}(E)\Bone_{\BC^{(n,out)}_{2L_k}(\Bx_{i_0})}\|
\end{gather*}

We have almost everywhere
\[
\Bone_{\BC^{(n,out)}_{2L_k}(\Bx_{i_0})}\leq \sum_{\tilde{\Bx}\in\BC^{(n)}_{2L_k}(\Bx_{i_0})\cap\Gamma_{\Bx_{i_0}},\BC^{(n)}_{L_k}(\tilde{\Bx})\not\subset\BC^{(n)}_{2L_k}(\Bx_{i_0})}\Bone_{\BC^{(n,int)}_{L_k}(\tilde{\Bx})}
\]
Hence, by choosing $\tilde{\Bx}$ such that the right hand side is maximal, we get
\[
\|\Bone_{\BC^{(n,out)}_{L_{k+1}}(\Bx)}\BG^{(n)}_{\BC^{(n)}_{L_{k+1}}(\Bx)}(E)\Bone_{\BC^{(n,int)}_{2L_k}(\Bx_{i_0})}\|\leq 6^{nd}\cdot\|\Bone_{\BC^{(n,out)}_{L_{k+1}}(\Bx)}\BG^{(n)}_{\BC^{(n)}_{L_{k+1}}(\Bx)}(E)\Bone_{\BC^{(n,int)}_{L_k}(\tilde{\Bx})}\|.
\]
Since, $\BC^{(n)}_{L_k}(\tilde{\Bx})\not\subset\BC^{(n)}_{2L_k}(\Bx_{i_0})$ , $\tilde{\Bx}\in\BC^{(n)}_{2L_k}(\Bx_{i_0})$ and the cubes $\BC^{(n)}_{2L_k}(\Bx_i)$ are disjoint, we obtain that
\[
\BC^{(n)}_{L_k}(\tilde{\Bx})\not\subset \bigcup_{i=1}^r\BC^{(n)}_{2L_k}(\Bx_i),
\]
so that the cube $\BC^{(n)}_{L_k}(\tilde{\Bx})$ must be $(E,m)$-NS. We therefore perform a new step as in case (a) and obtain:

\[
\cdots \leq 6^{nd}3^{nd}C_{geom}\cdot \ee^{-\gamma(m,L_k,n)L_k}\cdot\|\Bone_{\BC^{(n,out)}_{L_{k+1}}(\Bx)}\BG^{(n)}_{\BC^{(n)}_{L_{k+1}}(\Bx)}(E)\Bone_{\BC^{(n,int)}_{L_k}(\Bx_{\ell+1})}\|,
\]

with $\Bx_{\ell+1}\in\Gamma_{\tilde{\Bx}}$ and $|\tilde{\Bx}-\Bx_{\ell+1}|=L_k/3$.
\end{enumerate}

 Summarizing, we get $\Bx_{\ell+1}$ with 

\[
\|\Bone_{\BC^{(n,out)}_{L_{k+1}}(\Bx)}\BG^{(n)}_{\BC^{(n)}_{L_{k+1}}(\Bx)}(E)\Bone_{\BC^{(n,int)}_{L_k}(\Bx_{\ell})}\|\leq \delta_{0}\cdot\|\Bone_{\BC^{(n,out)}_{L_{k+1}}(\Bx)}\BG^{(n)}_{\BC^{(n)}_{L_{k+1}}(\Bx)}(E)\Bone_{\BC^{(n,int)}_{L_{k+1}}(\Bx_{\ell+1})}\|,
\]
with $\delta_0=18^{nd} C_{geom}^2\cdot\ee^{(2L_k)^1/2}\ee^{-\gamma(m,L_k,n)L_k}$. After $\ell$ iterations  with $n_+$ steps of case (a) and $n_0$ steps of case (b), we obtain

\[
\|\Bone_{\BC^{(n,out)}_{L_{k+1}}(\Bx)}\BG^{(n)}_{\BC^{(n)}_{L_{k+1}}(\Bx)}(E)\Bone_{\BC^{(n,int)}_{L_k}(\Bx_0)}\|\leq (\delta_+)^{n_+}(\delta_0)^{n_0}\cdot\|\Bone_{\BC^{(n,out)}_{L_{k+1}}(\Bx)}\BG^{(n)}_{\BC^{(n)}_{L_{k+1}}(\Bx)}(E)\Bone_{\BC^{(n,int)}_{L_k}(\Bx_{\ell})}\|.
\]
Now since $\gamma(m,L_k,n)>m$, we have that 
\[
\delta_+\leq 3^{nd}\cdot C_{geom}\ee^{-mL_k}.
\]
So $\delta_+$ can be made arbitrarily  small if $L_0$ and hence $L_k$ is large enough. We also have for $\delta_0$:
\begin{align*}
\delta_0&=18^{nd} C_{geom}^2\ee^{(2L_k)^1/2}\ee^{-\gamma(m,n,L_k)L_k}\\
&=18^{nd} C_{geom}^2\ee^{\sqrt{2}L_k^{1/2}}\ee^{-\gamma(m,n,L_k)L_k}\\
&\leq 18^{nd} C_{geom}^2\ee^{\sqrt{2}L_k^{1/2}-mL_k}<\frac{1}{2},
\end{align*}
For large $L_0$ and hence $L_k$. Using the (GRI), we can iterate if $\BC^{(n,out)}_{L_{k+1}}(\Bx)\cap\BC^{(n)}_{L_k}(\Bx_{\ell})=\emptyset$. Thus, we can have at least $n_+$ steps of case (a) with,
\[
n_+\cdot\frac{L_k}{3}+\sum_{i=1}^r2L_k\geq \frac{L_{k+1}}{3}-\frac{L_k}{3},
\]
until the induction eventually stop. Since $r\leq J$, we can bound $n_{+}$ from below .
\begin{align*}
n_+\cdot\frac{L_k}{3}&\geq \frac{L_{k+1}}{3}-\frac{L_k}{3}-r(L_k)\\
&\geq \frac{L_{k+1}}{3}-\frac{L_k}{3}-2JL_k\\
\end{align*}
which yields
\begin{align*}
n_+&\geq \frac{L_{k+1}}{L_k}-1-6J\\
&\geq \frac{L_{k+1}}{L_k}-7J
\end{align*}
Therefore,
\begin{equation}\label{eq:NR.NS}
\|\Bone_{BC^{(n,out)}_{L_{k+1}}(\Bx)}\BG^{(n)}_{\BC^{(n)}_{L_{k+1}}(\Bx)}(E)\Bone_{\BC^{(n,int)}_{L_k}(\Bx_0)}\|\leq \delta_+^{n_+}\cdot\|\BG^{(n)}_{\BC^{(n)}_{L_{k+1}}(\Bx)}(E)\|.
\end{equation}
Finally, by $E$-nonresonance of $\BC^{(n)}_{L_{k+1}}(\Bx)$ and since we can cover  $\BC^{(n,int)}_{L_{k+1}}(\Bx)$ by $\left(\frac{L_{k+1}}{L_k}\right)^{nd}$ small cubes $\BC^{(n,int)}_{L_k}(\By)$, equation \eqref{eq:NR.NS} with $y$ instead of $\Bx_0$ yields

\begin{gather*}
\|\Bone_{\BC^{(n,out)}_{L_{k+1}}(\Bx)}\BG^{(n)}_{\BC^{(n)}_{L_{k+1}}(\Bx)}(E)\Bone_{\BC^{(n,int)}_{L_{k+1}}(\Bx)}\| \\
\leq \left(\frac{L_{k+1}}{L_k}\right)^{nd}\cdot\delta_{n_+}\cdot\ee^{L_{k+1}^{1/2}}\\
\leq \left(\frac{L_{k+1}}{L_k}\right)^{nd}\cdot\left[3^{nd}\cdot Cgeom\cdot\ee^{-\gamma(m,L_k,n)L_k}\right]^{\frac{L_{k+1}}{L_k}-7J}\ee^{L_{k+1}^{1/2}}\\
\leq L_{k+1}^{nd}L_{k+1}^{-\frac{nd}{\alpha}} C(n,d)^{\frac{L_{k+1}}{L_k}-7J}\ee^{-\gamma(m,L_k,n)(\frac{L_{k+1}}{L_k}-7J)}\times\ee^{L_{k+1}^{1/2}}\\
\leq L_{k+1}^{nd/3}\ee^{(L_{k+1}^{1/3}-7J)\ln C(n,d)}\ee^{-\gamma(m,L_k,n)(L_{k+1}^{1/3}-7J)}\ee^{L_{k+1}^{1/2}}\\
\leq \ee^{-\left[-\frac{nd}{3}\ln(L_{k+1})-L_{k+1}^{1/3}\ln(C)+7J\ln(C(n,d))+7J\ln(C(n,d))+\gamma(m,L_k,n)L_{k+1}^{1/3}-7J\gamma(m,L_k,n)-L_{k+1}^{1/2}\right]}\\
\leq \ee^{-\left[\frac{-nd}{3}\frac{\ln L_{k+1}}{L_{k+1}}-\frac{L_{k+1}^{1/3}\ln C(n,d)}+\frac{7J\ln(C(n,d))}{L_{k+1}}+\gamma(m,L_k,n)\frac{L_{k+1}^{1/3}}{L_{k+1}}-7J\frac{\gamma(m,L_k,n)}{L_{k+1}}-{L_{k+1}}^{-1/2}\right]L_{k+1}}\\
\leq \ee^{-m'L_{k+1}},
\end{gather*}
where 
\[
m'=\frac{1}{L_{k+1}}\left[ n_+\gamma(m,L_k,n)L_k-n_+\ln((2^{Nd}NdL_k^{nd-1})\right]-\frac{1}{L_{k+1}^{1/2}},
\]
with
\[
L_{k+1}L_k^{-1}-7J\leq n_+\leq L_{k+1}L_{k}^{-1}; 
\]
we obtain
\begin{align*}
m'&\geq \gamma(m,L_k,n)-\gamma(m,L_k,n)\, \frac{4JL_k}{L_{k+1}}
\\
\qquad &-\frac{1}{L_{k+1}}\frac{L_{k+1}}{L_k}\ln(2^{Nd}Nd)L_k^{nd-1})-\frac{1}{L_{k+1}^{1/2}}
\\
&\geq \gamma(m,L_k,n)-\gamma(m,L_k,n)\, 4J L_k^{-1/2}
\\
&\qquad -L_k^{-1}(\ln(2^{Nd}Nd))+(nd-1)\ln(L_k))-L_{k}^{-3/4}
\\
&\geq \gamma(m,L_k,n)[1-(4J+\ln(2^{Nd}Nd)+Nd)L_k^{-1/2}]
\end{align*}

if $L_0\geq L_2^*(J,N,d)$ for some $L^*_2(J,N,d)>0$ large enough. Since $\gamma(m,L_k,n)=m(1+L_k^{-1/8})^{N-n+1}$,
\[
\frac{\gamma(m,L_k,n)}{\gamma(m,L_{k+1},n)}=\left(\frac{1+L_k^{-1/8}}{1+L_k^{-3/16}}\right)^{N-n+1}\geq \frac{1+L_k^{-1/8}}{1+L_k^{-3/16}}
\]
Therefore we can compute
\begin{align*}
&\frac{\gamma(m,L_k,n)}{\gamma(m,L_{k+1},n)}(1-(4J+\ln(2^{Nd}Nd)+Nd)L_k^{-1/2})\\
 &\qquad \geq\frac{1+L_k^{-1/8}}{1+L_k^{-3/16}}(1-(4J+\ln(2^{Nd}Nd)+Nd)L_k^{-1/2})>1,
\end{align*}
provided $L_0\geq \tilde{L}_2^*(J,N,d)$ for some large enough $\tilde{L}_2^*(J,N,d)>0$. Finally, we obtain that $m'>\gamma(m,L_{k+1},n)$ and $|\BG_{\BC^{(n)}_{L_{k+1}}(\Bu)}(\Bu,\Bv;E)|\leq \ee^{-\gamma(m,L_{k+1},n)L_{k+1}}$. This proves the result. 
\end{proof}

The main result of this subsection is Theorem \ref{thm:fully.interactive}. We will  need the following preliminary results.

\begin{lemma}\label{lem:MD}
Given $k\geq0$, assume that property $\dsknn$ holds true for all pairs of separable FI cubes. Then for any $\ell\geq 1$
\begin{equation}\label{eq:MD}
\DP\left\{M_{\fui}(\BC_{L_{k+1}}^{(n)}(\Bu),I)\geq 2\ell\right\}\leq C(n,N,d,\ell)L_k^{2\ell dn\alpha}L_k^{-2\ell p\,4^{N-n}}.
\end{equation}
\end{lemma}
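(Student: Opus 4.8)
The plan is to reduce the event $\{M_{\fui}(\BC_{L_{k+1}}^{(n)}(\Bu),I)\geq 2\ell\}$ to a union of events concerning pairs of separable FI cubes, each of which is controlled by the assumed property $\dsknn$. First I would invoke the footnote observation: by Lemma \ref{lem:FI.cubes}, any two FI cubes $\BC^{(n)}_{L_k}(\Bu^{(j)})$ and $\BC^{(n)}_{L_k}(\Bu^{(j')})$ with $|\Bu^{(j)}-\Bu^{(j')}|>7NL_k$ are automatically separable. Hence, if $M_{\fui}(\BC_{L_{k+1}}^{(n)}(\Bu),I)\geq 2\ell$, then for some $E\in I$ there are at least $2\ell$ FI cubes of side $2L_k$ inside $\BC_{L_{k+1}}^{(n)}(\Bu)$, pairwise at distance $>7NL_k$, all $(E,m)$-S, and from this family I can extract $\ell$ disjoint \emph{pairs}, each pair being a pair of separable $(E,m)$-S FI cubes.

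The next step is a union bound over the possible centers. The centers $\Bu^{(j)}$ can be taken in a lattice $\Gamma = \Bu + c\DZ^{nd}$ (with a fixed constant $c$, e.g. $c=1$ since centers lie in $\DZ^{nd}$), and there are at most $(2L_{k+1})^{nd}$ admissible center positions inside $\BC_{L_{k+1}}^{(n)}(\Bu)$; therefore the number of ways to choose $\ell$ pairs of such centers is bounded by $\bigl((2L_{k+1})^{2nd}\bigr)^{\ell}/\ell! \le C(n,d,\ell)L_{k+1}^{2\ell nd}$. For each fixed choice of $\ell$ pairs $\bigl(\BC^{(n)}_{L_k}(\Ba_i),\BC^{(n)}_{L_k}(\Bb_i)\bigr)$, $i=1,\dots,\ell$, with the pairs mutually disjoint (the cubes of different pairs being far apart), the events ``$\exists E\in I: \BC^{(n)}_{L_k}(\Ba_i)$ and $\BC^{(n)}_{L_k}(\Bb_i)$ are $(E,m)$-S'' are independent across $i$, because each such event depends only on the random potential in $\varPi\BC^{(n)}_{L_k}(\Ba_i)\cup\varPi\BC^{(n)}_{L_k}(\Bb_i)$, and by the separability/distance of the pairs these projection sets are disjoint. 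By $\dsknn$ each of these $\ell$ events has probability at most $L_k^{-2p\,4^{N-n}}$, so their intersection has probability at most $L_k^{-2\ell p\,4^{N-n}}$. Here I must be careful that the FI event for a single pair, tested over \emph{all} $E\in I$, is still the event covered by $\dsknn$; this is exactly the form of $\dsknn$, so no extra work is needed, although one should note in passing that replacing $L_{k+1}$ by $L_k^{\alpha}$ (up to the floor) converts the $L_{k+1}^{2\ell nd}$ factor into $L_k^{2\ell nd\alpha}$, matching the stated bound.

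Combining the counting factor $C(n,d,\ell)L_k^{2\ell nd\alpha}$ with the per-configuration probability $L_k^{-2\ell p\,4^{N-n}}$ yields
\[
\DP\left\{M_{\fui}(\BC_{L_{k+1}}^{(n)}(\Bu),I)\geq 2\ell\right\}\leq C(n,N,d,\ell)L_k^{2\ell dn\alpha}L_k^{-2\ell p\,4^{N-n}},
\]
which is the claim. The main obstacle I expect is the bookkeeping in the extraction step: one needs to ensure that from $2\ell$ pairwise-$7NL_k$-separated singular FI cubes one can genuinely extract $\ell$ pairs whose \emph{combined} projections (over the two cubes in a pair) are disjoint from those of any other pair, so that independence applies cleanly; this requires choosing the pairing greedily and possibly a slightly larger separation threshold absorbed into constants, together with Lemma \ref{lem:FI.cubes} to guarantee disjointness of projections. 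Everything else — the lattice discretization of centers, the union bound, and the i.i.d. structure of $V$ — is routine.
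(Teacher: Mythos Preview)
Your proposal is correct and follows essentially the same route as the paper: extract $\ell$ pairs from the $2\ell$ FI singular cubes, apply $\dsknn$ to each pair, use independence of the $\ell$ events (which holds because by Lemma~\ref{lem:FI.cubes} all $2\ell$ projections are pairwise disjoint, so any grouping into pairs yields independent events), and multiply by the counting factor $\tfrac{1}{(2\ell)!}|\BC_{L_{k+1}}^{(n)}(\Bu)|^{2\ell}\le C L_k^{2\ell nd\alpha}$. The obstacle you flag about cross-pair disjointness is in fact automatically resolved by Lemma~\ref{lem:FI.cubes} applied to every pair of the $2\ell$ cubes, so no greedy pairing or enlarged threshold is needed.
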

\begin{proof}
See the proof in the appendix Section \ref{sec:appendix}.
\end{proof}

\begin{theorem}\label{thm:fully.interactive}
Let $1\leq n\leq N$. There exists $L_2^*=L_2^*(N,d)>0$ such that if $L_0\geq L_2^*$ and if for $k\geq 0$
\begin{enumerate}[\rm(i)]
\item
$\dsn{k-1,n'}$ for all $1\leq n'<n$ holds true,
\item
$\dsn{k,n}$ holds true for all pairs of FI cubes,
\end{enumerate}
then $\dskonn$ holds true
for any pair of separable FI cubes $\BC_{L_{k+1}}^{(n)}(\Bx)$ and $\BC_{L_{k+1}}^{(n)}(\By)$.
\end{theorem}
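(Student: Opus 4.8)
The plan is to reduce the case of separable fully interactive cubes to a Wegner-type bound, exactly paralleling the structure used for the partially interactive case in Theorem \ref{thm:partially.interactive}, but now invoking Lemma \ref{lem:CNR.NS} (the continuum analogue of the hard von~Dreifus--Klein--Stollmann estimate) in place of Lemma \ref{lem:NDRoNS}. First I would fix two separable FI cubes $\BC_{L_{k+1}}^{(n)}(\Bx)$ and $\BC_{L_{k+1}}^{(n)}(\By)$ and introduce the ``bad'' event
\[
\rB_{k+1}=\bigl\{\exists E\in I:\ \BC_{L_{k+1}}^{(n)}(\Bx)\text{ and }\BC_{L_{k+1}}^{(n)}(\By)\text{ are }(E,m)\text{-S}\bigr\},
\]
together with the events $\rR=\{\exists E\in I:\text{ neither cube is $E$-CNR}\}$, and the ``many singular boxes'' events $\rM_{\Bx}=\{M(\BC_{L_{k+1}}^{(n)}(\Bx),I)\ge J\}$ and $\rM_{\By}=\{M(\BC_{L_{k+1}}^{(n)}(\By),I)\ge J\}$ with $J=\kappa(n)+5$. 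The key inclusion is $\rB_{k+1}\subset\rR\cup\rM_{\Bx}\cup\rM_{\By}$: indeed, on the complement, for every $E\in I$ at least one of the two cubes is $E$-CNR, and that cube also satisfies $M(\cdot,E)\le J$, so by Lemma~\ref{lem:CNR.NS} it is $(E,m)$-NS --- contradicting the definition of $\rB_{k+1}$.

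Next I would estimate the three probabilities. For $\DP\{\rR\}$, combine Theorem~\ref{thm:Wegner}(B) with Lemma~\ref{lem:HNR}-type reasoning (here the relevant statement is that failure of $E$-CNR forces an internal resonant subcube, so part (A)/(B) of the Wegner estimate applies): since $\BC_{L_{k+1}}^{(n)}(\Bx)$ and $\BC_{L_{k+1}}^{(n)}(\By)$ are separable, their projections are disjoint by Lemma~\ref{lem:FI.cubes}, so independence gives a product bound and one gets $\DP\{\rR\}\le L_{k+1}^{-4^N p}$, which is much smaller than $L_{k+1}^{-2p4^{N-n}}$. For $\DP\{\rM_{\Bx}\}$ and $\DP\{\rM_{\By}\}$, I would use the decomposition
\[
M(\BC_{L_{k+1}}^{(n)}(\Bu),I)\le M_{\pai}(\BC_{L_{k+1}}^{(n)}(\Bu),I)+M_{\fui}(\BC_{L_{k+1}}^{(n)}(\Bu),I)
\]
and bound each term: the PI-singular count is controlled by Lemma~\ref{lem:MND} (using hypothesis (i), that is $\dsn{k-1,n'}$ for $n'<n$), and the FI-singular count is controlled by Lemma~\ref{lem:MD} with $2\ell$ chosen just above $J$ (using hypothesis (ii), that $\dsn{k,n}$ holds for FI pairs). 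Both contribute polynomially-in-$L_{k+1}$ factors times negative powers $L_k^{-2\ell p4^{N-n}}$; since $L_k\sim L_{k+1}^{1/\alpha}$ with $\alpha=3/2$ and $\ell$ is of order $J=\kappa(n)+5$, the exponent after re-expressing in $L_{k+1}$ is far more negative than $-2p4^{N-n}$ once $p>6Nd$ and $L_0$ is large enough to absorb the volume prefactors.

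Finally I would collect the bounds:
\[
\DP\{\rB_{k+1}\}\le\DP\{\rR\}+\DP\{\rM_{\Bx}\}+\DP\{\rM_{\By\}}\le L_{k+1}^{-4^N p}+2\,C\,L_{k+1}^{-(\text{large})}<L_{k+1}^{-2p4^{N-n}}
\]
for $L_0\ge L_2^*(N,d)$, which is exactly $\dskonn$ for the separable FI pair, completing the induction step. The main obstacle is not the probabilistic bookkeeping but making sure that the constant $J=\kappa(n)+5$ fed into Lemma~\ref{lem:CNR.NS} is consistent with the $2\ell$ appearing in Lemma~\ref{lem:MD} and with the threshold $\kappa(n)+2$ in Lemmas~\ref{lem:MPI} and \ref{lem:MND}, so that ``not many singular boxes'' genuinely implies the hypothesis ``$M(\cdot,E)\le J$'' of Lemma~\ref{lem:CNR.NS}; and, in tandem, verifying that the deterministic radius thresholds $L_1^*$, $L_2^*$, $\tilde L_2^*$ from the various lemmas can be taken uniformly, so a single $L_2^*(N,d)$ works. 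The bookkeeping of which scale ($L_k$ versus $L_{k+1}$) and which particle-number hypothesis is used where must be tracked carefully, but no genuinely new idea beyond Lemma~\ref{lem:CNR.NS} is needed.
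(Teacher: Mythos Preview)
Your approach is essentially the same as the paper's: the same inclusion $\rB_{k+1}\subset\Sigma\cup\rS_{\Bx}\cup\rS_{\By}$ (in the paper's notation) via Lemma~\ref{lem:CNR.NS}, the same Wegner bound for the double non-CNR event, and the same splitting $M\ge J+1\Rightarrow M_{\pai}\ge\kappa(n)+2$ or $M_{\fui}\ge 4$ handled by Lemmas~\ref{lem:MND} and~\ref{lem:MD}. The one imprecision is your choice of $2\ell$ ``just above $J$'' in Lemma~\ref{lem:MD}: the paper takes $\ell=2$ (so $2\ell=4$), which is exactly the threshold forced by the split $(\kappa(n)+6)-(\kappa(n)+2)=4$ and is the consistency check you correctly flagged as the main bookkeeping issue; also, the reference to Lemma~\ref{lem:HNR} for the Wegner bound is unnecessary here, since Theorem~\ref{thm:Wegner}(B) applies directly.
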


Above, we used the convention that $\dsn{-1,n}$ means no assumption.

\begin{proof}
Consider a pair of separable FI cubes $\BC_{L_{k+1}}^{(n)}(\Bx)$, $\BC_{L_{k+1}}^{(n)}(\By)$ and set $J=\kappa(n)+5$. Define
\begin{align*}
\rB_{k+1}
&=\left\{\exists\,E\in I_0:\BC_{L_{k+1}}^{(n)}(\Bx)\text{ and $\BC_{L_{k+1}}^{(n)}(\By)$ are $(E,m)$-S}\right\},\\
\Sigma
&=\left\{\exists\,E\in I_0:\text{neither $\BC_{L_{k+1}}^{(n)}(\Bx)$ nor $\BC_{L_{k+1}}^{(n)}(\By)$ is $E$-CNR}\right\},\\
\rS_{\Bx}
&=\left\{\exists\,E\in I_0:M(\BC_{L_{k+1}}^{(n)}(\Bx);E)\geq J+1\right\},\\
\rS_{\By}
&=\left\{\exists\,E\in I_0:M(\BC_{L_{k+1}}^{(n)}(\By),E)\geq J+1\right\}.
\end{align*}
Let $\omega\in\rB_{k+1}$. If $\omega\notin\Sigma\cup\rS_{\Bx}$, then $\forall E\in I_0$ either $\BC_{L_{k+1}}^{(n)}(\Bx)$ or $\BC_{L_{k+1}}^{(n)}(\By)$ is $E$-CNR and $M(\BC_{L_{k+1}}^{(n)}(\Bx),E)\leq J$. The cube $\BC_{L_{k+1}}^{(n)}(\Bx)$ cannot be $E$-CNR: indeed, by Lemma \ref{lem:CNR.NS} it would be $(E,m)$-NS. So the cube $\BC_{L_{k+1}}^{(n)}(\By)$ is $E$-CNR and $(E,m)$-S. This implies again by Lemma \ref{lem:CNR.NS} that
\[
M(\BC_{L_{k+1}}^{(n)}(\By),E)\geq J+1.
\]
Therefore $\omega\in\rS_{\By}$, so that $\rB_{k+1}\subset\Sigma\cup\rS_{\Bx}\cup\rS_{\By}$, hence
\[
\DP\left\{\rB_{k+1}\right\}
\leq\DP\{\Sigma\}+\DP\{\rS_{\Bx}\}+\DP\{\rS_{\By}\}
\]
and $\prob{\Sigma}\leq L_{k+1}^{-4^N\,p}$ by Theorem \ref{thm:Wegner}.
Now let us estimate $\DP\{\rS_{\Bx}\}$ and similarly $\DP\{\rS_{\By}\}$. Since 
\[
M_{\pai}(\BC_{L_{k+1}}^{(n)}(\Bx),E)+M_{\fui}(\BC_{L_{k+1}}^{(n)}(\Bx),E)\geq M(\BC_{L_{k+1}}^{(n)}(\Bx),E),
\] 
the inequality $M(\BC_{L_{k+1}}^{(n)}(\Bx),E)\geq\kappa(n)+ 6$, implies that either $M_{\pai}(\BC_{L_{k+1}}^{(n)}(\Bx),E)\geq\kappa(n)+ 2$, or $M_{\fui}(\BC_{L_{k+1}}^{(n)}(\Bx),E)\geq 4$. Therefore, by Lemma \ref{lem:MND} and Lemma \ref{lem:MD} (with $\ell=2$),
\begin{align*}
\DP\{\rS_{\Bx}\}&\leq\DP\left\{\exists\,E\in I:M_{\pai}(\BC_{L_{k+1}}^{(n)}(\Bx),E)\geq \kappa(n)+2\right\}\\
&\quad+\DP\left\{\exists\,E\in I:M_{\fui}(\BC_{L_{k+1}}^{(n)}(\Bx),E)\geq 4\right\}\\
&\leq\frac{3^{2nd}}{2}L_{k+1}^{2nd}(L_k^{-4^Np}+L_k^{-4p\,4^{N-n}})+C'(n,N,d)L_{k+1}^{4 dn-\frac{4 p}{\alpha}4^{N-n}}\\
&\leq C''(n,N,d)\left(L_{k+1}^{-\frac{4^Np}{\alpha}+2nd}+L_{k+1}^{-\frac{4p}{\alpha}4^{N-n}+2nd}+L_{k+1}^{-\frac{4p}{\alpha}4^{N-n}+4nd}\right)\\
&\leq C'''(n,N,d) L_{k+1}^{-\frac{4p}{\alpha}4^{N-n}+4nd}\tag{$\alpha=3/2$}\\
&\leq\frac{1}{4}L_{k+1}^{-2p\,4^{N-n}},
\end{align*}
where we used $p>4\alpha Nd=6Nd$. Finally 
\[ 
\prob{\rB_{k+1}}\leq L_{k+1}^{-4^Np}+\frac{1}{2}L_{k+1}^{-2p4^{N-n}}<L_{k+1}^{-2p4^{N-n}}.
\]
\end{proof}

\subsection{Mixed pairs of cubes}\label{ssec:mixed.S}
Finally, it remains only to derive $\dskonn$ in case (III), i.e., for pairs of $n$-particle cubes where one is PI while the other is FI.

\begin{theorem}\label{thm:mixed}
Let $1\leq n\leq N$. There exists $L_3^*=L_3^*(N,d)>0$ such that if $L_0\geq L_3^*(N,d)$ and if for $k\geq 0$,
\begin{enumerate}[\rm(i)]
\item
$\dsn{k-1,n'}$ holds true for all $1\leq n'<n$,
\item
$\dsn{k,n'}$ holds true for all $1\leq n'<n$ and
\item
$\dsknn$ holds true for all pairs of FI cubes,
\end{enumerate}
then $\dskonn$ holds true for any pair of separable cubes $\BC_{L_{k+1}}^{(n)}(\Bx)$, $\BC_{L_{k+1}}^{(n)}(\By)$ where one is PI while the other is FI.
\end{theorem}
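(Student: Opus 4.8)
The plan is to handle the mixed pair exactly as in the two homogeneous cases (Theorems \ref{thm:partially.interactive} and \ref{thm:fully.interactive}), by decomposing the bad event $\rB_{k+1}=\{\exists E\in I_0:\BC_{L_{k+1}}^{(n)}(\Bx)\text{ and }\BC_{L_{k+1}}^{(n)}(\By)\text{ are }(E,m)\text{-S}\}$ according to which cube fails which non-resonance property. Say $\BC_{L_{k+1}}^{(n)}(\Bx)$ is the FI cube and $\BC_{L_{k+1}}^{(n)}(\By)$ is the PI cube. I would introduce the events $\Sigma=\{\exists E\in I_0:\text{neither cube is }E\text{-CNR}\}$ together with a refinement: since PI-HNR (Definition \ref{def:HNR}) is the right non-resonance notion for the PI cube while $E$-CNR is the one for the FI cube, I would use the event $\rR=\{\exists E\in I_0:\BC_{L_{k+1}}^{(n)}(\Bx)\text{ is not }E\text{-CNR and }\BC_{L_{k+1}}^{(n)}(\By)\text{ is not }E\text{-HNR}\}$, the FI-counting event $\rS_{\Bx}=\{\exists E\in I_0:M(\BC_{L_{k+1}}^{(n)}(\Bx),E)\geq J+1\}$ with $J=\kappa(n)+5$, and the PI-tunnelling event $\rT_{\By}=\{\exists E\in I_0:\BC_{L_{k+1}}^{(n)}(\By)\text{ is }(E,m)\text{-T}\}$.

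Next I would verify the set inclusion $\rB_{k+1}\subset\rR\cup\rS_{\Bx}\cup\rT_{\By}$. If $\omega\notin\rR\cup\rS_{\Bx}\cup\rT_{\By}$ then for every $E\in I_0$: either the FI cube $\BC_{L_{k+1}}^{(n)}(\Bx)$ is $E$-CNR, in which case (since $M(\BC_{L_{k+1}}^{(n)}(\Bx),E)\le J$ outside $\rS_{\Bx}$) Lemma \ref{lem:CNR.NS} forces it to be $(E,m)$-NS; or the PI cube $\BC_{L_{k+1}}^{(n)}(\By)$ is $E$-HNR, in which case (being $(E,m)$-NT outside $\rT_{\By}$) Lemma \ref{lem:NDRoNS} forces it to be $(E,m)$-NS. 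Either way at least one of the two cubes is $(E,m)$-NS for every $E\in I_0$, so $\omega\notin\rB_{k+1}$. This gives
\[
\DP\{\rB_{k+1}\}\leq\DP\{\rR\}+\DP\{\rS_{\Bx}\}+\DP\{\rT_{\By}\}.
\]

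Then I would estimate each term by quoting the estimates already proved. For $\DP\{\rR\}$ I combine Theorem \ref{thm:Wegner} with Lemma \ref{lem:HNR} (exactly as in the proof of Theorem \ref{thm:partially.interactive}, where the same event was bounded by $L_{k+1}^{-4^Np}$): the failure of $E$-CNR for $\BC_{L_{k+1}}^{(n)}(\Bx)$ and of $E$-HNR for $\BC_{L_{k+1}}^{(n)}(\By)$ each produces a rectangle that is $E$-R, and the probability that two such rectangles (with suitably distant projections, using separability and the fact that the PI decomposition splits the configuration) are simultaneously $E$-R for some $E$ is bounded via the double Wegner bound \eqref{eq:cor.Wegner.2B}, giving $\DP\{\rR\}\leq L_{k+1}^{-4^Np}$. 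For $\DP\{\rS_{\Bx}\}$ I reuse verbatim the computation from the proof of Theorem \ref{thm:fully.interactive}: splitting $M\ge\kappa(n)+6$ into $M_{\pai}\ge\kappa(n)+2$ or $M_{\fui}\ge4$ and applying Lemma \ref{lem:MND} (valid since $\dsn{k-1,n'}$ holds for $n'<n$) and Lemma \ref{lem:MD} with $\ell=2$ (valid since $\dsn{k,n}$ holds for FI pairs by hypothesis (iii)), which yields $\DP\{\rS_{\Bx}\}\leq\frac14 L_{k+1}^{-2p4^{N-n}}$ using $p>6Nd$. For $\DP\{\rT_{\By}\}$ I invoke Lemma \ref{lem:T.estimate} directly (applicable since $\dsn{k,n'}$ holds for all $1\le n'<n$ by hypothesis (ii)), obtaining $\DP\{\rT_{\By}\}\leq\frac12 L_{k+1}^{-4p4^{N-n}}\le\frac14 L_{k+1}^{-2p4^{N-n}}$. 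Adding up, $\DP\{\rB_{k+1}\}\leq L_{k+1}^{-4^Np}+\frac12 L_{k+1}^{-2p4^{N-n}}<L_{k+1}^{-2p4^{N-n}}$ for $L_0\geq L_3^*(N,d)$ large enough, which is $\dskonn$.

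I do not expect a genuine obstacle here: the mixed case is essentially a bookkeeping combination of the PI machinery (Lemmas \ref{lem:HNR}, \ref{lem:NDRoNS}, \ref{lem:T.estimate}) applied to the PI cube and the FI machinery (Lemmas \ref{lem:CNR.NS}, \ref{lem:MND}, \ref{lem:MD}) applied to the FI cube. The one point that needs a little care is the probabilistic independence used to bound $\DP\{\rR\}$: one must check that separability of the pair together with the PI splitting of $\BC_{L_{k+1}}^{(n)}(\By)$ guarantees that the relevant sub-configurations of the two cubes have disjoint $\DZ^d$-projections, so that the events entering the double Wegner estimate are independent — this is the analogue of the remark in Section \ref{sec:Nparticle.scheme} that sufficiently distant fully interactive cubes have disjoint projections, and it follows from Lemma \ref{lem:separable.distant} and Definition \ref{def:separability}.
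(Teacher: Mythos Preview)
Your approach is correct and follows essentially the same structure as the paper's proof: decompose $\rB_{k+1}$ into a Wegner-type non-resonance event, a tunnelling event for the PI cube, and a counting event $\rS$ for the FI cube, then bound each via Theorem~\ref{thm:Wegner}, Lemma~\ref{lem:T.estimate}, and the combination Lemma~\ref{lem:MND}/Lemma~\ref{lem:MD} respectively. The paper merely swaps the roles (taking $\BC_{L_{k+1}}^{(n)}(\Bx)$ PI and $\BC_{L_{k+1}}^{(n)}(\By)$ FI), writes the tunnelling event as an (undefined) ``$m$-non-localized'' event $\mathcal{N}_{\Bx}$, and uses the pure CNR event $\Sigma=\{\exists E:\text{neither cube is }E\text{-CNR}\}$ in place of your mixed CNR/HNR event $\rR$; your choice of $E$-HNR for the PI cube is in fact the more careful one, since Lemma~\ref{lem:NDRoNS} requires HNR rather than CNR as its hypothesis.
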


\begin{proof}
Consider a pair of separable $n$-particle cubes $\BC_{L_{k+1}}^{(n)}(\Bx)$, $\BC_{L_{k+1}}^{(n)}(\By)$ and suppose that $\BC_{L_{k+1}}^{(n)}(\Bx)$ is PI while $\BC_{L_{k+1}}^{(n)}(\By)$ is FI. Set $J=\kappa(n)+5$ and introduce the events
\begin{align*}
\rB_{k+1}
&=\left\{\exists\,E\in I_0:\BC_{L_{k+1}}^{(n)}(\Bx)\text{ and $\BC_{L_{k+1}}^{(n)}(\By)$ are $(E,m)$-S}\right\},\\
\Sigma
&=\left\{\exists\,E\in I_0: \BC_{L_{k+1}}^{(n)}(\Bx)\ \text{is not $E$-CNR and}\ \BC_{L_{k+1}}^{(n)}(\By)\text{ is not $E$-CNR}\right\},\\
\mathcal{N}_{\Bx}
&=\left\{ \BC_{L_{k+1}}^{(n)}(\Bx)\text{ is $m$-non-localized}\right\},\\
\rS_{\By}
&=\left\{ \exists\,E\in I_0:M(\BC_{L_{k+1}}^{(n)}(\By),E)\geq J+1\right\}.
\end{align*}
Let $\omega\in \rB_{k+1}\setminus (\Sigma\cup \mathcal{N}_{\Bx})$, then for all $E\in I_0$ either $\BC_{L_{k+1}}^{(n)}(\Bx)$ is $E$-CNR or $\BC_{L_{k+1}}^{(n)}(\By)$ is $E$-CNR and  $\BC_{L_{k+1}}^{(n)}(\Bx)$ is $m$-localized. The cube $\BC_{L_{k+1}}^{(n)}(\Bx)$ cannot be $E$-CNR. Indeed, by Lemma \ref{lem:T.estimate} it would have been $(E,m)$-NS. Thus the cube $\BC_{L_{k+1}}^{(n)}(\By)$ is $E$-CNR, so by Lemma \ref{lem:CNR.NS}, $M(\BC_{L_{k+1}}^{(n)}(\By);E)\geq J+1$: otherwise $\BC_{L_{k+1}}^{(n)}(\By)$ would be $(E,m)$-NS. Therefore $\omega\in \rS_{\By}$. Consequently,
\[
\rB_{k+1}\subset \Sigma \cup \mathcal{N}_{\Bx}\cup\rS_{\By}.
\]
Recall that the probabilities $\DP\{\mathcal{N}_{\Bx}\}$ and $\DP\{\rS_{\By}\}$ have already been estimated in Sections \ref{ssec:PI.cubes} and \ref{ssec:FI.cubes}. We therefore obtain
\begin{align*}
\DP\left\{\rB_{k+1}\right\}&\leq \DP\{\Sigma\}+\DP\{\mathcal{N}_{\Bx}\}+\DP\{\rS_{\By}\}\\
&\leq L_{k+1}^{-4^Np}+ \frac{1}{2}L_{k+1}^{-4p\,4^{N-n}}+ \frac{1}{4}L_{k+1}^{-2p\,4^{N-n}}\leq L_{k+1}^{-2p\,4^{N-n}}.\qedhere
\end{align*}
\end{proof}

\section{Conclusion: the multi-particle multi-scale analysis}

\begin{theorem}\label{thm:DS.k.N}
Let $1\leq n\leq N$ and  $\BH^{(n)}(\omega)=-\BDelta+\sum_{j=1}^n V(x_j,\omega)+\BU$, where $\BU$, $V$ satisfy
 $\condI$ and $\condP$  respectively. There exists $m_n>0$ such that for any $p>6Nd$ property $\dsknn$ holds true for all $k\geq 0$ provided $L_0$ is large enough.
\end{theorem}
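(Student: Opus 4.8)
The plan is to establish $\dsknn$ for all $n=1,\dots,N$ and all $k\ge 0$ by a double induction: an outer induction on the number of particles $n$, and an inner induction on the scale index $k$. The mass at level $n$ will be $m_n:=m$ as given by \eqref{eq:m} (with the same $\gamma$ throughout), and we fix $L_0$ larger than the maximum of all the threshold scales $L_1^*,L_2^*,L_3^*,\tilde L_2^*(J,N,d)$ produced by Theorems \ref{thm:partially.interactive}, \ref{thm:fully.interactive}, \ref{thm:mixed} and Lemma \ref{lem:CNR.NS}, together with the scales needed for the initial-length-scale bound Theorem \ref{thm:np.initial.MSA}.

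\textbf{Base cases.} For the outer induction, the single-particle case $n=1$ is settled directly: there is no interaction, so every cube is FI in the trivial sense, and one runs the classical single-particle multi-scale analysis of von Dreifus--Klein/Stollmann starting from the initial bound Theorem \ref{thm:np.initial.MSA} with $n=1$ (which is just Theorem \ref{thm:1p.loc} repackaged), using the Wegner bound Theorem \ref{thm:Wegner}(B), the GRI (Theorem \ref{thm:GRI.GF}) and the deterministic scaling Lemma \ref{lem:CNR.NS} specialized to $n=1$. This yields $\mathbf{(DS.}1,k,N\mathbf{)}$ for all $k\ge 0$. For the inner induction at a fixed $n\ge 2$, the base case $k=0$ is exactly Theorem \ref{thm:np.initial.MSA}: the probability that $\BC_{L_0}^{(n)}(\Bu)$ is $(E,m)$-S for some $E\in I_0=(-\infty,E^*]$ is at most $L_0^{-2p4^{N-n}}$, which a fortiori bounds the probability for a pair of separable cubes.

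\textbf{Inductive step.} Fix $n\ge 2$, assume $\dsn{k',n'}$ holds for all $n'<n$ and all $k'\ge 0$ (outer hypothesis), and assume $\dsn{k,n}$ holds (inner hypothesis). We must derive $\dskonn$ for an arbitrary separable pair $\BC_{L_{k+1}}^{(n)}(\Bx),\BC_{L_{k+1}}^{(n)}(\By)$. Partition into three cases according to Definition \ref{def:diagonal.cubes}: (I) both cubes PI, (II) both cubes FI, (III) one PI and one FI. Case (I) is Theorem \ref{thm:partially.interactive}, whose hypotheses $\dsn{k,n'}$ for $n'<n$ follow from the outer hypothesis; it gives the bound $L_{k+1}^{-2p4^{N-n}}$ via the tunnelling estimate Lemma \ref{lem:T.estimate}, Lemma \ref{lem:NDRoNS}, and the Wegner bound. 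Case (II) is Theorem \ref{thm:fully.interactive}, whose hypotheses are precisely the outer hypothesis at level $n-1$ (i.e.\ $\dsn{k-1,n'}$ for $n'<n$, which holds since $k-1<k$) together with the inner hypothesis $\dsn{k,n}$ for FI pairs; it uses the deterministic scaling Lemma \ref{lem:CNR.NS}, the counting Lemmas \ref{lem:MPI}, \ref{lem:MND}, \ref{lem:MD}, and Wegner. Case (III) is Theorem \ref{thm:mixed}, whose three hypotheses are again supplied by the outer and inner induction hypotheses. Taking the maximum of the three threshold scales $L_1^*,L_2^*,L_3^*$ and recalling $p>6Nd=4\alpha Nd$ ensures all three conclusions hold simultaneously, so $\dskonn$ follows, closing the inner induction; once $\dsn{k,n}$ holds for all $k$, the outer induction advances to $n+1$.

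\textbf{Main obstacle.} The delicate point is the bookkeeping of \emph{which} induction hypotheses each of Theorems \ref{thm:partially.interactive}, \ref{thm:fully.interactive}, \ref{thm:mixed} actually consumes — in particular that the FI-pair hypothesis in Theorem \ref{thm:fully.interactive} and Theorem \ref{thm:mixed} is the \emph{inner} hypothesis $\dsn{k,n}$ restricted to FI pairs, which is available because the inner step first proves case (I) and case (II), and case (II) only needs $\dsn{k,n}$ for FI pairs (itself part of $\dsn{k,n}$) plus lower-particle data; thus no circularity arises. One must also verify that the single fixed choice $m_n=m$ and the single fixed $L_0$ work uniformly across all $n$ and $k$: this is guaranteed because every threshold is a function of $N,d$ (and $J=\kappa(N)+5$, $p$) only, not of $k$, and because the mass loss at each scale is absorbed by the factor $(1+L_k^{-1/8})^{N-n+1}$ in \eqref{eq:gamma}, whose infinite product over $k$ converges. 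Finally, $I_0=(-\infty,E^*]$ being unbounded below is harmless precisely because all perturbed energies $E-\mu_j$, $E-\lambda_i$ that appear in the PI analysis stay in $I_0$ by non-negativity of $\BU$ and $\BV$, as already exploited in Lemma \ref{lem:T.estimate}.
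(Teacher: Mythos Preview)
Your proposal is correct and follows essentially the same double induction (outer on the particle number $n$, inner on the scale $k$) as the paper, invoking Theorem~\ref{thm:np.initial.MSA} for the base case $k=0$ and Theorems~\ref{thm:partially.interactive}, \ref{thm:fully.interactive}, \ref{thm:mixed} for the three cube-type cases in the scale step. Your write-up is in fact more explicit than the paper's about the bookkeeping of hypotheses and the uniformity of the threshold scale $L_0$, but the underlying argument is the same.
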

\begin{proof}
We prove that for each $n=1,\ldots,N$, property $\dsknn$ is valid. To do so, we use an induction on the number of particles $n'=1,\ldots,n$. For $n=1$  property $\dsn{k,1}$ holds true for all $k\geq 0$ by  the single-particle localization theory \cite{St01}. Now suppose that for all $1\leq n'<n$, $\dsn{k,n'}$ holds true for all	 $k\geq 0$, we aim to prove that $\dsknn$ holds true for all $k\geq 0$. For $k=0$, $\dsn{0,n}$ is valid using Theorem \ref{thm:np.initial.MSA}. Next, suppose that $\dsn{k',n}$ holds true for all $k'<k$, then by  combining this last assumption with $\dsn{k,n'}$ above, one can conclude that
\begin{enumerate}
\item[\rm(i)] $\dsknn$ holds true for all $k\geq 0$ and for all pairs of PI cubes using Theorem \ref{thm:partially.interactive},
\item[\rm(ii)] $\dsknn$ holds true for all $k\geq 0$ and for all pairs of FI cubes using Theorem \ref{thm:fully.interactive},
\item[\rm(iii)] $\dsknn$ holds true for all $k\geq 0$ and for all pairs of MI cubes using Theorem \ref{thm:mixed}.
\end{enumerate}
 Hence Theorem \ref{thm:DS.k.N} is proven.
\end{proof}

\section{Proofs of the results}

\subsection{Proof of Theorem \ref{thm:bottom.spectrum}}
Let $1\leq n\leq N$. We aim to prove $\sigma(\BH^{(n)}(\omega))=[0,+\infty)$ almost surely. Assumption $\condI$ implies that $\BU$ is non-negative and assumption $\condP$ also implies that $\BV$ is non-negative. Since, $-\BDelta\geq 0$, we get that almost surely $\sigma(\BH^{(n)}(\omega))\subset [0;+\infty)$. It remains to see that $[0;+\infty)\subset\sigma(\BH^{(n)}(\omega))$ almost surely. 

Let $k,m \in\DN$. Define,
\[
B_{k,m}:=\{\Bx\in\DZ^{nd}: \min_{i\neq j}|x_i-x_j|>r_0+2km\}
\]
where $r_0>0$, is the range of the interaction $\BU$. We also define the following sequence in $\DZ^{nd}$,
\[
\Bx^{k,m}:=C_{k,m}(1,\ldots,nd),
\]
where $C_{k,m}=r_0+2km+1$. Using the identification $\DZ^{nd}\cong (\DZ^d)^n$, we can also write $\Bx^{k,m}=C_{k,m}(x_1^{k,m},\ldots,x_n^{k,m})$ with each $x_i^{k,m}\in\DZ^d$, $i=1,\ldots,n$. Obviously, each term $\Bx^{k,m}$ of the sequence $(\Bx^{k,m})_{k,m}$ belongs to $B_{k,m}$. For $j=1,\ldots,n$, set,

\[
H^{(1)}_j(\omega):=-\Delta + V(x_j;\omega).
\]
We have that almost surely $\sigma(H^{(1)}_j(\omega))=[0;+\infty)$, see for example \cite{St01}. So, if we set for $j=1,\ldots, n$,
\[
\Omega_j:=\{\omega\in\Omega: \sigma(H^{(1)}_j(\omega))=[0,+\infty)\},
\]
$\prob{\Omega_j}=1$ for all $j=1,\ldots,n$. Now, put 
\[
\Omega_0:= \bigcap_{j=1}^n \Omega_j.
\]
We also have that $\prob{\Omega_0)}=1$. Let $\lambda\in[0;+\infty)$ and $\omega\in\Omega_0$, for this $\omega$, we have that almost surely, $\lambda\in\sigma(H^{(1)}(\omega))$ for all $j=1,\ldots,n$ and by the Weyl criterion, there exist $n$ Weyl sequences $\{(\phi^m_j)_m: j=1,\ldots,n\}$ related to $0$ and each operator $H^{(1)}_j(\omega)$. By the density property of compactly supported functions $C^{\infty}_c(\DR^d)$ in $L^2(\DR^d)$, we can directly assume that each $\phi_j^m$ is of compact support, i.e., $\supp \phi_j^m\subset C^{(1)}_{k_j m}(0)$ for some integer $k_j$ large enough. Set 
\[
k_0:=\max_{j=1,\ldots,n} k_j,
\]
and put, $\Bx^{k_0,m}=(x_1^{k_0,m},\ldots, x_n^{k_0,m})\in B_{k_0,m}$. We translate each function $\phi_j^m$ to have support contained in the $C^{(1)}_{k_0m}(x_j^{k_0})$. Next, consider the sequence $(\phi^m)_m$ defined by the tensor product,
\[
\phi^m:=\phi^m_1\otimes\cdots\otimes \phi^m_n. 
\]
We have that $\supp \phi^m\subset \BC^{(n)}_{k_0m}(\Bx^{k_0,m})$ and we aim to show that, $(\phi^m)_m$ is a Weyl sequence for $\BH^{(n)}(\omega)$ and $\lambda$. For any $\By\in\DR^{nd}$:

\[
|(\BH^{(n)}(\omega)\phi^m)(\By)|= |(\BH^{(n)}_0(\omega)\phi^m)|.
\]
Indeed, for the values of $\By$ inside the cube $\BC^{(n)}_{k_0m}(\Bx^{k_0,m})$ the interaction potential $\BU$ vanishes and for those values outside that cube, $\phi^m$ equals zero too. Therefore, 
\begin{align*}
\|\BH^{(n)}(\omega)\phi^m\|&\leq \| \BH^{(n)}_0(\omega)\phi^m\|\\
&\leq \sum_{j=1}^n \| (H^{(1)}_j(\omega)-\lambda)\phi_j^m\| \tto{m\to +\infty} 0,
\end{align*}
because, for all $j=1,\ldots,n$, $\|(H^{(1)}_j(\omega)-\lambda)\phi^m_j\|\rightarrow 0$ as $m\rightarrow +\infty$, since $\phi^m_j$is a Weyl sequence for $H^{(1)}_j(\omega)$ and $\lambda$. This completes the proof.

\subsection{Proof of Theorem \ref{thm:exp.loc}}
		
Using the multi-particle multi-scale analysis bounds in the continuum property $\dsn{k,N}$, we extend to multi-particle systems the strategy of Stollmann \cite{St01}.	

For $\Bx_0\in\DZ^{Nd}$ and an integer $k\geq 0$, set, using the notations of Lemma \ref{lem:separable.distant}
\[
R(\Bx_0):=\max_{1\leq \ell\leq \kappa(N)}|\Bx_0-\Bx^{(\ell)}|;\qquad  b_{k}(\Bx_0):=7N+R(\Bx_0)L_k^{-1},
\]
\[
M_k(\Bx_0):=\bigcup_{\ell=1}^{\kappa(N)}\BC^{(N)}_{7NL_k}(\Bx^{(\ell)}
\] 
and define
\[
A_{k+1}(\Bx_0):=\BC^{(N)}_{bb_{k+1}L_{k+1}}(\Bx_0)\setminus \BC^{(N)}_{b_kL_k}(\Bx_0),
\]
where the parameter $b>0$ is to be chosen later. We can easily check that,
\[
M_k(\Bx_0)\subset \BC^{(N)}_{b_kL_k}(\Bx_0).
\]
Moreover, if $\Bx\in A_{k+1}(\Bx_0)$, then the cubes $\BC^{(N)}_{L_k}(\Bx)$ and $\BC^{(N)}_{L_k}(\Bx_0)$ are separable by Lemma \ref{lem:separable.distant}. Now, also define
\[
\Omega_k(\Bx_0):=\{\text{$\exists E\in I_0$ and $\Bx\in A_{k+1}(\Bx_0)\cap \Gamma_k$: $\BC^{(N)}_{L_k}(\Bx)$ and $\BC^{(N)}_{L_k}(\Bx_0)$ are $(E,m)$-S}\},
\]
with $\Gamma_k:= \Bx_0+\frac{L_k}{3}\DZ^{Nd}$. Now, property $\dsn{k,N}$ combined with the cardinality of $A_{k+1}(\Bx_0)\cap \Gamma_k$ imply 
\begin{align*}
\prob{\Omega_k(\Bx_0)}&\leq (2bb_{k+1}L_{k+1})^{Nd}L_k^{-2p}\\
&\leq (2bb_{k+1})^{Nd}L^{-2p+\alpha Nd}. 
\end{align*}
Since, $p>(\alpha Nd+1)/2$ (in fact, $p>6Nd$), we get 
\[
\sum_{k=0}^{\infty}\prob{\Omega_k(\Bx_0)}<\infty.
\]
Thus, setting 
\[
\Omega_{<\infty}:=\{\text{ $\forall \Bx_0\in\DZ^{Nd}$, $\Omega_k(\Bx_0)$ occurs finitely many times} \},
\]
by the Borel cantelli Lemma and the countability of $\DZ^{Nd}$ we have that $\prob{\Omega_{<\infty}}=1$. Therefore it suffices to pick $\omega\in \Omega_{<\infty}$ and prove the exponential decay of any nonzero eigenfunction $\BPsi$ of $\BH^{(N)}(\omega)$.

Let $\BPsi$ be a polynomially bounded  eigenfunction satisfying (EDI) (see Theorem \ref{thm:GRI.EF}). Let $\Bx_0\in\DZ^{Nd}$ with $\|\Bone_{\BC^{(N)}_1(\Bx_0)}\BPsi\|>0$ (if there is no such $\Bx_0$, we are done). The cube $\BC^{(N)}_{L_k}(\Bx_0)$ cannot be $(E,m)$-NS for infinitely many $k$. Indeed, given an integer $k\geq0$, if $\BC^{(N)}_{L_k}(\Bx_0)$ is $(E,m)$-NS  then by (EDI), and the polynomial bound on $\BPsi$, we get
\begin{align*}
\|\Bone_{\BC^{(N)}_1(\Bx_0)}\BPsi\|&\leq C\cdot\|\Bone_{\BC^{(N,out)}_{L_k}(\Bx_0)}\BG^{(N)}_{\BC^{(N)}_{L_k}(\Bx_0}(E)\Bone_{\BC^{(N,int)}_{L_k}(\Bx_0}\|\cdot\|\Bone_{\BC^{(N,out)}_{L_k}(\Bx_0)}\BPsi\|\\
&\leq C(1+|\Bx_0|+L_k)^t\cdot \ee^{-mL_k}\tto{L_k\rightarrow \infty} 0,
\end{align*}
in contradiction with the choice of $\Bx_0$. So there is an integer $k_1=k_1((\omega,E,\Bx_0)<\infty$ such that $\forall k\geq k_1$ the cube $\BC^{(N)}_{L_k}(\Bx_0)$ is $(E,m)$-S.  At the same time, since $\omega\in \Omega_{<\infty}$, there exists $k_2=k_2(\omega,\Bx_0)$ such that if $k\geq k_2$, $\Omega_k(\Bx_0)$ does not occurs. We conclude that for all $k\geq \max\{k_1,k_2\}$, for all $\Bx\in A_{k+1}(\Bx_0)\cap\Gamma_k$, $\BC^{(N)}_{L_k}(\Bx)$ is $(E,m)$-NS. 

Let $\rho\in(0,1)$ and choose $b>0$ such that
 \[
b>\frac{1+\rho}{1-\rho},
\]
so that
\[
\BC^{(N)}_{\frac{bb_{k+1}L_{k+1}}{1-\rho}}(\Bx_0)\setminus\BC^{(N)}_{\frac{b_kL_k}{1-\rho}}(\Bx_0)\subset A_{k+1}(\Bx_0),
\]
for $\Bx\in\tilde{A}_{k+1}(\Bx_0)$.
\begin{enumerate}
\item[(1)]
Since, $|\Bx-\Bx_0|>\frac{b_kL_k}{1-\rho}$,
\begin{align*}
\dist(\Bx,\partial\BC^{(N)}_{b_kL_k}(\Bx_0)&\geq |\Bx-\Bx_0|-b_kL_k\\
&\geq |\Bx-\Bx_0|-(1-\rho)|\Bx-\Bx_0|\\
&=\rho(|\Bx-\Bx_0|)
\end{align*}                              

\item[(2)]
Since $|\Bx-\Bx_0|\leq \frac{bb_{k+1}L_{k+1}}{1+\rho}$,
\begin{align*}
\dist(\Bx,\partial\BC^{(N)}_{bb_{k+1}L_{k+1}}(\Bx_0))&\geq bb_{k+1}L_{k+1}-|\Bx-\Bx_0|\\
&\geq (1+\rho)|\Bx-\Bx_0|-|\Bx-\Bx_0|\\
&=\rho|\Bx-\Bx_0|.
\end{align*}
\end{enumerate}
Thus, 
\[
\dist(\Bx,\partial A_{k+1}(\Bx_0))\geq \rho|\Bx-\Bx_0|.
\]
Now, setting $k_3=\max\{k_1,k_2\}$, the assumption linking $b$ and $\rho$ implies that:
\[
\bigcup_{k\geq k_3} \tilde{A}_{k+1}(\Bx_0)=\DR^{Nd}\setminus\BC^{(N)}_{\frac{b_{k_3}L_{k_3}}{1-\rho}}(\Bx_0),
\]
because $\frac{bb_{k+1}L_{k+1}}{1+\rho}>\frac{b_kL_k}{1-\rho}$. Let $k\geq k_3$, recall that this implis that all the cubes with centers in $A_{k+1}(\Bx_0)\cap\Gamma_k$ and side length $2L_k$ are $(E,m)$-NS. Thus, for any $\Bx\in\tilde{A}_{k+1}(\Bx_0)$, we choose $\Bx_1\in A_{k+1}(\Bx_0)$ such that $\Bx\in \BC^{(N,int)}_{L_k}(\Bx_1)$. Therefore
\begin{align*}
\|\Bone_{\BC^{(N)}_1(\Bx)}\BPsi\|&\leq \|\Bone_{\BC^{(N,int)}_{L_k}(\Bx_1)}\BPsi\|\\
&\leq C\cdot\ee^{-mL_k}\cdot\|\Bone_{\BC^{(N,out)}_{L_k}(\Bx_1)}\BPsi\|
\end{align*}

Up to a set of Lebesgue measure zero, we can cover $\BC^{(N,out)}_{L_k}(\Bx_1)$ by at most $3^{Nd}$ cubes 
\[
\BC^{(N,int)}_{L_k}(\tilde{\Bx}),\qquad \tilde{\Bx}\in\Gamma_k,\quad |\tilde{\Bx}-\Bx_1|=\frac{L_k}{3}.
\]
By choosing $\Bx_2$ which gives a maximal norm, we get
\[
\|\Bone_{\BC^{(N,out)}_{L_k}(\Bx_1)}\BPsi\|\leq 3^{Nd}\cdot\|\Bone_{\BC^{(N,int)}_{L_k}(\Bx_2)}\BPsi\|,
\]
so that 
\[
\|\Bone_{\BC^{(N)}_1(\Bx)}\BPsi\|\leq  3^{Nd}\cdot\ee^{-m L_k}\cdot\|\Bone_{\BC^{(N,int)}_{L_k}(\Bx_2)}\BPsi\|.
\]
Thus, by an induction procedure, we find a sequence $\Bx_1$, $\Bx_2$, ..., $\Bx_n$ in $\Gamma_k\cap A_{k+1}(\Bx_0)$  and the bound
\[
\|\Bone_{\BC^{(N)}_1(\Bx)}\BPsi\|\leq (C\cdot 3^{Nd}\exp(-mL_k))^n\cdot\|\Bone{\BC^{(N,out)}_{L_k}(\Bx_n)}\BPsi\|.
\]
Since $|\Bx_i-\Bx_{i+1}|=L_k/3$ and  $\dist(\Bx,\partial A_{k+1})\geq \rho\cdot|\Bx-\Bx_0|$, we can iterate at least $\rho\cdot|\Bx-\Bx_0|\cdot3/L_k$ times until, we reach the boundary of $A_{k+1}(\Bx_0)$. Next, using the polynomial bound on $\BPsi$, we obtain:

\begin{gather*}
\|\Bone_{\BC^{(N)}_1(\Bx)}\BPsi\|\leq (C\cdot 3^{Nd})^{\frac{3\rho|\Bx-\Bx_0|}{L_k}}\cdot\exp(-3m\rho|\Bx-\Bx_0|)\\
\times C(1+|\Bx_0|+ bL_{k+1})^t\cdot L_{k+1}^{Nd}.
\end{gather*}

We can conclude that given $\rho'$ with $0<\rho'<1$, we can find $k_4\geq k_3$ such that if $k\geq k_4$, then
\[
\|\Bone_{\BC^{(N)}_1(\Bx)}\BPsi\|\leq \ee^{-\rho\rho'm|\Bx-\Bx_0|},
\]
if $|\Bx-\Bx_0|>\frac{b_{k_4}L_{k_4}}{1-\rho}$. This completes the proof of the exponential localization in the max-norm.

\subsection{Proof of Theorem \ref{thm:dynamical.loc}}
		
For the proof of the multi-particle dynamical localization given  the multi-particle multi-scale analysis in the continuum, we refer to the paper by Boutet de Monvel et al. \cite{BCS11}.

\section{Appendix}\label{sec:appendix}
\begin{center}

\end{center}

\subsection{Proof of Lemma \ref{lem:separable.distant}}

(A) Let $L>0$, $\emptyset \neq \CJ\subset \{1,\ldots,n\}$ and $\By\in\DZ^{nd}$. $\{y_j\}_{j\in\CJ}$ is called an $L$-cluster if the union
\[
\bigcup_{j\in\CJ} C^{(1)}_L(y_j)
\]
cannot be decomposed into two non-empty disjoint subsets. Next, given two configurations $\Bx,\By\in\DZ^{nd}$, we proceed as follows:
\begin{enumerate}
\item
We decompose the vector $\By$ into maximal $L$-clusters $\Gamma_1,\ldots,\Gamma_M$ (each of diameter $\leq 2nL$) with $M\leq n$.
\item
Each position $y_i$ corresponds to exactly one cluster $\Gamma_j$, $j=j(i)\in\{1,\ldots,M\}$.
\item
If there exists $j\in\{1,\ldots,M\}$ such that $\Gamma_j\cap\varPi\BC^{(n)}_L(\Bx)=\emptyset$, then cubes $\BC^{(n)}_L(\By)$ and $\BC^{(n)}_L(\Bx)$ are separable.
\item
If (3)  is wrong, then for all $k=1,\ldots,M$, $\Gamma_k\cap \varPi\BC^{(n)}_L(\Bx)\neq \emptyset$. Thus for all $k=1,\ldots,M$, $\exists i=1,\ldots,n$ such that $\Gamma_k\cap \BC^{(1)}_L(x_i)\neq \emptyset$. Now for any $j=1,\ldots,n$ there exists $k=1,\ldots,M$ such that $y_j\in\Gamma_k$. Therefore for such $k$, by hypothesis there exists $i=1,\ldots,n$ such that $\Gamma_k\cap C^{(1)}_L(x_i)\neq \emptyset$. Next let $z\in\Gamma_k\cap C^{(1)}_L(x_i)$ so that $|z-x_i|\leq L$. We have that
\begin{align*}
|y_j-x_i|&\leq |y_j-z|+|z-x_i|\\
&\leq 2nL-L + L=2nL
\end{align*}
\end{enumerate}
since $y_j,z\in\Gamma_k$. Notice that above we have the bound $|y_j-z|\leq 2nL-L$ instead of $2nL$ because $y_j$ is a center of the $L$-cluster $\Gamma_k$. Hence for all $j=1,\ldots,n$ $y_j$ must belong to one of the cubes $C^{(1)}_{2nL}(x_i)$ for the $n$ positions $(y_1,\ldots,y_n)$. Set $\kappa(n)=n^n$. For any choice of at most $\kappa(n)$ possibilities, $\By=(y_1,\ldots,y_n)$ must belong to the Cartesian product of $n$ cubes of size $2nL$ i.e. to an $nd$-dimensional  cube of size  $2nL$, the assertion then follows.

(B) Set $R(\By)=\max_{1\leq i,j\leq n}|y_i-y_j|+5NL$ and consider a cube $\BC^{(n)}_L(\Bx)$ with $|\By-\Bx|> R(\By)$. Then there exists $i_0\in\{1,\ldots,n\}$ such that $|y_{i_0}-x_{i_0}|>R(\By)$. Consider the maximal connected component $\Lambda_{\Bx}:=\bigcup_{i\in\CJ} C^{(1)}_{L}(x_i)$ of the union $\bigcup_i C^{(1)}_L(x_i)$ containing $x_{i_0}$. Its diameter is bounded by $2nL$. We have
\[
\dist(\Lambda_{\Bx},\varPi\BC^{(n)}_L(\By))=\min_{u,v}|u-v|,
\]
now since
\[
|x_{i_0}-y_{i_0}|\leq |x_{i_0}-u|+|u-v|+|v-y_{i_0}|,
\]
then

\begin{align*}
\dist(\Lambda_{\Bx},\varPi\BC^{(n)}_L(\By))&=\min_{u,v}|u-v|\\
&\geq |x_{i_0}-y_{i_0}|-\diam(\Lambda_{\Bx})-\max_{v,y_{i_0}}|v-y_{i_0}|.\\
\end{align*}
Recall that $\diam(\Lambda_{\Bx})\leq 2nL$ and
\[
\max_{v,y_{i_0}}|v-y_{i_0}|\leq \max_{v}|v-y_j|+\max_{y_{i_0}}|y_j-y_{i_0}|,
\]
for some $j=1,\cdots n $ such that $v\in C^{(1)}_L(y_j)$. Finally we get
\[
\dist(\Lambda_{\Bx},\varPi\BC^{(n)}_L(\By))>R(\By) -\diam(\Lambda_{\Bx})-(2L + \diam(\varPi\By))>0,
\]
this implies that $\BC^{(n)}_L(\Bx)$ is $\CJ$-separable from $\BC^{(n)}_L(\By)$ with $\CJ$ the index subset appearing in the definition of $\Lambda_{\Bx}$.

\subsection{Proof of Lemma \ref{lem:PI.cubes}}
It is convenient to use the canonical injection $\DZ^d\hookrightarrow\DR^d$; then the notion of connectedness in $\DR^d$ induces its analog for lattice cubes. Set $R:=2L+r_0$ and assume that $\diam\varPi\Bu = \max_{i,j}|u_i - u_j|>nR$.
If the union of cubes $C_{R/2}^{(1)}(u_i)$, $1\leq i\leq n$, were not decomposable into two (or more) disjoint groups, then it would be connected, hence its diameter would be bounded by $n(2(R/2))=nR$, hence $\diam \varPi\Bu\leq nR$ which contradicts the hypothesis. Therefore, there exists an index subset $\CJ\subset\{1,\ldots,n\}$ such that $|u_{j_1}-u_{j_2}|>2(R/2)$ for all $j_1\in\CJ$, $j_2\in\CJ^c$, this implies that
\begin{align*}
\dist\left(\varPi_{\CJ}\BC_L^{(n)}(\Bu),\varPi_{\CJ^{\comp}}\BC_L^{(n)}(\Bu)\right)&=\min_{j_1\in\CJ,j_2\in\CJ^c}\dist\left(C^{(1)}_L(u_{j_1}),C^{(1)}_L(u_{j_2})\right)\\
&\geq \min_{j_1\in\CJ,j_2\in\CJ^{c}}|u_{j_1}-u_{j_2}|-2L>r_0.
\end{align*}

\subsection{Proof of Lemma \ref{lem:FI.cubes}}

If for some $R>0$,
\[
R<|\Bx-\By|=\max_{1\leq j\leq n}|x_j-y_j|,
\]
then there exists $1\leq j_0\leq n$ such that $|x_{j_0}-y_{j_0}|>R$. Since both cubes are fully interactive, by Definition \eqref{def:diagonal.cubes}
\begin{align*}
&|x_{j_0}-x_i| \le \diam \varPi \Bx \le n(2L+r_0),\\
&|y_{j_0}-y_j| \le \diam \varPi \By \le n(2L+r_0).
\end{align*}
By the triangle inequality, for any $1\leq i,j\leq n$ and $R>7\,nL >6\,nL+2\,nr_0$, we have
\begin{align*}
|x_i-y_j|&\geq |x_{j_0}-y_{j_0}|-|x_{j_0}-x_i|-|y_{j_0}-y_j|\\
&>6nL+2nr_0-2n(2L+r_0)=2nL.
\end{align*}
Therefore, for any $1\leq i,j\leq n$,
\[
\min_{i,j}\dist(C^{(1)}_L(x_i),C^{(1)}_L(y_j))\geq \min_{i,j}|x_i-y_j|-2L>2(n-1)L\geq 0,
\]
which proves the claim.

\subsection{Proof of Lemma \ref{lem:MPI}}
Assume that $M^{\sep}(\BC^{(n)}_{L_{k+1}}(\Bu),E)<2$, (i.e., there is no pair of separable  cubes of radius $L_k$ in $\BC^{(n)}_{L_{k+1}}(\Bu)$), but $M(\BC^{(n)}_{L_{k+1}}(\Bu),E)\geq \kappa(n)+2$. Then $\BC^{(n)}_{L_{k+1}}(\Bu)$ must contain at least $\kappa(n)+2$ cubes $\BC^{(n)}_{L_k}(\Bv_i)$, $0\leq i\leq \kappa(n)+1$ which are non separable but satisfy $|\Bv_i-\Bv_{i'}|>7NL_k$, for all $i\neq i'$. On the other hand, by Lemma \ref{lem:separable.distant} there are at most $\kappa(n)$ cubes $\BC^{(n)}_{2nL_k}(\By_i)$, such that any cube $\BC^{(n)}_{L_k}(\Bx)$ with $\Bx\notin \bigcup_{j} \BC^{(n)}_{2nL_k}(\By_j)$ is separable from $\BC^{(n)}_{L_k}(\Bv_0)$. Hence $\Bv_i\in \bigcup_{j} \BC^{(n)}_{2nL_k}(\By_j)$ for all $i=1,\ldots,\kappa(n)+1$.  But since for all $i\neq i'$, $|\Bv_i-\Bv_{i'}|>7NL_k$, there must be at most one center $\Bv_i$ per cube $\BC^{(n)}_{2nL_k}(\By_j)$, $1\leq j\leq \kappa(n)$. Hence we come to a contradiction:
\[
\kappa(n)+1\leq \kappa(n).
\]
 The same analysis holds true if we consider only PI cubes.

\subsection{Proof of Lemma \ref{lem:MND}}
Suppose that $M_{\pai}(\BC^{(n)}_{L_{k+1}}(\Bu),I)\geq \kappa(n)+2$, then by Lemma \ref{lem:MPI}, $M_{\pai}^{\sep}(\BC^{(n)}_{L_{k+1}}(\Bu), I)\geq 2$, i.e., there are at least two separable $(E,m)$-singular PI cubes $\BC^{(n)}_{L_k}(\Bu^{(j_1)})$, $\BC^{(n)}_{L_k}(\Bu^{(j_2)})$ inside $\BC^{(n)}_{L_{k+1}}(\Bu)$.
The  number of possible pairs of centers $\{\Bu^{(j_1)},\Bu^{(j_2)}\}$ such that
\[
\BC_{L_k}^{(n)}(\Bu^{(j_1)}),\,\BC_{L_k}^{(n)}(\Bu^{(j_2)})\subset\BC_{L_{k+1}}^{(n)}(\Bu)
\]
 is bounded by $\frac{3^{2nd}}{2}L_{k+1}^{2nd}$. Then, setting 
\[
\rB_k=\{\text{$\exists E\in I$, $\BC^{(n)}_{L_k}(\Bu^{(j_1)})$, $\BC^{(n)}_{L_k}(\Bu^{(j_2)})$ are $(E,m)$-S}\},
\]
\[
\DP\left\{M_{\pai}^{\sep}(\BC_{L_{k+1}}^{(n)}(\Bu),I)\geq 2\right\}\leq\frac{3^{2nd}}{2}L_{k+1}^{2nd}\times\prob{\rB_k}
\]
with  $\prob{\rB_k}\leq L_k^{-4^Np}+L_k^{-4p\,4^{N-n}}$ by \eqref{eq:bound.PI}.
Here $\rB_k$ is defined as in Theorem \ref{thm:partially.interactive}.

\subsection{Proof of Lemma \ref{lem:MD}}

\begin{proof}
Suppose there exist $2\ell$ pairwise separable, fully interactive cubes $\BC_{L_k}^{(n)}(\Bu^{(j)})$ $\subset\BC_{L_{k+1}}^{(n)}(\Bu)$, $1\leq j\leq 2\ell$. Then, by Lemma \ref{lem:FI.cubes}, for any pair $\BC_{L_k}^{(n)}(\Bu^{(2i-1)})$, $\BC_{L_k}^{(n)}(\Bu^{(2i)})$, the corresponding random Hamiltonians $\BH_{\BC_{L_k}^{(n)}(\Bu^{(2i-1)})}^{(n)}$ and $\BH^{(n)}_{\BC_{L_k}^{(n)}(\Bu^{(2i)})}$ are independent, and so are their spectra and their Green functions. For $i=1,\dots,\ell$ we consider the events:
\[
\rA_i=\left\{\exists\,E\in I:\BC_{L_k}^{(n)}(\Bu^{(2i-1)})\text{ and $\BC_{L_k}^{(n)}(\Bu^{(2i)})$ are $(E,m)$-S}\right\}.
\]
Then by assumption $\dsknn$, we have, for $i=1,\dots,\ell$,
\begin{equation}
\DP\left\{\rA_i\right\}\leq L_k^{-2p\,4^{N-n}},
\end{equation}
and, by independence of events $\rA_1,\dots,\rA_{\ell}$,
\begin{equation}
\DP\Bigl\{\bigcap_{1\leq i\leq\ell}\rA_i\Bigr\}=\prod_{i=1}^{\ell}\DP(\rA_i)\leq\bigl(L_k^{-2p\,4^{N-n}}\bigr)^{\ell}.
\end{equation}
To complete the proof, note that the total number of different families of $2\ell$ cubes $\BC_{L_k}^{(n)}(\Bu^{(j)})\subset\BC_{L_{k+1}}^{(n)}(\Bu)$, $1\leq j\leq 2\ell$, is bounded by
\[
\frac{1}{(2\ell)!}\left|\BC_{L_{k+1}}^{(n)}(\Bu)\right|^{2\ell}\leq C(n,N,\ell,d)L_{k}^{2\ell dn\alpha}.\qedhere
\]
\end{proof}

\begin{bibdiv}

\begin{biblist}

\bib{AW09}{article}{
   author={Aizenman, M.},
   author={Warzel, S.},
   title={Localization bounds for multiparticle systems},   
	journal={Commun. Math. Phys.},
   date={2009},
   pages={903--934},
}
\bib{AW10}{article}{
   author={Aizenman, Michael},
   author={Warzel, Simone},
   title={Complete dynamical localization in disordered quantum multi-particle
   systems},
   conference={
      title={XVIth International Congress on Mathematical Physics},
   },
   book={
      publisher={World Sci. Publ., Hackensack, NJ},
   },
   date={2010},
   pages={556--565},
}
\bib{BCSS10a}{misc}{
   author={Boutet de Monvel, A.},
   author={Chulaevsky, V.},
   author={Stollmann, P.},
   author={Suhov, Y.},
   title={Anderson localization for a multi-particle model with an alloy-type external random potential},
   status={arXiv:math-ph/1004.1300v1},
   date={2010},
}
\bib{BCSS10b}{article}{
   author={ Boutet de Monvel, A.},
   author={Chulaevsky, V.},
	 author={Stollamnn, P.},
   author={Suhov, Y.},
   title={Wegner type bounds for a multi-particle continuous Anderson model with an alloy-type external random potential},
   journal={J. Stat. Phys.},
   volume={138},
   date={2010},
   pages={553--566},
}
\bib{BCS11}{article}{
   author={ Boutet de Monvel, A.},
   author={Chulaevsky, V.},
   author={Suhov, Y.},
   title={Dynamical localization for multi-particle model with an alloy-type external random potential},
   journal={Nonlinearity},
   volume={24},
   date={2011},
   pages={1451--1472},
}
\bib{C14}{misc}{
   author={Chulaevsky, V.},
	title={Exponential decay of eigenfunctions in a continuous multi-particle Anderson model with sub-exponentially decaying interaction},
	 status={arxi:math-ph/1408.4646v1},
	   date={2014},
}
\bib{CS08}{article}{
   author={ Chulaevsky, V.},
   author={Suhov, Y.},
   title={Wegner bounds for a two particle tight-binding model},
   journal={Commun. Math. Phys.},
   volume={283},
   date={2008},
   pages={479--489},
}
\bib{CS09a}{article}{
   author={Chulaevsky, V.},
   author={Suhov, Y.},
   title={Eigenfunctions in a two-particle Anderson tight binding model},
   journal={Comm. Math. Phys.},
   volume={289},
   date={2009},
   pages={701--723},
}
\bib{CS09b}{article}{
   author={C{h}ulaevsky, V.},
   author={Suhov, Y.},
   title={Multi-particle Anderson Localization: Induction on the number of particles},
   journal={Math. Phys. Anal. Geom.},
   volume={12},
   date={2009},
   pages={117--139},
}
\bib{DS01}{article}{
   author={Damanik, D.},
   author={Stollmann, P.},
   title={Multi-scale analysis implies strong dynamical localization},
   journal={Geom. Funct. Anal.},
   volume={11},
   date={2001},
   number={1},
   pages={11--29},
}
\bib{DK89}{article}{
   author={von Dreifus, H.},
   author={Klein, A.},
   title={A new proof of localization in the Anderson tight binding model},
   journal={Commun. Math. Phys.},
   volume={124},
   date={1989},
   pages={285--299},
}
\bib{E11}{article}{
   author={Ekanga, T.},
   title={On two-particle Anderson localization at low energies},
   journal={C. R. Acad. Sci. Paris, Ser. I},
   volume={349},
   date={2011},
   pages={167--170},
}
\bib{E13}{misc}{
   author={Ekanga, T.},
   title={Anderson localization at low energies in the multi-particle tight binding model},
   status={arXiv:math-ph/1201.2339v2},
   date={2012},
}
\bib{E16}{misc}{
   author={Ekanga, T.},
	 title={Anderson localization for weakly interacting  multi-particle models in the continuum},
	 status={arxiv:math-ph/1611.10345v1},
	date={2016},
}
\bib{FW15}{article}{
   author={Fauser, M.},
   author={Warzel, S.},
   title={Multi-particle localization for disordered systems on continuous space via the fractional moment method},
   journal={Rev. Math. Phys.},
   volume={27},
   number={4},
   date={2015},
}
\bib{GK02}{article}{
    author={Germinet, F.},
		author={Klein, A.},
		title={Operator kernel estimates for functions of generalized Schr\"odinger operators},
		journal={Proceeding of the American Mathematica Society},
		volume={131},
		date={2002},
		pages={911--920},
}
\bib{KN13}{article}{
   author={Klein, A.},
   author={T. Nguyen},
   title={The boostrap multiscale analysis for the multiparticle Anderson model},
   journal={J. Stat. Phys.},
   volume={151},
   date={2013},
   pages={938--973},
}
\bib{KN14}{article}{
   author={Klein, A.},
   author={T. Nguyen},
   title={The boostrap multiscale analysis for the multiparticle Anderson model},
   status={arXiv:math-ph/1311.4220v2},
   date={2014},
}
\bib{S14}{article}{
    author={Sabri, M.},
		title={Anderson localization for a multi-particle quantum graphs},
		journal={Rev. Math. Phys.},
		volume={26},
		date={2014},
}
\bib{St01}{book}{
   author={Stollmann, P.},
   title={Caught by disorder},
   series={Progress in Mathematical Physics},
   volume={20},
   note={Bound states in random media},
   publisher={Birkh\"auser Boston Inc.},
   place={Boston, MA},
   date={2001},
}
\end{biblist}
\end{bibdiv}
\end{document}